\documentclass[trsc,nonblindrev]{informs3} 

\OneAndAHalfSpacedXI 

\usepackage{graphicx}
\usepackage{enumitem}
\newlist{steps}{enumerate}{1}
\setlist[steps, 1]{label = Step \arabic*:}
\usepackage{makecell}
\usepackage{natbib}
 \bibpunct[, ]{(}{)}{;}{a}{,}{,}%
 \def\bibfont{\small}%
\usepackage{amsmath}
\usepackage{subfigure}
\usepackage{threeparttable}

\TheoremsNumberedThrough     

\EquationsNumberedThrough    

\usepackage{ulem}
\usepackage{multirow}
\DeclareMathOperator*{\minimize}{minimize}
\usepackage{soul}
\newcommand*\diff{\mathop{}\!\mathrm{d}}

\usepackage[colorlinks=true,
            linkcolor=blue,
            citecolor=blue,
            urlcolor=blue]{hyperref}

\begin{document}



\RUNTITLE{Heterogeneous network designs}

\TITLE{Optimal designs of heterogeneous grid transit networks}

\ARTICLEAUTHORS{
\AUTHOR{Wenbo Fan$^{1,*}$, Haoyang Mao$^{1}$, Li Zhen, Weihua Gu$^{*}$}
\AFF{Department of Electrical and Electronic Engineering, The Hong Kong Polytechnic University, Hong Kong SAR}

\AFF{\vspace{0.3cm} $^{1}$ Wenbo Fan and Haoyang Mao contributed equally.}
\AFF{$^{*}$ Corresponding authors: \EMAIL{wenbo.fan@polyu.edu.hk} (W. Fan); \EMAIL{weihua.gu@polyu.edu.hk} (W. Gu)}
}

\ABSTRACT{A general Continuum Approximation (CA) model is proposed for optimizing transit network designs (TND) in grid cities under spatially heterogeneous demand. While conventional studies often assume rigid geometric line configurations (e.g., unbranched orthogonal grids), our framework allows the grid bus lines to route more flexibly by making lateral movements and to form network configurations with line detouring, merging, and diverging. The resulting line and stop densities, as well as service headways, vary continuously across both directions of the city—constrained solely by vehicle flow conservation. By respecting non-uniform demand distributions, our heterogeneous networks substantially enlarge the class of heterogeneous network designs that can be represented and optimized within a tractable CA framework. 

To efficiently solve the optimization problem, we develop a sequential geometric programming framework that transforms the model into a sequence of standard geometric programming problems. Numerical experiments validate the accuracy of the proposed model and the solution method by comparing system metrics estimated by the CA models against the actual values computed from the discretized network designs. Under representative spatially heterogeneous demand scenarios, comparisons demonstrate that our model effectively reduces generalized costs by over 7\% against existing homogeneous and restricted heterogeneous TND models. Key findings indicate that: (i) the proposed framework consistently outperforms these conventional counterparts across all tested scenarios; (ii) the fully heterogeneous structure becomes particularly advantageous when patron demand exhibits strong spatial heterogeneity; and (iii) these flexible designs yield the greatest benefits in high-demand, low-wage, and large-area cities. 
}%


\KEYWORDS{Transit network design; continuum approximation; heterogeneous designs; non-uniform demand; grid network; sequential geometric programming} 

\maketitle
%

\section{Introduction}\label{intro} 
As a basic component of public transportation planning, transit network design (TND) aims to optimally determine the physical layout and service frequencies of routes to meet travel demand, laying the foundation for subsequent operational design and control \citep{cederBusNetworkDesign1986}. In doing so, considerable research effort has been devoted to developing a variety of mathematical programming models involving different levels of design detail, as well as their solution methods (see the detailed reviews by \citealp{kepaptsoglouTransitRouteNetwork2009,ibarra-rojasPlanningOperationControl2015}). Nonetheless, for most of these models, the complexity rapidly grows with voluminous inputs, e.g., discrete origin-destination matrices and street network details. These mathematical problems, often formulated as nonlinear integer programming models, are NP-hard and rely on heuristic solution algorithms that incur high computational costs and offer no guarantee of global optimality \citep{asadibagloeeTransitnetworkDesignMethodology2011}.

Continuum Approximation (CA) methods have also been widely used for optimizing transit network designs (e.g. \citealp{daganzoStructureCompetitiveTransit2010,chenOptimalTransitService2015,fanOptimalDesignIntersecting2018a}), in which models’ inputs and outputs were expressed as continuous quantities, rather than the discrete OD table and links and nodes in graphs. The TND problem formulated in the CA fashion contains only a few design variables and functions, e.g., line and stop densities as functions of spatial coordinates and service headways as functions of time and space, and can often be solved to global or near-global optimality via analytical or numerical techniques. Most works in this realm, however, have focused on corridor designs (e.g., \citealp{meiPlanningSkipstopTransit2021}) or simple network designs with uniform or axisymmetric demand and symmetric network layouts. Therefore, they cannot apply to the TND problem for real cities with spatially heterogeneous demand patterns \citep{daganzoStructureCompetitiveTransit2010,badiaCompetitiveTransitNetwork2014,sivakumaranAccessChoiceTransit2014a,chenOptimalTransitService2015}. 

Among the few studies that addressed non-uniform demand with heterogeneous network layouts, the pioneering work is by \cite{ouyangContinuumApproximationApproach2014}, who proposed a grid network design where local lines are added to increase the service frequency and line density in high-demand areas. These local lines are connected with the trunk lines that travel through the city. Nonetheless, this design only has limited flexibility, since it features equally spaced trunk lines with a constant headway. For cities with ring-radial street networks, \cite{vaughanOptimumPolarNetworks1986} proposed a design where the radial lines can diverge and merge anywhere along the route. The design also allows for non-uniform spacings between the ring lines and varying headways. However, Vaughan’s model is flawed as it does not consider the vehicle flow conservation constraint along the radial lines.

In light of the above, this paper aims to develop a more general CA model that accepts spatially nonuniform demand as input and that describes heterogeneous network designs for cities where the streets form a dense grid network. In our design, all lines can be routed laterally, allowing merging and diverging to occur anywhere within the city. Unlike the design in \cite{ouyangContinuumApproximationApproach2014}, such merging and diverging operations are not subject to the power-of-two constraint; therefore, line densities can be flexibly specified at any location. This flexibility is made tractable by the cumulative-flow representation, which provides a measurable and optimizable characterization of arbitrary local merging and diverging decisions. Overall, this work contributes to addressing more flexible TND problems arising in real cities through parsimonious and efficient CA models.

This modeling flexibility comes at the price of additional formulation complexity. In particular, allowing lines to merge, diverge, and make lateral detours anywhere in the city requires us to explicitly characterize the induced vehicle detour distance and the associated spatial redistribution of line flows. These effects lead to nontrivial expressions with absolute-value operators in the CA formulation; see Section \ref{sec_veh_detour} for the detailed modeling. To solve the resulting nonconvex optimization problem, we develop a sequential geometric programming (SGP) solution approach that approximates the original model as a series of standard geometric programming (GP) problems. While GP has been widely used in engineering optimization, its application to continuum approximation-based TND models remains limited in the existing literature \citep{zhenJointOptimizationTrunkfeeder2025}. The proposed SGP framework, therefore, provides an effective and computationally tractable approach for solving complex CA-based TND problems.

Numerical experiments reported later confirm both the accuracy and the practical value of the proposed framework. Compared with the corresponding discretized network evaluations, the CA model keeps the approximation error of generalized cost below 3\%, while the proposed HetNet design achieves 7.03\%--10.43\% lower generalized costs than the benchmark networks under strongly heterogeneous demand patterns.

The remainder of this paper is organized as follows. The next section introduces the fundamental concepts and assumptions. Section \ref{sec_method} presents the modeling framework, followed by the solution algorithm in Section \ref{sec_solution}. Numerical experiments are reported in Section \ref{sec_numerical}, and the final section concludes the paper.

\section{Basic concepts and assumptions} \label{sec_concept}
This paper examines TND in a square city, as outlined in Sections \ref{sec_HetNet} and \ref{sec_assumptions} for the network and demand, respectively. The proposed framework can also be directly applied to rectangular cities with minor adjustments. The notations utilized in this paper are summarized in Appendix \ref{Appen_A}.

\subsection{The city form and transit networks} \label{sec_HetNet}
We begin by considering a square city with dense grid street networks, as shown in Figure \ref{fig_HetNet}. 
Its transit network is assumed to follow the grid pattern accordingly. For convenience, we label the transit lines by their operational direction $i \in \mathbf{I}$, where set $\mathbf{I}$ includes E (\underline{E}astbound), W (\underline{W}estbound), N (\underline{N}orthbound), and S (\underline{S}outhbound). 

While routing predominantly along one direction (termed the longitudinal direction in this paper), the transit lines of that direction can deviate laterally at any location. These lateral movements enable the lines to redistribute themselves along the lateral direction, sometimes by merging or diverging at certain points. For instance,  E lines stretch mainly eastbound across the entire city, but some may turn perpendicular (northbound or southbound) at certain locations before resuming the easterly route, allowing the inter-line spacings to vary over space; see the green lines in Figure \ref{fig_HetNet}. In the present paper, this type of layout is called the Heterogeneous Network, or HetNet for short.
\begin{figure}[htbp]
    \centering
    \includegraphics[width=0.45\linewidth]{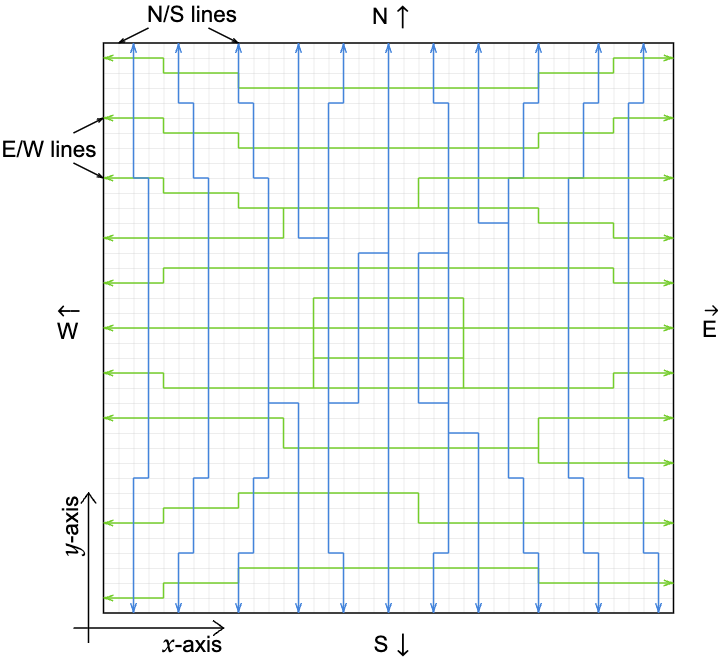}
    \caption{Illustration of heterogeneous transit network (HetNet) in a square city with a grid street network.}
    \label{fig_HetNet}
\end{figure} 

\begin{figure}[htbp]
    \centering
    \subfigure[H-HetNet \citep{ouyangContinuumApproximationApproach2014}]{
        \includegraphics[width=0.5\linewidth]{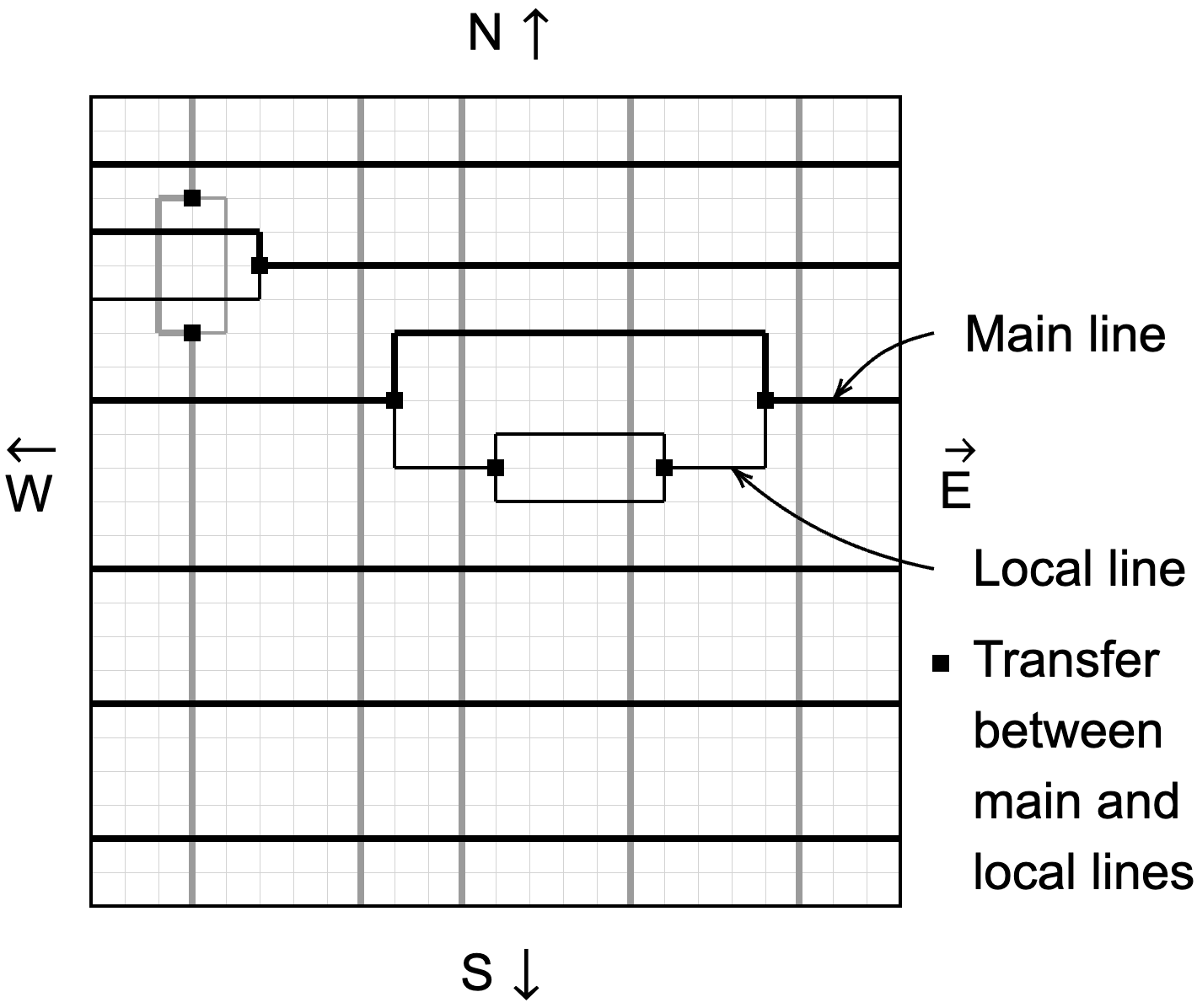}
        \label{Fig_H_HetNet}
    }
    \subfigure[P-HetNet \citep{chienOptimizationGridTransit1997}]{
        \includegraphics[width=0.41\linewidth]{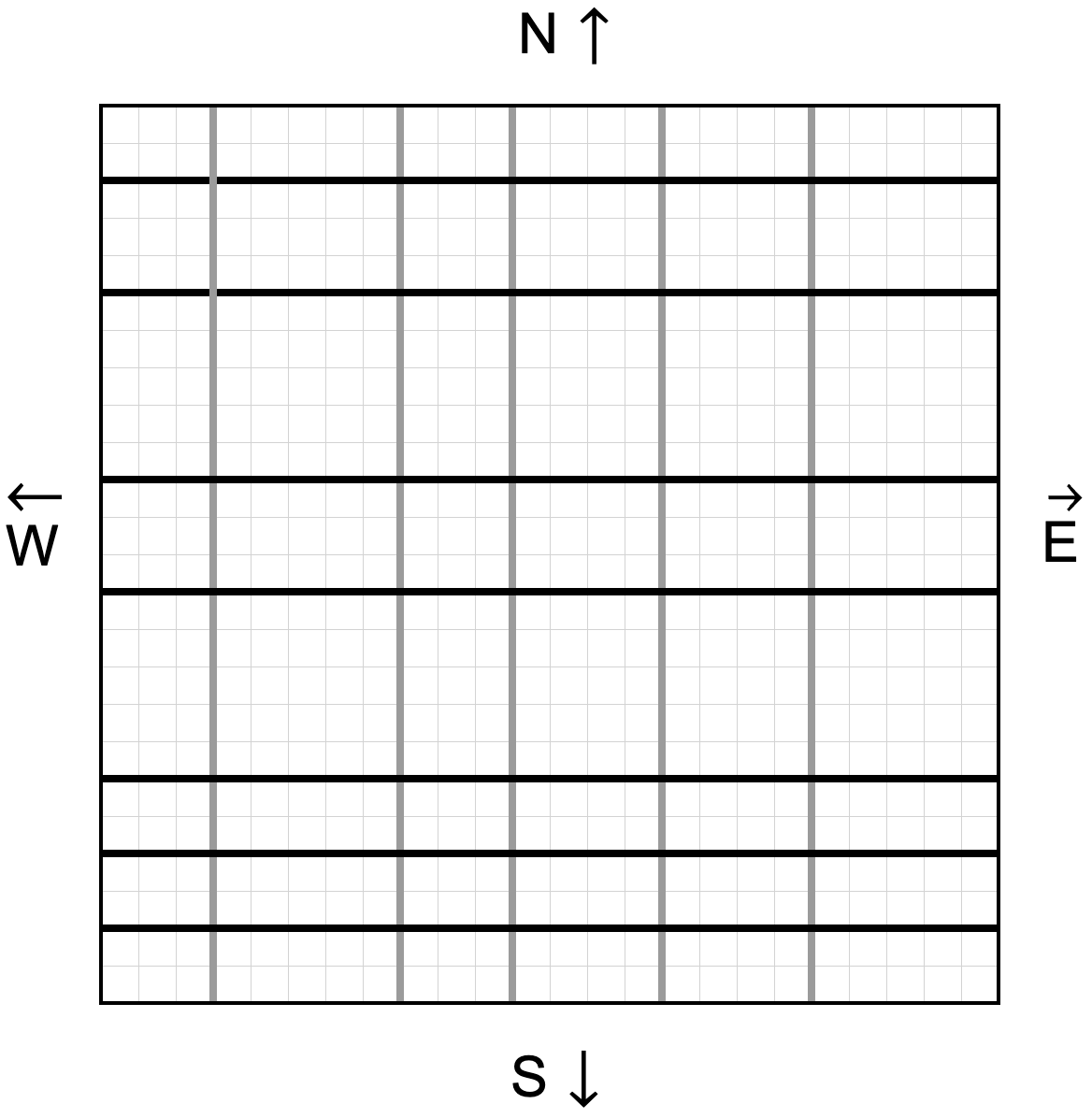}
        \label{Fig_P_HetNet}
    }
    \caption{Other heterogeneous grid transit network designs.}
    \label{fig_other_HetNet}
\end{figure}

To clarify ideas, we contrast HetNet with \cite{ouyangContinuumApproximationApproach2014}'s layout, hereafter called the Hierarchical HetNet or H-HetNet for brevity, and \cite{chienOptimizationGridTransit1997}'s network layout, called the Partial HetNet or P-HetNet; see Figures \ref{Fig_H_HetNet} and \ref{Fig_P_HetNet} for illustration. They are detailed as follows.
\begin{itemize}
    \item H-HetNet, as shown in Figure \ref{Fig_H_HetNet}, is a hierarchical network with two types of transit lines, namely the main lines spanning the entire city and the local lines operating in parallel within certain local zones only. They meet at transfer stations. The local lines can further bifurcate into multi-level local lines in the hierarchy. 
    Consequently, in the {H-HetNet}, a trip may entail transfers between hierarchical lines operating in the same direction. \\
    On the contrary, all lines in our HetNet span the entire city, beginning from one city boundary and ending at the opposite boundary. They may merge into common line segments (i.e., line segments that overlap on the same street) to provide higher-frequency, larger-spacing services along the predominant direction, or diverge to reduce access distances with increased headways in certain areas. 
    \item P-HetNet, shown in Figure \ref{Fig_P_HetNet}, can be viewed as a special case of H-HetNet without local lines or our HetNet without lateral movements, where $i$ lines' spacings can vary in the perpendicular direction but remain invariant along direction $i$.
\end{itemize}

For design purposes, we describe HetNet using two sets of functions: (i) \textit{transit line density} of direction $i \in \mathbf{I}$, denoted $\delta_i(\vec{x})$, meaning the number of {laterally separated lines} (or the number of streets where transit lines go through) per unit of lateral distance at the location denoted by $\vec{x}$; here, $\Vec{x}$ is a two-dimensional vector marking the location in the city domain $\mathbf{R}$; for instance in the Cartesian coordinate system, $\delta_{\rm E}(\vec{x})$ means the number of separated E-lines per $y$-axis distance at $\vec{x} = (x,y), x,y \in [0, R]$, where $R$ denotes the side length of the square city, i.e., $\mathbf{R}=[0,R] \times [0,R]$; and (ii) \textit{service headway} of a common-line segment that traverses location $\vec{x}$ in direction $i \in \mathbf{I}$, $h_i(\vec{x})$.
We assume that the two classes of decision functions, $\delta_i(\vec{x})$ and $h_i(\vec{x})$, are integrable, as is common in the literature of CA models \citep{ouyangContinuumApproximationApproach2014}.

Note that changes in transit line densities and headways reflect fuller heterogeneous network designs, and the two design functions enjoy a degree of design freedom at local areas: Specifically, line spacing expands from high- to low-density areas and vice versa, while buses converge onto common segments from large- to small-headway areas and vice versa. 
This \textit{local design freedom} means line densities and headways can vary independently in local areas, provided total vehicle flows are conserved for lines in the same direction across the entire network. This property of HetNet allows us to produce networks that best fit the needs of spatially non-uniform demand. However, detours happen for lines with small changes in either line densities or headways in neighboring areas. Therefore, the key tradeoff in optimizing HetNet lies between the benefits of heterogeneous line spacings and headways and the associated detour costs for both the operator and passengers. 

We further specify that the network is \textit{bidirectionally symmetrical}, i.e., $\delta_{\rm E}(\Vec{x}) = \delta_{\rm W}(\Vec{x})$ and $\delta_{\rm N}(\Vec{x}) = \delta_{\rm S}(\Vec{x}), \forall \Vec{x} \in \mathbf{R}$. In other words, each E (N) line segment overlaps with a W (S) line segment. We believe this is a reasonable choice in practice for a clear network image with bidirectional accessibility. This symmetrical network also benefits us by simplifying the following model formulations. Although bidirectional symmetry is enforced for the network, we allow \textit{asymmetrical headways} between the two opposing directions on the same line to adapt to unbalanced demand during various operation hours. For fully asymmetrical networks, we leave them for future study. 


In addition, we consider that stops are located at the intersections of the two perpendicular sets of lines, and vehicles do not skip stops. At each stop, a vehicle experiences delays due to acceleration, deceleration, passenger boarding, and alighting. We assume that all the lines merging into a common line segment share the same set of stops in that segment. Patrons can transfer between perpendicular lines at these stops. For instance, a patron can transfer via an E-line stop to either ${\rm N}$ or ${\rm S}$ lines. 
Intermediate stops between transfer stops are ignored in this paper and left for future exploration. 

\subsection{Transit demand and patrons' travel behaviors} \label{sec_assumptions}
The transit demand is assumed to be continuously distributed across the city region \citep{daganzoPublicTransportationSystems2019}. We express the demand as an integrable density function $\lambda(\Vec{x}_o,\Vec{x}_d )$ ($\rm trip \cdot km^{-4} \cdot hr^{-1}$), where $\Vec{x}_o$ and $\Vec{x}_d$ denote the origin and destination coordinates, respectively. Their travel behaviors are assumed to follow a few principles as adopted by previous studies (e.g., \citealp{daganzoPublicTransportationSystems2019, ouyangContinuumApproximationApproach2014}): 
\begin{itemize}
    \item The closest-stop principle: Each patron walks to/from the stop closest to her origin/destination \citep{fanOptimalDesignIntersecting2018a}. 
    \item The random-arrival principle: A patron arrives at the origin stop randomly without consulting the schedule of transit lines \citep{daganzoPublicTransportationSystems2019}. 
    \item The first-arrival vehicle principle (e.g., \citealp{chriquiCommonBusLines1975a}): 
    \begin{itemize}
        \item At her origin stop, a patron would board the first-arrival vehicle of all common lines that connect her destination with at most one transfer. 
        \item Onboard a vehicle, she decides to transfer to the perpendicular line with the first vehicle arrival times of all perpendicular lines that pass her destination. (This assumption implies that some intelligent transit systems can aid onboard patrons' transfer decisions with good quality information.) 
    \end{itemize}
\end{itemize}


Based on the above assumptions, the high-dimensional demand density function can be integrated into local lower-dimensional demand functions at location $(\Vec{x})$. They include the boarding and alighting densities, $\lambda_{\rm bo}^{i} (\Vec{x})$ and $\lambda_{\rm al}^{i} (\Vec{x})$ ($\rm trip \cdot km^{-2} \cdot hr^{-1}$), the cross-sectional on-board flux, $\lambda^{i}_{\rm fl} (\Vec{x})$ ($\rm trip \cdot km^{-1} \cdot hr^{-1}$), and the transferring densities, $\lambda_{\rm tr}^{i} (\Vec{x})$ ($\rm trip \cdot km^{-2} \cdot hr^{-1}$), $ \forall i \in \mathbf{I}$. Detailed derivations of these local demand functions are given in Appendix \ref{Appen_B}.

\section{Methodology} \label{sec_method}
Section \ref{sec_3_1} first presents some preliminary results that facilitate the modeling in Sections \ref{sec_3_2} and \ref{sec_3_3}, which describe the optimization formulation and the cost models, respectively.

\subsection{Preliminaries}\label{sec_3_1}
{In preparation for presenting the HetNet design models, we estimate three key metrics of HetNet: vehicles' detour distance in Section \ref{sec_veh_detour}, patrons' in-vehicle detour time in Section \ref{sec_in_vehicle_time}, and out-of-vehicle wait time in Section \ref{sec_out_vehicle_time}.} The Cartesian coordinate system is used for the ease of derivations. 

\subsubsection{Vehicle's detour distance.}\label{sec_veh_detour}
To estimate vehicles' detour distance, we define $q_i(x,y)=\frac{\delta_i(x,y)}{h_i(x,y)}$ ($\rm veh \cdot hr^{-1} \cdot km^{-1}$) be the vehicle flow of $i$ lines per cross-sectional distance, and $Q_{i}(x,y)$ ($\rm veh \cdot hr^{-1}$) the cumulative cross-sectional vehicle flow of $i$ lines. 

Specifically, E/W lines have $Q_{\rm E/W}(x,y)=\int_{y'=0}^{y}q_{\rm E/W}(x,y') \diff y' $ defined at the cross-section between $(x,0)$ to $(x,y)$; and N/S lines have $Q_{\rm N/S}(x,y)=\int_{x'=0}^{x}q_{\rm N/S}(x',y) \diff x' $ between $(0,y)$ and $(x,y)$. By the definition of HetNet in Section \ref{sec_HetNet}, these $\{Q_i(x,y), i \in \mathbf{I}\}$ satisfy conservation law. This is expressed by (with a slight abuse of notation): $Q_i(x,y|y=R) \equiv \mathrm{Q}_i, \forall x \in [0,R], i \in \{{\rm E,W}\}$ and $Q_j(x,y|x=R) \equiv \mathrm{Q}_j, \forall y \in [0,R], j \in \{{\rm N,S}\}$, where $\mathrm{Q}_{i}$ ($\mathrm{Q}_{j}$) is the total dispatched vehicle flow of $i$ ($j$) lines in the entire city.

Given the above definitions, we have the following proposition.

\begin{proposition} \label{Prop_Dv}
Total vehicle detour distance across the HetNet can be estimated by 
\begin{subequations}
    \begin{align}
        \int_{x=0}^{R}\int_{y=0}^{R}\left[\sum_{i\in\left\{ {\rm E,W}\right\} }\left|\frac{\partial Q_{i}(x,y)}{\partial x}\right|+\sum_{j\in\left\{ {\rm N,S}\right\} }\left|\frac{\partial Q_{j}(x,y)}{\partial y}\right|\right] \diff y \diff x, \notag
    \end{align}
    which is simplified into
    \begin{align}
        \sum_{i \in \mathbf{I}} \iint_{\Vec{x} \in \mathbf{R}} d_i(\Vec{x}) \diff \Vec{x},
    \end{align}
    where the integrand $\{d_i(\Vec{x}) \}$ $\rm (veh \cdot hr^{-1} \cdot km^{-1})$ is defined by 
    \begin{align} \label{eq_di}
        d_i(x,y) = \left|\frac{\partial Q_{i}(x,y)}{\partial x}\right|, i \in \left\{ {\rm E,W}\right\},d_j(x,y) = \left|\frac{\partial Q_{j}(x,y)}{\partial y}\right|, j \in \left\{ {\rm N,S}\right\}.
    \end{align}
\end{subequations}


\end{proposition}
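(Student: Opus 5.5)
We argue for E lines at a fixed longitudinal position $x$; the W case is identical, and the N/S cases follow by exchanging $x$ and $y$. The idea is that $Q_{\rm E}(x,y)$ labels each bus crossing the cross-section at $x$ by its lateral position. If we rank buses by their cumulative-flow index $Q\in[0,\mathrm{Q}_{\rm E}]$, conservation ($Q_{\rm E}(x,R)\equiv \mathrm{Q}_{\rm E}$ for all $x$) guarantees that the same index range exists at every cross-section. We would then assume, as the natural order-preserving routing, that lines do not cross one another when redistributing laterally. Under this assumption, the bus stream with index $Q$ sits at lateral position $y(x;Q)$, defined implicitly by $Q_{\rm E}(x,y(x;Q))=Q$. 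Lateral movement, which is the only source of detour beyond the unavoidable longitudinal distance $R$, then amounts to the change of $y(x;Q)$ with $x$.

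\textbf{Step 1 (trajectory slope).} Differentiate the identity $Q_{\rm E}(x,y(x;Q))=Q$ with respect to $x$. Since $\partial Q_{\rm E}/\partial y=q_{\rm E}$, this gives
$\partial y/\partial x = -\,(\partial Q_{\rm E}/\partial x)/q_{\rm E}$.
The lateral distance travelled per unit $x$ by the streamline of index $Q$ is therefore $|\partial Q_{\rm E}/\partial x|/q_{\rm E}$, evaluated at $(x,y(x;Q))$. In a dense grid the lateral moves are Manhattan segments, so the extra vehicle-km equals this lateral displacement.

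\textbf{Step 2 (aggregate over the flow).} Integrate over vehicle flow to get total detour vehicle-km per hour:
$\int_0^R\int_0^{\mathrm{Q}_{\rm E}} |\partial y/\partial x|\,\diff Q\,\diff x$.
Change variables from $Q$ to $y$ using $\diff Q=q_{\rm E}\diff y$. The factor $q_{\rm E}$ cancels, and the result is $\int_0^R\int_0^R|\partial Q_{\rm E}/\partial x|\,\diff y\,\diff x$. Summing over $i\in\mathbf I$ with the analogous N/S expressions gives the first displayed formula. Defining $d_i$ as in \eqref{eq_di} then gives the simplified form directly.

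\textbf{Step 3 (interpretation and main obstacle).} The hard part is justifying the change of variables and the non-crossing assumption. This requires $q_{\rm E}>0$, so that $Q_{\rm E}(x,\cdot)$ is strictly increasing and invertible, and enough regularity to differentiate. Where $q_{\rm E}=0$ on an interval there are no vehicles, and $\partial Q_{\rm E}/\partial x$ contributes consistently. We would handle this case by approximation, or alternatively argue via a discrete picture. In that picture, $|Q_{\rm E}(x+\Delta x,y)-Q_{\rm E}(x,y)|$ is the net vehicle flow that must cross the longitudinal line at lateral position $y$ between $x$ and $x+\Delta x$. Integrating this crossing flow over $y$ counts the lateral vehicle-km, because each unit of lateral travel crosses exactly one unit of $y$.

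We would also need to argue that this count is minimal, i.e.\ the best design, with no extra back-and-forth movements. That follows because any routing realizing the given $Q$ must move at least the net flow $|\partial Q/\partial x|\diff x$ across each lateral level, and the order-preserving routing attains this bound. This minimality argument is also what supports the word ``estimated'' in the statement.
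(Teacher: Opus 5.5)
Your proposal is correct and is essentially the paper's argument: the paper also indexes E-line vehicles by cumulative flow $Q$, matches the $Q$-th stream between cross-sections $x$ and $x+\Delta x$ in order-preserving fashion, and converts the horizontal-slice integral $\int_0^{\mathrm{Q}_{\rm E}}|y^Q_{x+\Delta x}-y^Q_x|\,\diff Q$ into the vertical-slice integral $\int_0^R|Q_{\rm E}(x+\Delta x,y)-Q_{\rm E}(x,y)|\,\diff y$ (the area between the two cumulative curves) before letting $\Delta x\to 0$, which is just the finite-strip form of your substitution $\diff Q=q_{\rm E}\,\diff y$. Your concern about $q_{\rm E}=0$ is moot because constraint (\ref{cons_range}) gives $q_{\rm E}=\delta_{\rm E}/h_{\rm E}>0$ (and the area-slicing form needs only monotonicity anyway), while your minimality remark---that the order-preserving routing attains the lower bound $|\partial Q_{\rm E}/\partial x|\,\diff x$ on net lateral crossings at each level---is a correct addition that the paper does not make.
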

\proof{Proof.} 
    We prove Proposition \ref{Prop_Dv} by deriving the E-line vehicles' detour distance for illustration. 
    
    Firstly, we observe Figure \ref{fig_veh_detour_a} for two cross sections at $x$ and $x+\Delta x$. The passing vehicle flows are indexed by 1,2,...,$Q$,...,$\mathrm{Q}_{\rm E}$ in the $y$-axis direction. Note that the (perpendicular) detour distance of $Q^{\rm th}$ flow is measured by $\left| y^Q_{x+\Delta x} -  y^Q_{x} \right|$, where $y^Q_{x+\Delta x}$ and $y^Q_{x}$ mark $Q^{\rm th}$ vehicle flow's $y$-axis locations at the two cross sections, respectively, satisfying $Q_{\rm E}(x+\Delta x, y|y=y^Q_{x+\Delta x})=Q_{\rm E}(x, |y=y^Q_{x})=Q$.

    \begin{figure}[htbp]
    \centering
    \subfigure[Cross-sectional vehicle flows at $x$ and $x+\Delta x$.]{
        \includegraphics[width=0.45\linewidth]{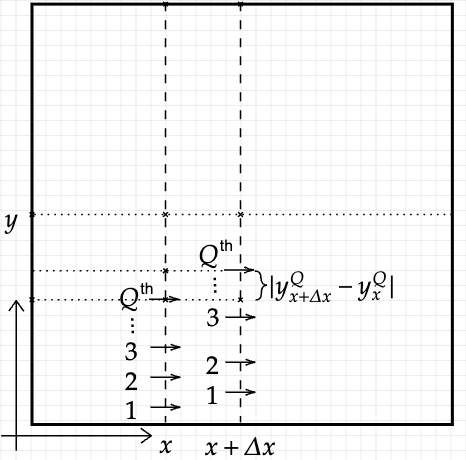}
        \label{fig_veh_detour_a}
    }
    \subfigure[Cumulative vehicle flows in $y$-axis direction at $x$ and $x+\Delta x$.]{
        \includegraphics[width=0.45\linewidth]{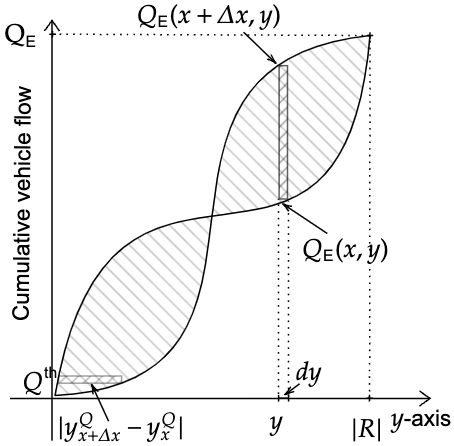}
        \label{fig_veh_detour_b}
    }
    \caption{Vehicle detour distance derivation for E lines.}
    \label{fig_veh_detour}
    \end{figure}
    
    Next, we turn to Figure \ref{fig_veh_detour_b}, which plots cumulative vehicle flows passing the cross-sections at $x$ and $x+\Delta x$, i.e., the curves of $Q_{\rm E}(x,y)$ and $Q_{\rm E}(x+\Delta x,y)$ concerning $y \in [0,R]$. The horizontal distance in Figure \ref{fig_veh_detour_b} corresponds to the perpendicular detour distance of E-line vehicles; c.f. Figure \ref{fig_veh_detour_a}. 
    Thus, integrating the horizontal slices with areas of $\left| y^Q_{x+\Delta x} - y^Q_{x} \right| \diff Q$ over $Q\in [0,\mathrm{Q}_{\rm E}]$ yields the total vehicle detour distance of E lines between $x$ and $x+\Delta x$, i.e., the shaded area in Figure \ref{fig_veh_detour_b}.

    Equivalently, the shaded area can be obtained by integrating the vertical slices between $y$ and $y+dy$, as shown in Figure \ref{fig_veh_detour_b}, for $y\in [0, R]$. The formula is
    \begin{align} \label{eq_detour_dx}
        \int_{y=0}^{R}\left|Q_{\rm E}(x+\Delta x,y) - Q_{\rm E}(x,y) \right| \diff y = \int_{y=0}^{R}\left|\frac{Q_{\rm E}(x,y)-Q_{\rm E}(x+\Delta x,y)}{\Delta x}\right| \diff y \Delta x.
    \end{align}
    
    Finally, when $\Delta x \rightarrow 0$, we integrate Eq. (\ref{eq_detour_dx}) over $x \in [0, R]$, obtaining  the total vehicle detour distance of E lines in the entire HetNet, i.e.,
    \begin{align}
        \int_{x=0}^{R}\int_{y=0}^{R}\left|\frac{\partial Q_{\rm E}(x,y)}{\partial x}\right| \diff y \diff x,
    \end{align}
    where $\left|\frac{\partial Q_{\rm E}(x,y)}{\partial x}\right|$ ($\rm veh \cdot hr^{-1} \cdot km^{-1}$) can be interpreted as the density of detouring vehicle flow per E-line distance; $\left|\frac{\partial Q_{\rm E}(x,y)}{\partial x}\right| \diff x$ yields the number of detouring vehicles in $[x,x+ \diff x]$; and $\left|\frac{\partial Q_{\rm E}(x,y)}{\partial x}\right| \diff x \diff y$ gives vehicle detour distance in the neighborhood area ($\diff x \diff y$) at $(x,y) \in \mathbf{R}$.
    
    The results of W lines can be similarly obtained as
    \begin{align}\label{Eq_vehicle_detour_distance_EW}
        \int_{x=0}^{R}\int_{y=0}^{R}\left|\frac{\partial Q_{\rm W}(x,y)}{\partial x}\right|\diff y \diff x.
    \end{align}
    
    The above derivation process is applied to N and S lines by rotating the coordinate system, which yields,
    \begin{align}\label{Eq_vehicle_detour_distance_NS}
        \int_{x=0}^{R}\int_{y=0}^{R}\left|\frac{\partial Q_{j}(x,y)}{\partial y}\right| \diff y \diff x, \forall j \in \{ {\rm N, S} \}.
    \end{align}
\endproof

\subsubsection{Patrons' in-vehicle detour time.} \label{sec_in_vehicle_time}

Based on Proposition \ref{Prop_Dv}, we have the following corollary,
\begin{corollary} \label{corollary_Dt}
    Total patron in-vehicle detour time can be estimated by 
    \begin{align} 
        \sum_{i \in \mathbf{I}} \iint_{\Vec{x} \in \mathbf{R}} \alpha \frac{\lambda^i_{\rm fl}(\Vec{x})h_i(\Vec{x})}{\delta_i(\vec{x})} \frac{d_i(\Vec{x})}{v} \diff \Vec{x},
    \end{align}
    where $v$ is the transit vehicles' average cruising speed; and parameter $\alpha\in[0,1]$ accounts for the impact of onboard patrons' detours on their in-vehicle trip distance in a uniform grid transit network, which may increase or decrease due to detouring further away or closer in the direction of their destinations. 
\end{corollary}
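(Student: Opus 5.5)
The plan is to build the estimate locally from Proposition \ref{Prop_Dv}, in the same way that proposition built vehicle detour distance: count, in each small cell $\diff \vec{x}$, the detour distance driven per vehicle, multiply by the average on-board load of a detouring vehicle, convert distance to time with the speed $v$, and integrate. We work with E lines; W, N and S follow by the same symmetry and rotation argument used at the end of the proof of Proposition \ref{Prop_Dv}.

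\textbf{Step 1 (vehicle detour in a cell).} By Proposition \ref{Prop_Dv}, $d_{\rm E}(\vec{x})\diff x \diff y$ is the vehicle-km of lateral detour per hour in the cell around $\vec{x}$. The corresponding vehicle-hours per hour of detour running are therefore $d_{\rm E}(\vec{x})\diff\vec{x}/v$.

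\textbf{Step 2 (average on-board load).} In the cell, the on-board flux crossing a unit of lateral distance is $\lambda^{\rm E}_{\rm fl}(\vec{x})$. This flux is carried by a vehicle flow $q_{\rm E}(\vec{x})=\delta_{\rm E}(\vec{x})/h_{\rm E}(\vec{x})$ per unit lateral distance. Under the first-arrival principle, patrons on a common segment are spread across the vehicles passing that segment. Hence the mean load of a vehicle passing $\vec{x}$ is $\lambda^{\rm E}_{\rm fl}h_{\rm E}/\delta_{\rm E}$. We will assume the detouring vehicles are representative of all vehicles at $\vec{x}$, so that they carry this same mean load. Multiplying Step 1 by this load gives the patron-hours spent riding through lateral detours in the cell: $\lambda^{\rm E}_{\rm fl}h_{\rm E}d_{\rm E}/(\delta_{\rm E}v)\,\diff\vec{x}$.

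\textbf{Step 3 (net effect on trip time).} Not every detour-hour is an extra cost to the patrons riding through it. A lateral move may carry a patron toward her destination, and that portion would otherwise have been covered on the perpendicular line after transferring. Only the part of the lateral movement that is wasted, relative to the patron's in-vehicle distance in a uniform grid, adds to her time. We represent this with a factor $\alpha\in[0,1]$. The value $\alpha=1$ corresponds to every lateral move being wasted, and $\alpha=0$ to every lateral move being fully useful. Summing over $i\in\mathbf{I}$ and integrating over $\mathbf{R}$ then gives the corollary.

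\textbf{Main obstacle.} The non-rigorous points are the representativeness assumption in Step 2 and the justification of a constant $\alpha$ in Step 3. For Step 2, I would argue that merging and diverging choices are independent of patrons' destinations at the CA scale. For Step 3, I would argue that destinations lie on either side of the detour with spatially smooth probabilities, so a constant $\alpha$ is an acceptable approximation. I would present $\alpha$ as a calibration parameter, as the statement does, rather than derive it exactly.
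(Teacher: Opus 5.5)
Your proposal is correct and follows essentially the same route as the paper. The paper's proof takes the average vehicle occupancy $\lambda^i_{\rm fl}h_i/\delta_i$, multiplies it by the local vehicle detour distance $d_i\,\diff\vec{x}$ from Proposition~\ref{Prop_Dv}, divides by $v$, and then discounts by the fixed factor $\alpha$, which is exactly what you do. Your representativeness and constant-$\alpha$ caveats are the same heuristic assumptions the paper makes implicitly, since it too treats $\alpha$ as a chosen candidate value rather than deriving it.
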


\begin{remark}
    When $ d_i(\vec{x})=0$, patrons experience no additional in-vehicle distance due to lateral detours. When $d_i(\vec{x})>0$, the parameter $\alpha \in [0,1]$ captures the effective fraction of vehicle detour distance that translates into additional patron in-vehicle distance, shown in Figures \ref{fig_detour_parameter}. The case $\alpha = 0$ represents the situation in which lateral vehicle detours introduce no additional patron detour distance on average, whereas $\alpha = 1$ represents the conservative case in which the full vehicle detour distance is counted as additional patron detour distance.
\end{remark}

\begin{figure}[htbp]
    \centering
    \subfigure[No additional patron detour distance ($\alpha=0$)]{
        \includegraphics[width=0.48\linewidth]{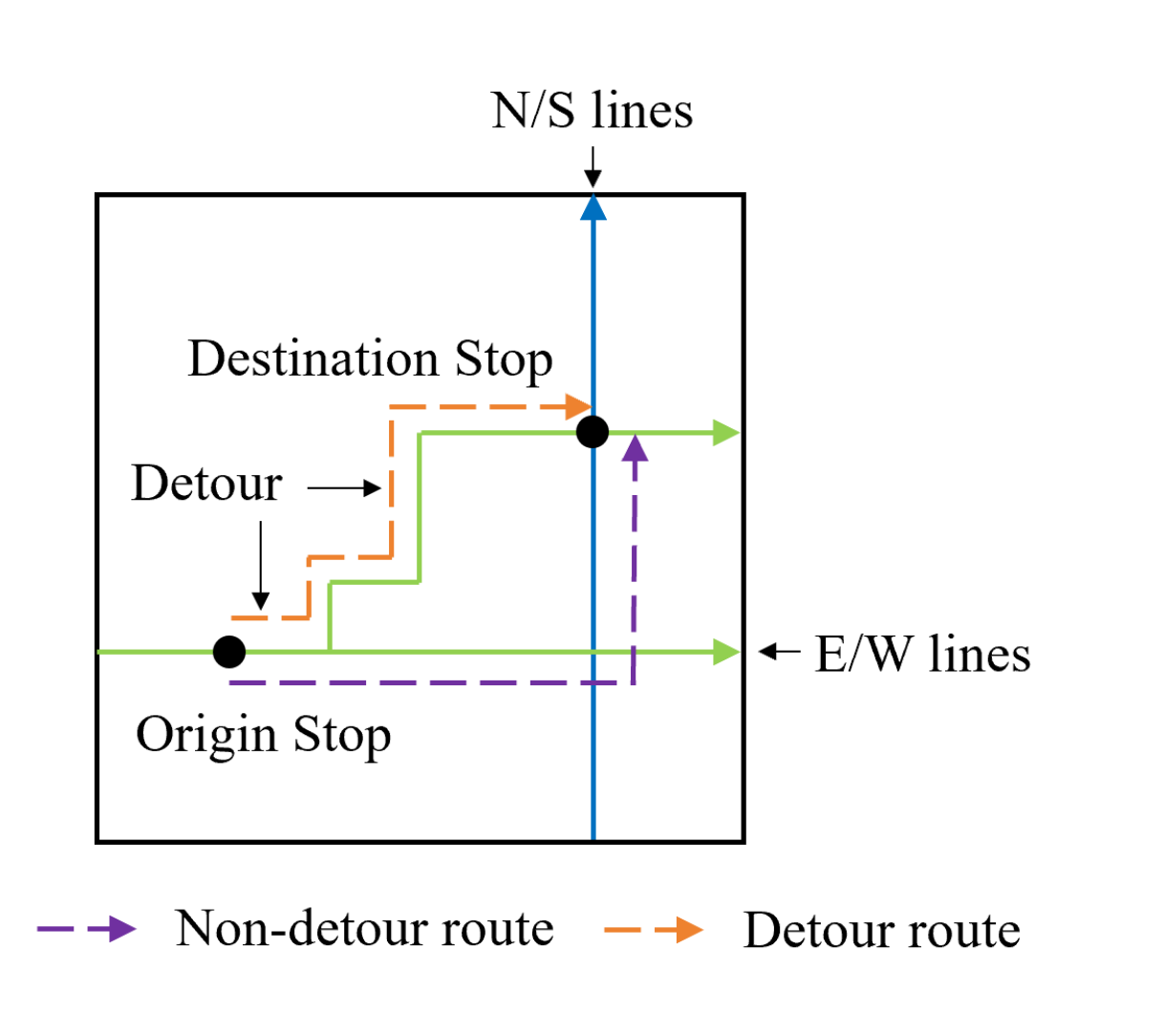}
    }
    \subfigure[Maximum additional patron detour distance ($\alpha=1$)]{
        \includegraphics[width=0.48\linewidth]{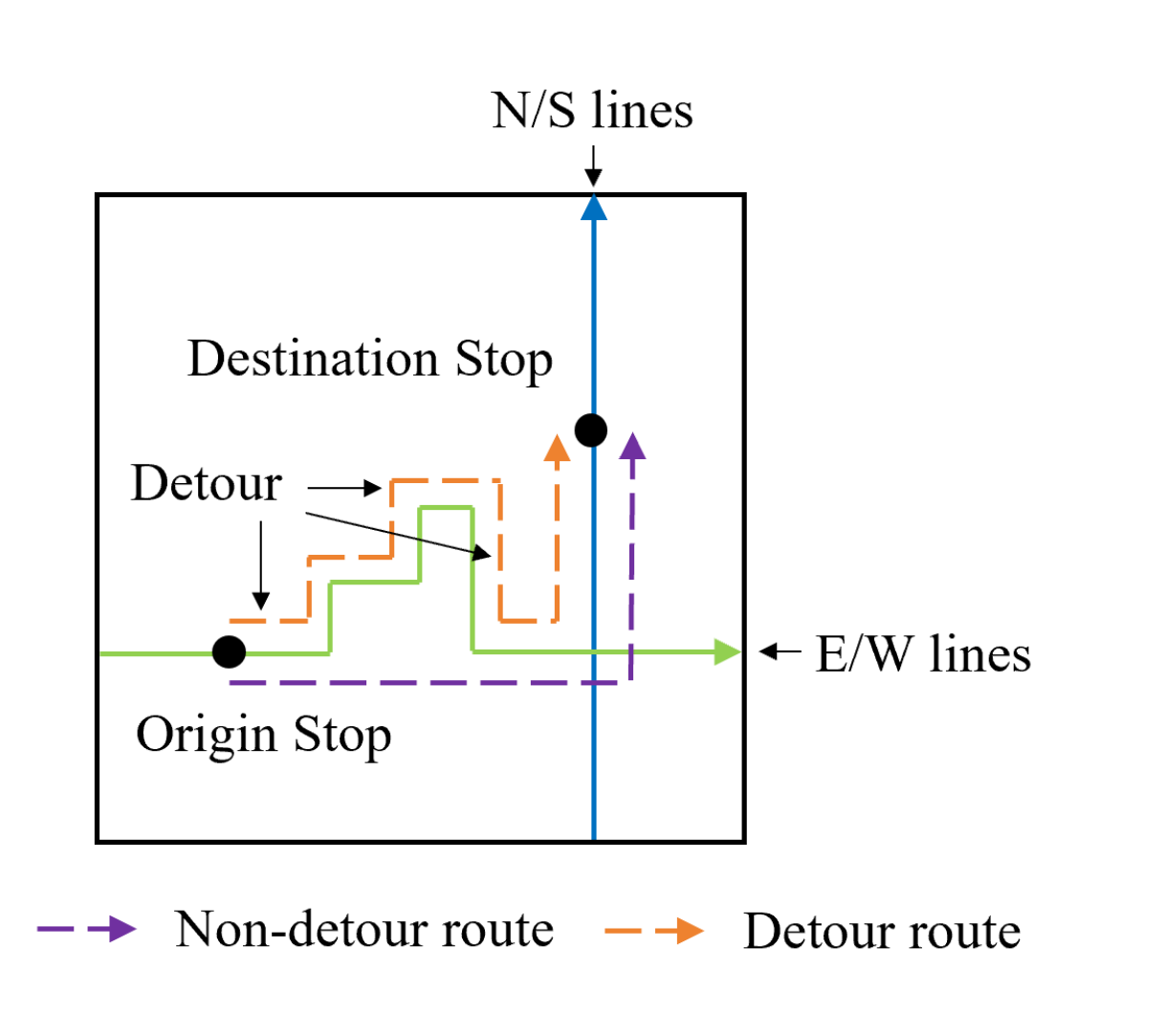}
    }
    \caption{Illustration of detour parameter $\alpha$, which measures the fraction of vehicle detour distance that contributes to additional patron in-vehicle distance.}
    \label{fig_detour_parameter}
\end{figure}

It is worth noting that $\alpha$ is essentially an endogenous parameter, because its realized value depends on the network structure and the resulting demand-flow assignment. As both of these are determined only after the network design is specified, $\alpha$ cannot be estimated a priori during the optimization process. Therefore, in the network design stage, we first fix a candidate value of $\alpha$ and then solve the corresponding optimization problem. In Section \ref{sec_Sensitivity_analysis_of_the_endogenous_parameter}, we further examine how different values of $\alpha$ affect the optimization results and provide a recommended setting for practical use.

\proof{Proof.} Corollary \ref{corollary_Dt} can be proven by noticing that $\frac{f^i_{\rm fl}(\Vec{x})h_i(\Vec{x})}{\delta_i(\vec{x})}$ means average vehicle occupancy. Multiplying this by the local detour distance of all vehicles, $d_i(\Vec{x}) \diff \Vec{x}$, and dividing it by vehicle speed $v$, would yield local patrons' (non-discounted) in-vehicle detour time.
\endproof


Similarly, we have the second corollary that is useful for deriving transit agency cost metrics,
\begin{corollary} \label{corollary_Dl}
    The detoured length of transit line infrastructures (e.g., the guideway) in the HetNet can be estimated by 
    \begin{align}
        \sum_{i \in \mathbf{I}} \iint_{\Vec{x} \in \mathbf{R}} d_i(\Vec{x})h_i(\Vec{x}) \diff \Vec{x}.
    \end{align}
\end{corollary}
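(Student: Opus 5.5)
The plan is to reuse the slice-counting argument behind Proposition \ref{Prop_Dv}, but to count \emph{detoured line segments} instead of \emph{detoured vehicles}. Since $q_i(\Vec{x})=\delta_i(\Vec{x})/h_i(\Vec{x})$, a single laterally separated line (a common-line segment) carries vehicle flow $1/h_i(\Vec{x})$. Therefore, at a location where the local density of detouring vehicle flow is $d_i(\Vec{x})$, the number of detouring line segments per unit longitudinal distance is $d_i(\Vec{x})$ divided by $1/h_i(\Vec{x})$, i.e., $d_i(\Vec{x})h_i(\Vec{x})$. Each such segment contributes its lateral displacement to the guideway length. Integrating over $\mathbf{R}$ and summing over $i\in\mathbf{I}$ should give the stated expression.

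I will carry this out for E lines, as the proof of Proposition \ref{Prop_Dv} does.

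\textbf{Step 1 (slices).} Take the cross sections at $x$ and $x+\Delta x$. The proof of Proposition \ref{Prop_Dv} shows that the vehicle detour distance between them is
\begin{align*}
\int_{y=0}^{R}\left|Q_{\rm E}(x+\Delta x,y)-Q_{\rm E}(x,y)\right|\diff y .
\end{align*}
Read along the horizontal slices, this area is $\int |y^Q_{x+\Delta x}-y^Q_x|\diff Q$.

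\textbf{Step 2 (change of measure).} Convert the measure $\diff Q$ into a count of lines. Each line near $(x,y)$ carries a flow increment $1/h_{\rm E}(x,y)$, so one line corresponds to $\diff Q = h_{\rm E}^{-1}\diff n$, where $n$ indexes separated lines. Counting infrastructure means counting each line once rather than once per unit of flow. This amounts to weighting the vertical slice at height $y$ by $h_{\rm E}(x,y)$, which gives
\begin{align*}
\int_{y=0}^{R} h_{\rm E}(x,y)\left|\frac{\partial Q_{\rm E}(x,y)}{\partial x}\right|\diff y\,\Delta x .
\end{align*}

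\textbf{Step 3 (limit and assembly).} Let $\Delta x\to0$ and integrate over $x$. Then repeat the argument for W lines, and for N and S lines after rotating the coordinates. Summing the four contributions yields Corollary \ref{corollary_Dl}.

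\textbf{Step 4 (symmetry).} Bidirectional symmetry means each E segment shares its street with a W segment. This either counts the same street twice or, if guideway is counted per direction, matches the summation over $\mathbf{I}$ as stated. I will note this as an interpretation, consistent with the corollary summing over all four directions.

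The main obstacle is Step 2. The headway is evaluated at the location of the slice, but a detouring line runs laterally from $y^Q_x$ to $y^Q_{x+\Delta x}$, and $h_{\rm E}$ may change along that path. I would argue this to first order: as $\Delta x\to0$ the lateral displacement becomes infinitesimal wherever $\partial Q_{\rm E}/\partial x$ is bounded. Hence $h_{\rm E}$ can be frozen at $(x,y)$ with an error that is $o(\Delta x)$ after integration, using the integrability and assumed smoothness of $\delta_i$ and $h_i$.

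A second subtlety arises where several lines merge into one common segment. There the headway of the merged segment already reflects the combined frequency, so that segment is counted once. I would check this against a simple two-line merge example to confirm that the weighting gives the right guideway length.
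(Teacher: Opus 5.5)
Your proposal is correct and uses the same idea as the paper: each common-line segment carries frequency $1/h_i(\Vec{x})$, so multiplying the detouring vehicle-flow density $d_i(\Vec{x})$ by $h_i(\Vec{x})$ turns vehicle detour distance into detoured line length. The paper gives this in one line, reading $d_i(\Vec{x})h_i(\Vec{x})\diff\Vec{x}$ as the local detour distance per vehicle and hence the local detoured infrastructure length. Your rerun of the slice argument of Proposition \ref{Prop_Dv} with weight $h_{\rm E}$ is a more explicit version of that same step, and your Step 2 ``obstacle'' goes away once you read the weighted quantity as the integral of $h_{\rm E}(x,y)$ over the shaded region between the two cumulative curves, since both slicings then agree by Fubini.
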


\proof{Proof.} Corollary \ref{corollary_Dl} can be proved by noticing that $d_i (\Vec{x}) h_i(\Vec{x}) \diff \Vec{x}$ [$(\rm veh \cdot hr^{-1} \cdot  km^{-1}) \cdot ( hr \cdot veh^{-1}) \cdot km^2$] means the vehicle detour distance per vehicle of $i$ lines in the neighborhood area at $(x,y)$, which is equivalent to the detoured length of local $i$-line infrastructures.
\endproof

\subsubsection{Patrons' out-of-vehicle wait time.} \label{sec_out_vehicle_time}

During transit trips, patrons experience out-of-vehicle waiting at their origin and transfer stops. The first-arrival vehicle principle (as stated in Section \ref{sec_assumptions}) enables us to localize patrons' out-of-vehicle wait time concerning only local variables and estimate the total by the following proposition.

\begin{proposition} \label{Prop_Wt}
    Patrons' out-of-vehicle wait time in the HetNet can be estimated by,
    \begin{align} \label{eq_Tw}
        \sum_{i \in \mathbf{I}} \iint_{\Vec{x} \in \mathbf{R}} \left( \lambda^i_{\rm bo}(\Vec{x}) + \lambda^i_{\rm al}(\Vec{x}) \right) \frac{h_i(\Vec{x})}{2} \diff \Vec{x},
    \end{align}
    where the integrand $\left( \lambda^i_{\rm bo}(\Vec{x}) + \lambda^i_{\rm al}(\Vec{x}) \right) \frac{h_i(\Vec{x})}{2}, \forall i \in \mathbf{I}, \Vec{x} \in \mathbf{R}$ $(\rm trip \cdot km^{-2} \cdot hr^{-1})$ includes localized wait times for patrons boarding at $\Vec{x}$ (of which the number is $ \lambda^i_{\rm bo}(\Vec{x}) \diff \Vec{x} $) and those transferring from somewhere else but alighting at $\Vec{x}$ (of which the number is $ \lambda^i_{\rm al}(\Vec{x}) \diff \Vec{x} $).
\end{proposition}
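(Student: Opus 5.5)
The plan is to localize every waiting episode to a stop and then sum over stops. Under the trip structure of Section \ref{sec_assumptions}, a patron waits at most twice: once at her origin stop and once at her transfer stop. At the origin she waits for a vehicle of direction $i$, namely the first-arriving vehicle among all common lines through that stop that serve her destination with at most one transfer. At the transfer stop she waits for the perpendicular direction $j$. I would therefore charge each waiting episode to the stop where it occurs and to the direction of the line being boarded there.

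\textbf{Step 1: one waiting episode.} Consider a patron at a stop near $\vec{x}$ waiting for direction $i$. The relevant vehicles all pass through the common-line segment at $\vec{x}$, whose combined headway is $h_i(\vec{x})$ by the definition in Section \ref{sec_HetNet}. The first-arrival vehicle principle says she boards whichever of these comes first, so the effective headway she faces is $h_i(\vec{x})$, not the headway of any single line. The random-arrival principle, together with regular headways on the common segment, then gives an expected wait of $h_i(\vec{x})/2$.

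\textbf{Step 2: origin waits.} Patrons boarding direction $i$ in the cell $d\vec{x}$ number $\lambda^i_{\rm bo}(\vec{x})\,d\vec{x}$. This produces the term $\lambda^i_{\rm bo}(\vec{x}) h_i(\vec{x})/2$.

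\textbf{Step 3: transfer waits.} A patron who transfers onto direction $i$ at a stop also alights from that $i$ line. By the closest-stop principle and the grid geometry, her transfer stop lies on the $i$-line segment serving her destination, so her wait is charged at that location. I will attribute it to $\lambda^i_{\rm al}$ using the convention of Appendix \ref{Appen_B}, in which the alighting density of direction $i$ counts patrons who transferred in and ride to $\vec{x}$. Under the onboard-information clause of the first-arrival principle, she faces combined headway $h_i$ at the transfer point. Treating headways as locally slowly varying, so that the headway where she waits equals $h_i(\vec{x})$ up to the CA approximation, this yields the term $\lambda^i_{\rm al}(\vec{x}) h_i(\vec{x})/2$.

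\textbf{Step 4: no double counting.} Patrons who board direction $i$ at their origin and alight at their destination without transferring must not be counted again in Step 3. I would use the precise definitions in Appendix \ref{Appen_B} to confirm that, in the decomposition of $\lambda^i_{\rm al}$, only the transferring-in patrons carry a waiting cost. Summing over $i\in\mathbf{I}$ and integrating over $\mathbf{R}$ then gives (\ref{eq_Tw}).

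\textbf{Main obstacle.} The hardest part is Step 3: justifying that the transfer wait can be charged at the alighting location with the local headway $h_i(\vec{x})$, and that no double counting arises. The first thing I would try is to show that, since the transfer stop and the alighting stop lie on the same common segment within an access-scale distance, replacing $h_i$ at the transfer stop by $h_i(\vec{x})$ changes the total only by a higher-order term in the CA limit.
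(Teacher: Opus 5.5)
Your treatment of origin waits (Steps 1--2) agrees with the paper. Step 3, however, has a genuine gap, and the repair you sketch under ``Main obstacle'' would fail.

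The transfer stop and the alighting stop are not an access distance apart. They are the two ends of the second riding leg: for a trip that rides N/S first, roughly $(x_o,y_d)$ and $(x_d,y_d)$ (cf.\ the definition of $\lambda^{\rm E}_{\rm tr}$ in Appendix B). So they are a trip length apart. Along that leg $h_i$ can change by a large factor, precisely because lines merge and diverge there. If a segment with headway $h$ splits into $k$ branches before her destination, the two headways differ by a factor $k$. In a HetNet, replacing the transfer-point headway by $h_i(\vec{x})$ at the alighting point is therefore not a higher-order CA correction.

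Moreover, the combined headway at the transfer point is not what she faces in the first place, so Step 1 is right for origin waits but not for transfer waits. If the $i$ lines through her transfer stop later diverge, only a fraction of the vehicles there reach her destination stop, so her wait exceeds half the local headway. If several $i$ branches near her transfer location merge before her destination, vehicles bound for her destination pass on all of those branches, so a single local stop understates her options.

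The missing idea, which is how the paper argues, is to take the onboard clause of the first-arrival principle literally. She transfers to the first-arriving among all perpendicular vehicles that pass her \emph{destination}, at whichever branch that vehicle happens to use. This is acceptable because candidate transfer stops on the different branches are close relative to the trip length, so ride times are essentially equal. Checking the branch-to-branch, branch-to-trunk, trunk-to-branch and trunk-to-trunk patterns, the usable set is exactly all $i$-vehicles traversing her destination stop. Their combined headway is $h_i(\vec{x}_d)$ by the very definition of $h_i$ as the headway of the common segment through $\vec{x}_d$. Random arrival then gives a wait of $h_i(\vec{x}_d)/2$ directly, with no slow-variation step. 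This is why the transfer wait is indexed by the alighting location and weighted by $\lambda^i_{\rm al}$. Charging it at the transfer point with the local headway, as you do, would naturally produce weights $\lambda^i_{\rm tr}$ instead.

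Your Step 4 also looks for a decomposition that is not there. $\lambda^i_{\rm al}$ in Appendix B counts every patron alighting from an $i$ line at $\vec{x}$, and the formula charges all of them a wait. The paper justifies this by conservatively assuming that every patron transfers. Each trip then contributes exactly one origin wait (through some $\lambda^i_{\rm bo}$) and one transfer wait (through some $\lambda^j_{\rm al}$). The no-transfer trips are deliberately overcharged rather than excluded.
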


\proof{Proof.}
We prove Proposition \ref{Prop_Wt} by first observing patrons' typical trip patterns, as visualized in Figure \ref{fig_trip_patterns}, including branch-to-branch, branch-to-trunk, trunk-to-branch, and trunk-to-trunk trips. (Other trip patterns are either simpler ones, e.g., single-line-to-single-line, or can be composed of combinations of the four typical patterns; see the notes following Figure \ref{fig_trip_patterns}.)


\begin{figure}[htbp]
    \centering
    \subfigure[Branch-to-branch trips.]{
        \includegraphics[width=0.45\linewidth]{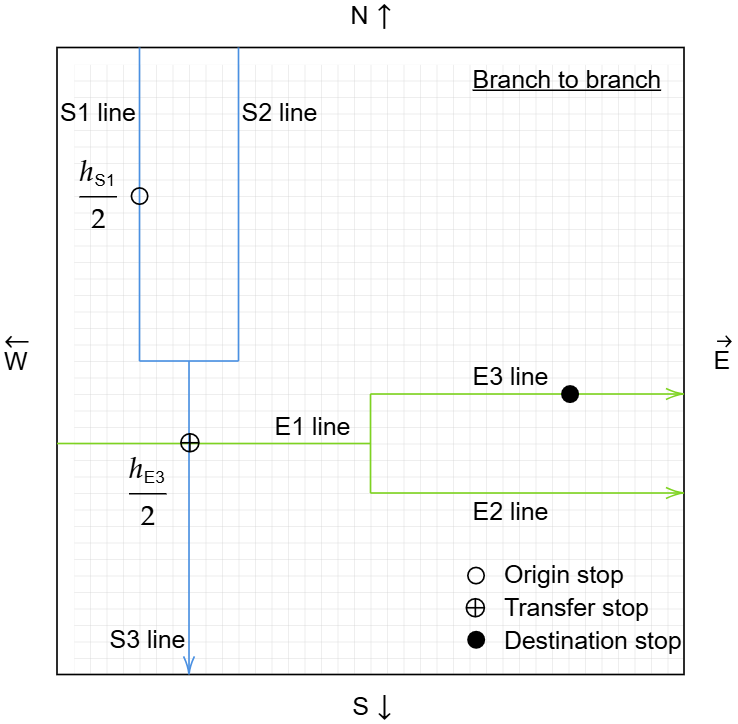}
        \label{fig_trip_patterns_a}
    }
    \subfigure[Branch-to-trunk trips.]{
        \includegraphics[width=0.45\linewidth]{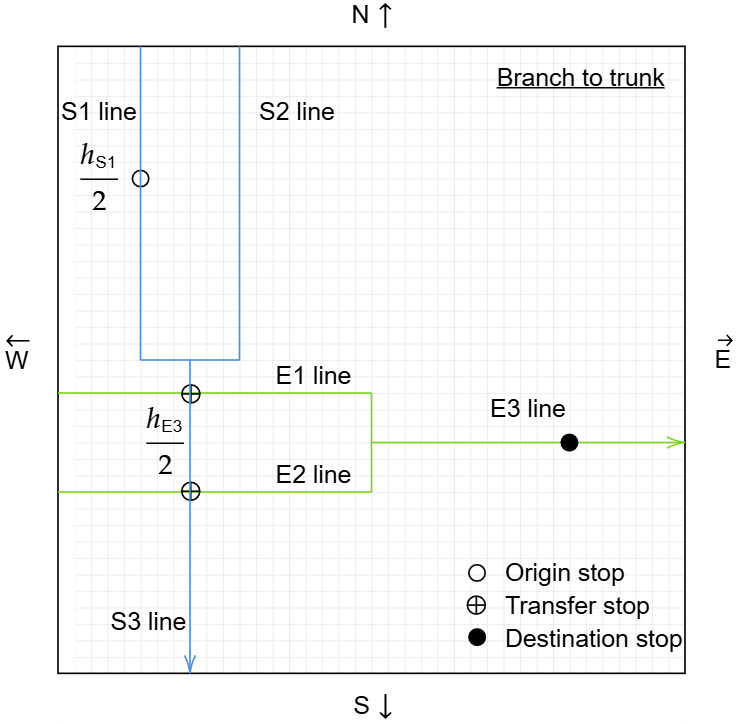}
        \label{fig_trip_patterns_b}
    }
    \subfigure[Trunk-to-branch trips.]{
        \includegraphics[width=0.45\linewidth]{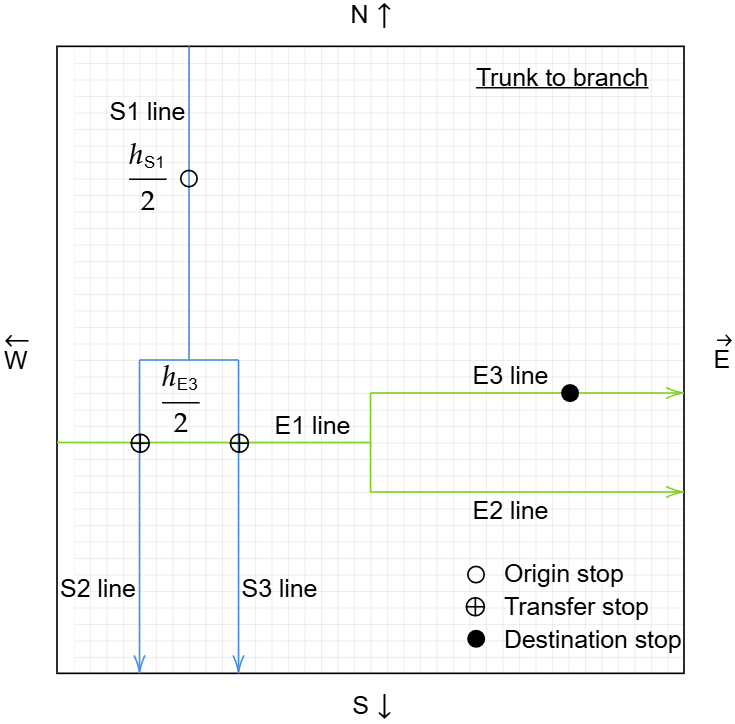}
        \label{fig_trip_patterns_c}
    }
    \subfigure[Trunk-to-trunk trips.]{
        \includegraphics[width=0.45\linewidth]{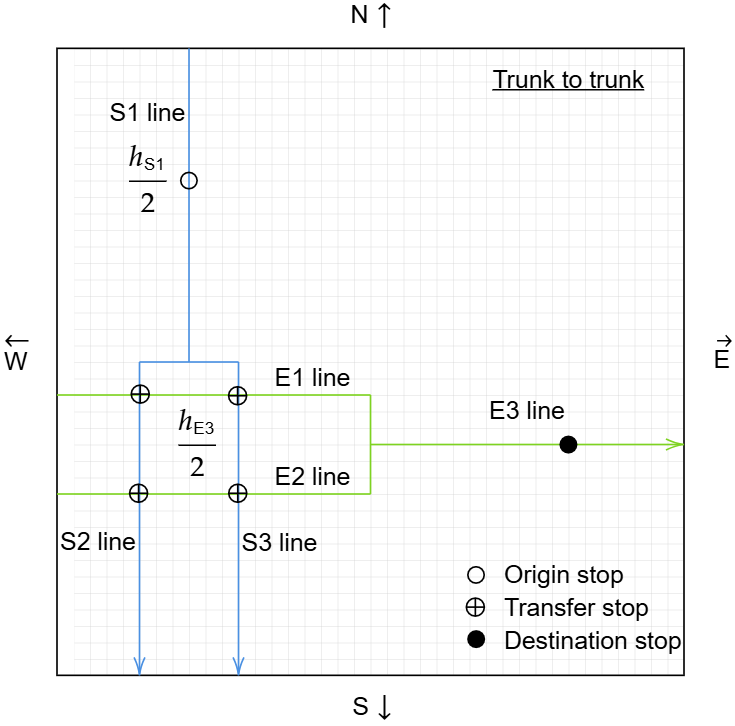}
        \label{fig_trip_patterns_d}
    }
    \caption{Typical trip patterns based on whether patrons' origins and destinations are on a branch/diverged line or a trunk/merged line.\textsuperscript{*}}
    \label{fig_trip_patterns}
    \vspace{0.1cm}
    \begin{minipage}{\linewidth}
    \footnotesize
    \textsuperscript{*}Note how patrons may transfer in these trip patterns: 
    \begin{itemize}
        \item In Figure \ref{fig_trip_patterns_a}, an average patron would board the first-arrival bus of any S1-line buses passing her origin stop, and she then would transfer at the intersected stop and board the first-arrival bus of any E3-line buses passing her destination stop. 
        \item In Figure \ref{fig_trip_patterns_b}, an average patron would experience the same at the origin stop, as in Figure \ref{fig_trip_patterns_a}, but may transfer at any of the branches of E3 lines. This is because the first-arrival bus going to her E3-line stop might appear at any E3-line branch. 
        \item In Figure \ref{fig_trip_patterns_c}, the first-arrival bus of the S1 line may carry the average patron to any of the S1-line branches, where she would board the first-arrival E3-line bus.
        \item In Figure \ref{fig_trip_patterns_d}, an average patron would exhibit boarding behaviors as in Figure \ref{fig_trip_patterns_c} and transferring behaviors as in Figure \ref{fig_trip_patterns_b}.
    \end{itemize}
    \end{minipage}
\end{figure}

Note that a trip may have multiple candidate transfer stops at all branched lines that serve the patron’s origin-destination pair; see Figures \ref{fig_trip_patterns_b}–\ref{fig_trip_patterns_d}. All of these are reasonable transfer options when the distances among branched lines are relatively short compared to the overall trip length, making the difference in the in-vehicle travel time negligible. As a result, the patron can transfer to any vehicle that will visit her destination stop. Under the first-arrival vehicle principle, she will choose the one among them that arrives first.

Since her options include all buses passing her origin and destination stops in all trip patterns, an average patron would experience half of the origin-stop headway (e.g., $ \frac{h_{\rm S1}}{2} $ in Figure \ref{fig_trip_patterns}) for the first boarding and half of the destination-stop headway ($ \frac{h_{\rm E3}}{2} $ in Figure \ref{fig_trip_patterns}) for transferring.
    
    
In other words, the average patron's wait time associated with an $i$-line stop at $\Vec{x}$ is $\frac{h_i(\Vec{x})}{2}$, either for boarding at $\Vec{x}$ or transferring from somewhere else but alighting at $\Vec{x}$. For the two cases, the numbers of patrons are $ \lambda^i_{\rm bo}(\Vec{x}) \diff \Vec{x}$ and $ \lambda^i_{\rm al}(\Vec{x}) \diff \Vec{x}$ , respectively. ({We conservatively assume that all patrons transfer to complete their trips in the grid-network city, ignoring no-transfer trips of those whose origins and destinations must align close to the same line.) Thus, the local patron wait time for $i$ lines at $\Vec{x}$ is 
\begin{align}
    \left( \lambda^i_{\rm bo}(\Vec{x}) + \lambda^i_{\rm al}(\Vec{x}) \right) \frac{h_i(\Vec{x})}{2} \diff \Vec{x}, \forall i \in \mathbf{I}, \Vec{x} \in \mathbf{R}.
\end{align}

Finally, integrating the local results of all lines for $\Vec{x} \in \mathbf{R}$ yields the total patron wait time, i.e., Eq. (\ref{eq_Tw}).
\endproof

\subsection{Network design problem}\label{sec_3_2}
Based on the above results, we quantify the performance of HetNet designs in terms of the system’s generalized cost, denoted by $Z$. This generalized cost includes two cost components, $Z_A$ and $Z_P$, experienced by the transit agency and the patrons, respectively. 

The generalized cost optimization problem is formulated as follows:
\begin{subequations} \label{Network_problem}
    \begin{align}
        \minimize_{ \{\mathrm{Q}_i, \delta_i(\Vec{x}),h_i(\Vec{x}),i \in \mathbf{I}, \Vec{x} \in \mathbf{R} \} } Z &= \frac{1}{\mu}Z_A + Z_P ,
    \end{align}
    subject to:
    \begin{align}
        & \int^{R}_{y=0} \frac{\delta_i(x,y)}{h_i(x,y)} \diff y = \mathrm{Q}_i, \forall i \in \{{\rm E,W}\}, x \in [0,R], \label{cons_flow_conservation_i}\\
        & \int^{R}_{x=0} \frac{\delta_j(x,y)}{h_j(x,y)} \diff x = \mathrm{Q}_j, \forall j \in \{{\rm N,S}\}, y \in [0,R], \label{cons_flow_conservation_j} \\
        & \lambda^i_{\rm fl}(\Vec{x})\frac{h_i(\Vec{x})}{\delta_i(\Vec{x})} \le C, \forall i \in \mathbf{I}, \Vec{x} \in \mathbf{R}, \label{cons_capacity} \\
        & \mathrm{Q}_i, \delta_i(\Vec{x}),h_i(\Vec{x})>0, \forall i \in \mathbf{I}, \Vec{x} \in \mathbf{R}, \label{cons_range}
    \end{align}
\end{subequations}
where $Z_A$ and $Z_P$ will be formulated in the next section, and $\mu$ ($\rm \$ \cdot hr^{-1}$) is the patrons' average value of time, which converts the monetary costs into the temporal unit. 

Constraints (\ref{cons_flow_conservation_i}, \ref{cons_flow_conservation_j}) indicate that the vehicle flows are conserved in the $i$ ($j$) directions, where $\mathrm{Q}_{i}$ ($\mathrm{Q}_{j}$) are scalar variables, representing the total vehicle dispatches of $i$ ($j$) lines per operation hour. 
Constraints (\ref{cons_capacity}) ensure the number of passengers onboard a transit vehicle never exceeds the vehicle capacity, $C$ ($\rm trip \cdot veh^{-1}$). 
Lastly, constraints (\ref{cons_range}) specify the valid ranges of decision functions.

\begin{remark}
    When $\delta_i(\Vec{x})$ of $i$ lines are regulated to be invariant in the operational direction (in other words, no detours is allowed), our model will reduce to lower-dimensional heterogeneous network designs (with decision functions being, e.g., $\delta_{\rm E} (y)=\delta_{\rm W} (y)$, $\delta_{\rm N} (x)=\delta_{\rm S} (x), \forall x, y \in [0, R] $, and $ \{\mathrm{Q}_i\} $ being redundant and dropped), which turns out to be P-HetNet (Chien and Schonfeld, 1997). Our model would ultimately reduce to the uniform network designs with scalar decision variables independent of locations, as in \cite{daganzoPublicTransportationSystems2019}. 
\end{remark}

\subsection{Generalized cost formulation}\label{sec_3_3}
This section presents detailed formulations for $Z_A$ and $Z_P$ in (\ref{Network_problem}).

\subsubsection{Agency cost.}
Transit agency cost, $Z_A$, is related to four network metrics: i) the line infrastructure length, $N_l$; ii) the number of stops, $N_s$; iii) the vehicle distance traveled, $N_k$; and iv) the vehicle time traveled, $N_h$. According to \cite{daganzoPublicTransportationSystems2019}, $Z_A$ can be computed by

\begin{subequations}\label{Network_ZA}
    \begin{align}
        Z_A & = \pi_l N_l + \pi_s N_s + \pi_k N_k + \pi_h N_h,
    \end{align}
    where parameters $\pi_l, \pi_s, \pi_k$, and $\pi_h$ are unit costs associated with the four metrics given below: 
    \begin{align}
        N_l & = \sum_{i\in \mathbf{I}} \iint_{\Vec{x} \in \mathbf{R}}  \left[\delta_i(\Vec{x}) + d_i(\Vec{x}) h_i(\Vec{x})\right]  \diff \Vec{x} , \label{eq_Nl} \\ 
        N_s & = \sum_{i\in \mathbf{I}} \iint_{\Vec{x} \in \mathbf{R}}  \delta_i (\Vec{x})  \delta_{\Bar{i}} (\Vec{x}) \diff \Vec{x}, \label{eq_Ns} \\
        N_k & = \sum_{i\in \mathbf{I}} \iint_{\Vec{x} \in \mathbf{R}}  \left[ \frac{\delta_i(\Vec{x})}{h_i(\Vec{x})}  + d_i(\Vec{x}) \right] \diff \Vec{x}, \label{eq_Nk} \\ 
        N_h & = \sum_{i\in \mathbf{I}} \iint_{\Vec{x} \in \mathbf{R}} \left[ \frac{\delta_i(\Vec{x})}{h_i(\Vec{x})} \left(\frac{1}{v} + \tau  \delta_{\Bar{i}} (\Vec{x})\right) + \frac{d_i(\Vec{x})}{v} \right] \diff \Vec{x}, \label{eq_Nh} 
    \end{align}
    where $\delta_{\Bar{i}}(\Vec{x})$ denotes the line density in the perpendicular direction of $i$, e.g., $\delta_{\Bar{\rm E}}(\Vec{x}) = \delta_{\rm N}(\Vec{x}) = \delta_{\rm S}(\Vec{x})$ due to network symmetry as assumed in Section \ref{sec_HetNet}.
\end{subequations}

Eq. (\ref{eq_Nl}) is straightforward, of which terms $\delta_i(\Vec{x}) \diff \Vec{x}$ and $d_i(\Vec{x})h_i(\Vec{x}) \diff \Vec{x}$ respectively account for the straight and detoured lengths of the $i$-line infrastructure. 
Similarly, Eq. (\ref{eq_Nk}) accounts for the distances traveled on straight-going vehicles and detouring vehicles.

Eq. (\ref{eq_Ns}) returns the number of stops where lines intersect in the symmetrical network.
The $\tau$ ($\rm hr$) in Eq. (\ref{eq_Nh}) is the delay per stop, assumed fixed \citep{daganzoPublicTransportationSystems2019}. (The $\tau$ can be readily modified to be variable, accounting for delays dependent on boarding and alighting demand as in \cite{fanOptimalDesignIntersecting2018a}. However, we prefer not to complicate the model formulations further in this paper as this is a secondary consideration relative to our primary research focus.) Thus, both vehicle time and delay spent in cruising and stopping are considered.

\subsubsection{Patron cost.}
Patrons’ cost $Z_P$ (in terms of hours) per operation hour is the sum of four components \citep{daganzoStructureCompetitiveTransit2010}: i) the access/egress time, $T_a$; ii) the out-of-vehicle wait time, $T_w$, i.e., Eq. (\ref{eq_Tw}); iii) the in-vehicle ride time, $T_r$; and iv) the transfer penalty, $T_t$. (Patrons' fare cost is excluded from the generalized cost since it cancels out with the transit agency's revenue.) Thus, we have

\begin{subequations}\label{Network_ZP}
    \begin{align}
        Z_P & = T_a + T_w + T_r + T_t,
    \end{align}
    and,
    \begin{align}
        T_a  &= \beta_w\sum_{i \in \mathbf{I}} \iint_{\Vec{x} \in \mathbf{R}}  \left(\lambda^{i}_{\rm bo}(\Vec{x}) + \lambda^{i}_{\rm al}(\Vec{x})\right) \left(\frac{1}{4\delta_i(\Vec{x})v_w} + \frac{1}{4\delta_{\Bar{i}}(\Vec{x})v_w}\right) \diff \Vec{x}, \label{eq_Ta} \\
        T_w &= \text{Eq. (\ref{eq_Tw})}, \\ 
        T_r &= \sum_{i \in \mathbf{I}}  \iint_{\Vec{x} \in \mathbf{R}} \lambda^{i}_{\rm fl}(\Vec{x}) \left[\left(\frac{1}{v} + \tau  \delta_{\Bar{i}} (\Vec{x}) \right)+ \frac{\alpha h_i(\Vec{x}) d_i(\Vec{x}) }{\delta_i(\vec{x}) v}\right] \diff  \Vec{x}, \label{eq_Tr} \\
        T_t &=\sigma \sum_{i \in \mathbf{I}} \iint_{\Vec{x} \in \mathbf{R}} \lambda_{\rm tr}^{i}(\Vec{x}) \diff  \Vec{x}, \label{eq_Tt}
    \end{align}
\end{subequations}
where $v_w$ ($\rm km \cdot hr^{-1}$) is patrons' average walking speed and $\sigma$ ($\rm hr \cdot transfer^{-1}$) is the penalty per transfer accounting for the walking distance between stops of perpendicular lines and patrons' aversion to transfers. Additionally, walking time is multiplied by a perceived walking time factor $\beta_w$, reflecting the higher perceived cost of walking relative to in-vehicle travel \citep{wardmanReviewBritishEvidence2001}.

Eq. (\ref{eq_Ta}) can be comprehended by noting that $ \frac{1}{4 \delta_i(\vec{x})} $ and $ \frac{1}{4 \delta_{\Bar{i}}(\vec{x})} $ are the average walk distance in $i$ and ${\Bar{i}}$ directions.

Eq. (\ref{eq_Tr}) accounts for the in-vehicle travel times in $i$ direction (i.e., the terms in the parentheses including cruising and stopping times) and the perpendicular detour times. 

The estimation of the transfer penalty in Eq. (\ref{eq_Tt}) is also conservative. When formulating $\lambda^i_{\rm tr}(\Vec{x})$ in Appendix \ref{Appen_B}, we overestimate the transfer demand by ignoring the no-transfer trips whose origins and destinations are connected by the same line.

\section{Solution method}\label{sec_solution}
To address the problem (\ref{Network_problem}), we introduce GP, a class of nonlinear and nonconvex optimization problems characterized by several favorable theoretical and computational properties. A primary advantage of GP is its transformability into an equivalent convex optimization problem via a logarithmic change of variables. Consequently, global optimal solutions can be computed efficiently and reliably, independent of initial starting points \citep{nesterovInteriorPointPolynomialAlgorithms1994}.

GP has been successfully applied in various engineering fields, including communication networks \citep{chiangPowerControlGeometric2007}, integrated circuit design \citep{boydDigitalCircuitOptimization2005}, and natural gas infrastructure networks \citep{misraOptimalCompressionNatural2015}. However, this method imposes strict requirements on the functional form of the optimization problem: {A standard GP minimizes a posynomial objective subject to posynomial inequality constraints and monomial equality constraints, where all decision variables must be strictly positive (See \citealp{avrielAdvancesGeometricProgramming1980} for formal definitions of posynomials and monomials).} 

Simplified TND formulations based on CA frequently align with these structural requirements, rendering GP a promising solution approach for CA-based optimization. This compatibility is largely evident even within our more sophisticated model, where the majority of the objective function components and constraints satisfy the above criteria, and all decision variables are naturally constrained to be positive. Nonetheless, comprehensive CA models often introduce certain structural deviations that preclude the direct application of standard GP solvers. Specifically, in our model, the vehicle flow conservation constraints in Eq. (\ref{cons_flow_conservation_i})--(\ref{cons_flow_conservation_j}) involve posynomials in equality constraints, while the vehicle detour distance in Eq. (\ref{Eq_vehicle_detour_distance_EW})--(\ref{Eq_vehicle_detour_distance_NS}) contains absolute value terms; neither conforms to the standard GP format. To address these non-standard features, we employ SGP, which iteratively approximates non-conforming functions with local monomials at each operating point, thereby transforming the original problem into a sequence of standard GP subproblems that can be solved efficiently.

\subsection{SGP Reformulation}
We first divide the city into equal-sized, square cells of size $(\Delta)^2$ and centered at $(x_n,y_{n'}), n,n'=0,1,2,..., \mathcal{N} = \frac{R}{\Delta} $.
Then, the discretized vehicle flow conservation constraints in Eq. (\ref{cons_flow_conservation_i}) become:
\begin{align}\label{cons_flow_conservation_i_discrete}
    \Delta \sum_{n' = 1}^{\mathcal{N}}{q_{i}\left( x_{n},y_{n'} \right)} = Q_{i},\quad \forall i \in \left\{ \rm E,W \right\},\forall n,n' \in \left\{ 1,2,\ldots,\mathcal{N} \right\}.
\end{align}

To reduce the number of monomials within the equality constraints, the summation in Eq. (\ref{cons_flow_conservation_i_discrete}) is replaced by a recursive subtraction form, defined as $q_{i}\left( x_{n},y_{n'} \right) = \frac{Q_{i}\left( x_{n},y_{n'} \right) - Q_{i}\left( x_{n},y_{n' - 1} \right)}{\Delta}, \forall i \in \left\{\rm E,W \right\}, \forall n \in \left\{ 1,2,\ldots,\mathcal{N} \right\}, \forall n' \in \left\{ 2,3,\ldots,\mathcal{N} \right\}$. Here $q_{i}\left( x_{n},y_{n'} \right)$ and $Q_{i}\left( x_{n},y_{n'} \right)$ are new decision variables that enter our SGP formulation. This allows for the following equivalent transformations:

\begin{subequations}\label{eq_qi_group}
    \begin{align}
        & \dfrac{\Delta \cdot q_{i}\left( x_{n},y_{n'} \right)}{Q_{i}\left( x_{n},y_{n'} \right)} + \dfrac{Q_{i}\left( x_{n},y_{n' - 1} \right)}{Q_{i}\left( x_{n},y_{n'} \right)} = 1,  \label{eq_qi_groupa} \\[8pt]
        & \dfrac{Q_{i}\left( x_{n},y_{\mathcal{N}} \right)}{Q_{i}} = 1,
        \dfrac{\Delta \cdot q_{i}\left( x_{n},y_{1} \right)}{Q_i \left( x_{n},y_{1} \right)} = 1, \\[8pt]
        & \forall i \in \left\{\rm E,W \right\},
        \forall n \in \left\{ 1,2,\ldots,\mathcal{N} \right\}, 
        \forall n' \in \left\{ 2,3,\ldots,\mathcal{N} \right\},
    \end{align}
\end{subequations}
where Eq. (\ref{eq_qi_groupa}) is expressed as a sum of two monomials. In accordance with the SGP framework, we apply first-order monomial approximation via the arithmetic–geometric inequality as follows:

\begin{subequations}\label{eq_qi_groupa_approximate}
    \begin{align}
        &\left( \dfrac{\Delta \cdot {q_{i}\left( x_{n},y_{n'} \right)}^{(m-1)}}{{Q_{i}\left( x_{n},y_{n'} \right)}^{(m-1)}} + \dfrac{{Q_{i}\left( x_{n},y_{n' - 1} \right)}^{(m-1)}}{{Q_{i}\left( x_{n},y_{n'} \right)}^{(m-1)}} \right) \times \\ \notag
        &\left( \dfrac{{Q_{i}\left( x_{n},y_{n'} \right)}^{(m-1)}q_{i}\left( x_{n},y_{n'} \right)}{{q_{i}\left( x_{n},y_{n'} \right)}^{(m-1)}Q_{i}\left( x_{n},y_{n'} \right)} \right)^{\gamma_{i}^{Q,1}\left( x_{n},y_{n'} \right)}
        \left( \dfrac{{Q_{i}\left( x_{n},y_{n'} \right)}^{(m-1)}Q_{i}\left( x_{n},y_{n' - 1} \right)}{{Q_{i}\left( x_{n},y_{n' - 1} \right)}^{(m-1)}Q_{i}\left( x_{n},y_{n'} \right)} \right)^{\gamma_{i}^{Q,2}\left( x_{n},y_{n'} \right)} = 1, \\[8pt]
        &\gamma_{i}^{Q,1}\left( x_{n},y_{n'} \right) = \dfrac{\Delta \cdot {q_{i}\left( x_{n},y_{n'} \right)}^{(m-1)}}{\Delta \cdot {q_{i}\left( x_{n},y_{n'} \right)}^{(m-1)} + {Q_{i}\left( x_{n},y_{n' - 1} \right)}^{(m-1)}}, 
        \gamma_{i}^{Q,2}\left( x_{n},y_{n'} \right) = 1 - \gamma_{i}^{Q,1}\left( x_{n},y_{n'} \right),\\[8pt]
        &\forall i \in \left\{\rm E,W \right\},
        \forall n \in \left\{ 1,2,\ldots,\mathcal{N} \right\}, 
        \forall n' \in \left\{ 2,3,\ldots,\mathcal{N} \right\},
    \end{align}
\end{subequations}
where $m$ denotes the iteration index, and $\gamma_{i}^{Q,1}, \gamma_{i}^{Q,2}$ represent the exponents of the monomial approximation. The values obtained from the previous iteration, such as $q_{i}(x_{n},y_{n'})^{(m-1)}$, are referred to as the reference points (or nominal points).

The transformation for the vehicle flow conservation constraints in Eq. (\ref{cons_flow_conservation_j}) follows a similar procedure:

\begin{subequations}\label{eq_qj_groupa_approximate}
    \begin{align}        
        &\left( \dfrac{\Delta \cdot {q_{j}\left( x_{n},y_{n'} \right)}^{(m-1)}}{{Q_{j}\left( x_{n},y_{n'} \right)}^{(m-1)}} + \dfrac{{Q_{j}\left( x_{n - 1},y_{n'} \right)}^{(m-1)}}{{Q_{j}\left( x_{n},y_{n'} \right)}^{(m-1)}} \right) \times \\ \notag
        &\left( \dfrac{{Q_{j}\left( x_{n},y_{n'} \right)}^{(m-1)}q_{j}\left( x_{n},y_{n'} \right)}{{q_{j}\left( x_{n},y_{n'} \right)}^{(m-1)}Q_{j}\left( x_{n},y_{n'} \right)} \right)^{\gamma_{j}^{Q,1}\left( x_{n},y_{n'} \right)}\left( \dfrac{{Q_{j}\left( x_{n},y_{n'} \right)}^{(m-1)}Q_{j}\left( x_{n - 1},y_{n'} \right)}{{Q_{j}\left( x_{n - 1},y_{n'} \right)}^{(m-1)}Q_{j}\left( x_{n},y_{n'} \right)} \right)^{\gamma_{j}^{Q,2}\left( x_{n},y_{n'} \right)} = 1, \\[8pt] 
        &\gamma_{j}^{Q,1}\left( x_{n},y_{n'} \right) = \dfrac{\Delta \cdot {q_{j}\left( x_{n},y_{n'} \right)}^{(m-1)}}{\Delta \cdot {q_{j}\left( x_{n},y_{n'} \right)}^{(m-1)} + {Q_{j}\left( x_{n - 1},y_{n'} \right)}^{(m-1)}}, 
        \gamma_{j}^{Q,2}\left( x_{n},y_{n'} \right) = 1 - \gamma_{j}^{Q,1}\left( x_{n},y_{n'} \right), \\[8pt]
        &\dfrac{Q_{j}\left( x_{\mathcal{N}},y_{n'} \right)}{Q_{j}} = 1,
        \dfrac{\Delta \cdot q_{j}\left( x_{1},y_{n} \right)}{Q_j \left( x_{1},y_{n} \right)} = 1, \\[8pt]
        &\forall j \in \left\{\rm N,S \right\},\forall n \in \left\{ 2,3,\ldots,\mathcal{N} \right\},\forall n' \in \left\{ 1,2,\ldots,\mathcal{N} \right\}.
    \end{align}
\end{subequations}

For the vehicle detour distance in Eq. (\ref{Eq_vehicle_detour_distance_EW}), the discretized formulation is given by:

\begin{align}\label{Eq_vehicle_detour_distance_EW_discretized}
    d_{i}\left( x_{n},y_{n'} \right) = 
    \begin{cases} 
    \left| \dfrac{Q_{i}\left( x_{n + \varsigma_i},y_{n'} \right) - Q_{i}\left( x_{n},y_{n'} \right)}{\Delta} \right|, 
    & (i,n) \in \{({\rm E},n) \mid 1 \leq n \leq \mathcal{N}-1\} \cup \{({\rm W},n) \mid 2 \leq n \leq \mathcal{N}\} \\[8pt]
    \epsilon, & \text{otherwise}
    \end{cases}
\end{align}
where $\varsigma_i$ denotes the directional step size, defined as $\varsigma_E=+1,\varsigma_W=-1$. In our SGP formulation, $d_{i}\left( x_{n},y_{n'} \right)$ are treated as independent decision variables satisfying standardized constraints. To satisfy the requirement of the standard GP that all decision variables must be strictly positive, we impose a lower bound on the vehicle detour distance by setting a small constant $\epsilon=10^{-5}$.

The presence of absolute value terms in the vehicle detour distance calculation, as shown in Eq. (\ref{Eq_vehicle_detour_distance_EW_discretized}), poses a significant challenge for optimization. To resolve this, we first perform an equivalent linearization of the absolute value operator:

\begin{align}\label{Eq_vehicle_detour_distance_EW_linearization}
    &d_{i}\left( x_{n},y_{n'} \right) \geq \max\left(\frac{Q_{i}\left( x_{n + \varsigma_i},y_{n'} \right) - Q_{i}\left( x_{n},y_{n'} \right)}{\Delta}, \frac{Q_{i}\left( x_{n},y_{n'} \right) - Q_{i}\left( x_{n + \varsigma_i},y_{n'} \right)}{\Delta}, \epsilon \right),
    \\
    &(i,n) \in \{({\rm E},n) \mid 1 \leq n \leq \mathcal{N}-1\} \cup \{({\rm W},n) \mid 2 \leq n \leq \mathcal{N}\} .
\end{align}

This formulation establishes a lower bound for $d_{i}\left( x_{n},y_{n'} \right)$. Since an increase in $d_{i}\left( x_{n},y_{n'} \right)$ results in a corresponding increase in the objective function (cost minimization), the optimal solution is guaranteed to satisfy the equality in the original absolute value expression Eq. (\ref{Eq_vehicle_detour_distance_EW_discretized}).

To comply with the standard GP inequality constraint requirements, we apply a reciprocal transformation to Eq. (\ref{Eq_vehicle_detour_distance_EW_linearization}):

\begin{subequations}\label{Eq_vehicle_detour_distance_EW_reciprocal}
    \begin{align}
        &\dfrac{Q_{i}\left( x_{n + \varsigma_i},y_{n'} \right)}{\Delta \cdot d_{i}\left( x_{n},y_{n'} \right) + Q_{i}\left( x_{n},y_{n'} \right)} \leq 1, \label{Eq_vehicle_detour_distance_EW_reciprocal_a} \\[8pt]
        &\dfrac{Q_{i}\left( x_{n},y_{n'} \right)}{\Delta \cdot d_{i}\left( x_{n},y_{n'} \right) + Q_{i}\left( x_{n + \varsigma_i},y_{n'} \right)} \leq 1, \label{Eq_vehicle_detour_distance_EW_reciprocal_b} \\[8pt] 
        &\dfrac{\epsilon}{d_{i}\left( x_{n},y_{n'} \right)} \leq 1, \\[8pt]
        &(i,n) \in \{({\rm E},n) \mid 1 \leq n \leq \mathcal{N}-1\} \cup \{({\rm W},n) \mid 2 \leq n \leq \mathcal{N}\} .
    \end{align}
\end{subequations}

Next, to convert the left-hand sides of Eq. (\ref{Eq_vehicle_detour_distance_EW_reciprocal_a}) and Eq. (\ref{Eq_vehicle_detour_distance_EW_reciprocal_b}) into monomials, we also employ first-order monomial approximation via the arithmetic–geometric inequality, following a procedure analogous to the transformation of the vehicle flow conservation equations. Consequently, the estimated vehicle detour distance for the ${\rm E,W}$ directions at the $m$-th iteration is given by:

\begin{subequations}\label{Eq_vehicle_detour_distance_EW_approximate}
    \begin{align}
        &\dfrac{Q_{i}\left( x_{n + \varsigma_i},y_{n'} \right)}
        {\left[
        \begin{aligned}
            &\left( \Delta \cdot {d_{i}\left( x_{n},y_{n'} \right)}^{(m-1)} + {Q_{i}\left( x_{n},y_{n'} \right)}^{(m-1)} \right) \times \\
            &\left( \dfrac{d_{i}\left( x_{n},y_{n'} \right)}{{d_{i}\left( x_{n},y_{n'} \right)}^{(m-1)}} \right)^{\gamma_{i}^{d,1}\left( x_{n},y_{n'} \right)}\left( \dfrac{Q_{i}\left( x_{n},y_{n'} \right)}{{Q_{i}\left( x_{n},y_{n'} \right)}^{(m-1)}} \right)^{\gamma_{i}^{d,2}\left( x_{n},y_{n'} \right)}
        \end{aligned}
        \right]
        } \leq 1, \\[8pt] 
        &\dfrac{Q_{i}\left( x_{n},y_{n'} \right)}
        {\left[
        \begin{aligned}
            &\left( \Delta \cdot {d_{i}\left( x_{n},y_{n'} \right)}^{(m-1)} + {Q_{i}\left( x_{n + \varsigma_i},y_{n'} \right)}^{(m-1)} \right) \times \\
            &\left( \dfrac{d_{i}\left( x_{n},y_{n'} \right)}{{d_{i}\left( x_{n},y_{n'} \right)}^{(m-1)}} \right)^{\gamma_{i}^{d,3}\left( x_{n},y_{n'} \right)}\left( \dfrac{Q_{i}\left( x_{n + \varsigma_i},y_{n'} \right)}{{Q_{i}\left( x_{n + \varsigma_i},y_{n'} \right)}^{(m-1)}} \right)^{\gamma_{i}^{d,4}\left( x_{n},y_{n'} \right)}
        \end{aligned}
        \right]
        } \leq 1, \\[8pt] 
        &\gamma_{i}^{d,1}\left( x_{n},y_{n'} \right) = \dfrac{\Delta \cdot {d_{i}\left( x_{n},y_{n'} \right)}^{(m-1)}}{\Delta \cdot {d_{i}\left( x_{n},y_{n'} \right)}^{(m-1)} + {Q_{i}\left( x_{n},y_{n'} \right)}^{(m-1)}}, 
        \gamma_{i}^{d,2}\left( x_{n},y_{n'} \right) = 1 - \gamma_{i}^{d,1}\left( x_{n},y_{n'} \right), \\[8pt] 
        &\gamma_{i}^{d,3}\left( x_{n},y_{n'} \right) = \dfrac{\Delta \cdot {d_{i}\left( x_{n},y_{n'} \right)}^{(m-1)}}{\Delta \cdot {d_{i}\left( x_{n},y_{n'} \right)}^{(m-1)} + {Q_{i}\left( x_{n + \varsigma_i},y_{n'} \right)}^{(m-1)}}, 
        \gamma_{i}^{d,4}\left( x_{n},y_{n'} \right) = 1 - \gamma_{i}^{d,3}\left( x_{n},y_{n'} \right), \\[8pt] 
        &\dfrac{\epsilon}{d_{i}\left( x_{n},y_{n'} \right)} \leq 1, \\[8pt]
        &(i,n) \in \{({\rm E},n) \mid 1 \leq n \leq \mathcal{N}-1\} \cup \{({\rm W},n) \mid 2 \leq n \leq \mathcal{N}\} ,
    \end{align}
\end{subequations}
where $\gamma_{i}^{d,1}, \gamma_{i}^{d,2}, \gamma_{i}^{d,3}, \gamma_{i}^{d,4}$ represent the exponents of the monomial approximation. Through an analogous derivation, the estimation for the vehicle detour distance in the ${\rm N, S}$ directions (where $\varsigma_N=+1,\varsigma_S=-1$) at the $m$-th iteration is obtained as:

\begin{subequations}\label{Eq_vehicle_detour_distance_NS_approximate}
    \begin{align}
        &\dfrac{Q_{j}\left( x_{n},y_{n' + \varsigma_j} \right)}
        {\left[
        \begin{aligned}
            &\left( \Delta \cdot {d_{j}\left( x_{n},y_{n'} \right)}^{(m-1)} + {Q_{j}\left( x_{n},y_{n'} \right)}^{(m-1)} \right) \times \\
            &\left( \dfrac{d_{j}\left( x_{n},y_{n'} \right)}{{d_{j}\left( x_{n},y_{n'} \right)}^{(m-1)}} \right)^{\gamma_{j}^{d,1}\left( x_{n},y_{n'} \right)}\left( \dfrac{Q_{j}\left( x_{n},y_{n'} \right)}{{Q_{j}\left( x_{n},y_{n'} \right)}^{(m-1)}} \right)^{\gamma_{j}^{d,2}\left( x_{n},y_{n'} \right)}
        \end{aligned}
        \right]
        } \leq 1, \\[8pt] 
        &\dfrac{Q_{j}\left( x_{n},y_{n'} \right)}
        {\left[
        \begin{aligned}
        &\left( \Delta \cdot {d_{j}\left( x_{n},y_{n'} \right)}^{(m-1)} + {Q_{j}\left( x_{n},y_{n' + \varsigma_j} \right)}^{(m-1)} \right) \times \\
        &\left( \dfrac{d_{j}\left( x_{n},y_{n'} \right)}{{d_{j}\left( x_{n},y_{n'} \right)}^{(m-1)}} \right)^{\gamma_{j}^{d,3}\left( x_{n},y_{n'} \right)}\left( \dfrac{Q_{j}\left( x_{n},y_{n' + \varsigma_j} \right)}{{Q_{j}\left( x_{n},y_{n' + \varsigma_j} \right)}^{(m-1)}} \right)^{\gamma_{j}^{d,4}\left( x_{n},y_{n'} \right)}
        \end{aligned}
        \right]
        } \leq 1, \\[8pt] 
        &\gamma_j^{d,1}\left( x_{n},y_{n'} \right) = \dfrac{\Delta \cdot {d_{j}\left( x_{n},y_{n'} \right)}^{(m-1)}}{\Delta \cdot {d_{j}\left( x_{n},y_{n'} \right)}^{(m-1)} + {Q_{j}\left( x_{n},y_{n'} \right)}^{(m-1)}}, 
        \gamma_{j}^{d,2}\left( x_{n},y_{n'} \right) = 1 - \gamma_{j}^{d,1}\left( x_{n},y_{n'} \right), \\[8pt] 
        &\gamma_{j}^{d,3}\left( x_{n},y_{n'} \right) = \dfrac{\Delta \cdot {d_{j}\left( x_{n},y_{n'} \right)}^{(m-1)}}{\Delta \cdot {d_{j}\left( x_{n},y_{n'} \right)}^{(m-1)} + {Q_{j}\left( x_{n},y_{n' + \varsigma_j} \right)}^{(m-1)}}, 
        \gamma_{j}^{d,4}\left( x_{n},y_{n'} \right) = 1 - \gamma_{j}^{d,3}\left( x_{n},y_{n'} \right), \\[8pt] 
        &\dfrac{\epsilon}{d_{j}\left( x_{n},y_{n'} \right)} \leq 1, \\[8pt]
        &(j,n) \in \{({\rm N},n) \mid 1 \leq n \leq \mathcal{N}-1\} \cup \{({\rm S},n) \mid 2 \leq n \leq \mathcal{N}\} .
    \end{align}
\end{subequations}

It should be noted that, in this section, monomial approximation is applied only to those constraints in the original problem that do not satisfy the standard GP form. The remaining components in the objective function and constraints are retained in their original form, since they already conform to the structural requirements of standard GP, such as $q_i(x_n,y_{n'})=\delta_i(x_n,y_{n'})/h_i(x_n,y_{n'})$.

\subsection{Solution algorithm}
To improve the numerical robustness of the SGP procedure, we further introduce multiplicative trust-region constraints around the current accepted iterate. At the $m$-th accepted iteration, the trial variables in the GP subproblem are restricted by
\begin{subequations}\label{eq_trust_region}
    \begin{align}
        &\frac{1}{\omega_{\delta}^{(m)}} \le
        \frac{\widetilde{\delta}_i(x_n,y_{n'})}{\delta_i(x_n,y_{n'})^{(m)}}
        \le \omega_{\delta}^{(m)}, \\[6pt]
        &\frac{1}{\omega_{h}^{(m)}} \le
        \frac{\widetilde{h}_i(x_n,y_{n'})}{h_i(x_n,y_{n'})^{(m)}}
        \le \omega_{h}^{(m)}, \\[6pt]
        &\forall i \in \mathbf{I}, \forall n,n' \in \{1,2,\ldots,\mathcal{N}\},
    \end{align}
\end{subequations}
where $(\widetilde{\delta},\widetilde{h},\widetilde{q},\widetilde{Q},\widetilde{d})$ denotes the trial solution returned by the current GP subproblem. $\omega_{\delta}^{(m)} > 1$ and $\omega_{h}^{(m)} > 1$ denote the trust-region radii for network density and headway, respectively. These constraints limit the relative change of decision variables between two consecutive accepted iterates while preserving the standard GP structure, because each bound can be equivalently written as a monomial inequality.

After solving each GP subproblem, we evaluate the original discrete flow-conservation residual ($\nu^{(m)}$) of the trial solution,
\begin{equation}\label{eq_trust_region_violation}
    \nu^{(m)} =
    \max \left\{
    \max_{\substack{i \in \{\rm E,W\}\\ n \in \{1,\ldots,\mathcal{N}\}}}
    \left|
    \frac{\Delta \sum_{n'=1}^{\mathcal{N}} \widetilde{q}_i(x_n,y_{n'})}
    {\widetilde{Q}_i} - 1
    \right|,
    \max_{\substack{j \in \{\rm N,S\}\\ n' \in \{1,\ldots,\mathcal{N}\}}}
    \left|
    \frac{\Delta \sum_{n=1}^{\mathcal{N}} \widetilde{q}_j(x_n,y_{n'})}
    {\widetilde{Q}_j} - 1
    \right|
    \right\},
\end{equation}
The steps of the trust-region-enhanced SGP algorithm are as follows.
 
\begin{steps}[start = 0]
     \item Initialization. Set $m=0$. We randomly generate a feasible initial point $\{\delta_{i}(x_n,y_{n'})^{(0)}, h_i(x_n,y_{n'})^{(0)}\}$ and set the initial trust-region radii $\omega_{\delta}^{(0)} > 1$ and $\omega_{h}^{(0)} > 1$. We also choose a feasibility tolerance $\varepsilon_{\rm fea}$, an expansion factor $\kappa_{\rm inc} > 1$, a contraction factor $\kappa_{\rm dec} \in (0,1)$, and upper bounds $\bar{\omega}_{\delta}$ and $\bar{\omega}_{h}$.
    \item GP Subproblem Construction. Using the current accepted iterate $\{\delta_i(x_n,y_{n'})^{(m)}, h_i(x_n,y_{n'})^{(m)}\}$ as reference points, all auxiliary variables $\{q_i^{(m)}, Q_i^{(m)}, d_i^{(m)}\}$ are updated, and the exponents of the local monomial approximations $\{\gamma_i^{Q,1 \dots 2}, \gamma_i^{d,1 \dots 4}\}$ for all $i \in \mathbf{I}$ are recalculated. The trust-region constraints in Eq.~(\ref{eq_trust_region}) are then added, and the HetNet model is reformulated into a standard GP subproblem.
    \item GP Subproblem Solution. We solve the reformulated GP and obtain a trial solution $\{\widetilde{\delta}_{i}(x_n,y_{n'}), \widetilde{h}_{i}(x_n,y_{n'}), \widetilde{q}_{i}(x_n,y_{n'}), \widetilde{Q}_{i}, \widetilde{d}_{i}(x_n,y_{n'})\}$.
    \item Feasibility and trust-region update. We evaluate the original objective value $\widetilde{Z}$ and the flow-conservation residual $\nu^{(m)}$ from Eq.~(\ref{eq_trust_region_violation}). If $\nu^{(m)} \le \varepsilon_{\rm fea}$ and $\widetilde{Z} \le Z^{(m)}$, the trial solution is accepted and we set
    \begin{subequations}
    \begin{align}
        \delta_{i}(x_n,y_{n'})^{(m+1)} &= \widetilde{\delta}_{i}(x_n,y_{n'}), \\
        h_{i}(x_n,y_{n'})^{(m+1)} &= \widetilde{h}_{i}(x_n,y_{n'}), \\
        Z^{(m+1)} &= \widetilde{Z}, \\
        \omega_{\delta}^{(m+1)} &= \min \left\{ \kappa_{\rm inc}\omega_{\delta}^{(m)}, \bar{\omega}_{\delta} \right\}, \\
        \omega_{h}^{(m+1)} &= \min \left\{ \kappa_{\rm inc}\omega_{h}^{(m)}, \bar{\omega}_{h} \right\}.
    \end{align}
    \end{subequations}
    Otherwise, the trial solution is rejected, the accepted iterate remains unchanged. Note that intermediate iterates that are infeasible to the original problem are common and generally acceptable in SGP procedures based on local approximations, see \cite{boydTutorialGeometricProgramming2007}; our trust-region-enhanced algorithm instead updates the approximation model and progressively steers the accepted iterates toward feasibility and convergence. The trust-region radii are contracted as
    \begin{subequations}
    \begin{align}
        \omega_{\delta}^{(m)} &\leftarrow 1 + \kappa_{\rm dec}\left(\omega_{\delta}^{(m)} - 1\right), \\
        \omega_{h}^{(m)} &\leftarrow 1 + \kappa_{\rm dec}\left(\omega_{h}^{(m)} - 1\right).
    \end{align}
    \end{subequations}
    \item Convergence check. After an accepted update, stop and report the results if the following criterion is satisfied; otherwise, set $m = m+1$ and return to Step 1.

    \begin{subequations}
    \begin{align}
        \left| \frac{\mathrm{Z}^{(m+1)} - \mathrm{Z}^{(m)}}
        {\mathrm{Z}^{(m)}} \right| &\le \xi , \\
        \frac{1}{2\mathcal{N}^2}
        \left(
        \sum_{i \in I,n,n' \in \mathcal{N}}
        \left| \frac{\delta_{i}(x_n,y_{n'})^{(m+1)}
        - \delta_{i}(x_n,y_{n'})^{(m)}}
        {\delta_{i}(x_n,y_{n'})^{(m)}} \right|
        \right. \notag\\
        \left.
        \qquad + \sum_{i \in I,n,n' \in \mathcal{N}}
        \left| \frac{h_{i}(x_n,y_{n'})^{(m+1)}
        - h_{i}(x_n,y_{n'})^{(m)}}
        {h_{i}(x_n,y_{n'})^{(m)}} \right|
        \right)
        &\le \xi .
    \end{align}
    \end{subequations}
    
    where $\xi$ is a pre-specified small value. In practice, $\omega_{\delta}^{(0)} = \omega_{h}^{(0)} = 1.1$, $\kappa_{\rm inc} = 1.2$, $\kappa_{\rm dec} = 0.5$, $\bar{\omega}_{\delta} = \bar{\omega}_{h} = 2$, and $\varepsilon_{\rm fea} = \xi = 10^{-3}$ can be used as default settings. 
\end{steps}

The solution quality of the converged results is confirmed via multiple runs with random initial inputs. We reckon that a solution of good quality is attained when multiple runs always converge to the same results. Step 2 is performed using the commercial GP solver MOSEK. The entire solution algorithm is executed on a desktop computer with a 3.8 GHz CPU (AMD Ryzen 7 9700X) and 16 GB of memory.

\section{Numerical case studies}\label{sec_numerical} 
This section demonstrates the proposed models and solution algorithm in a square city with a dense grid street network. The first subsection presents the demand patterns and parameter values used in the subsequent numerical case studies. Section \ref{sec_Model_accuracy_validation} discretizes the continuous optimization results onto practical transit networks, evaluates the approximation error of the proposed CA model, and examines how the endogenous parameter $\alpha$ affects the optimization results. Section \ref{sec_optimal_design} compares our optimal HetNet design with the optimal designs developed by three benchmark models, i.e., the H-HetNet model \citep{ouyangContinuumApproximationApproach2014}, the P-HetNet model \citep{chienOptimizationGridTransit1997}, and the Homogeneous Network (HomNet) model, of which the latter two serve as two special cases of the HetNet model. Since total patron demand, patrons' average value of time, and city size are three key features that characterize different cities \citep{daganzoStructureCompetitiveTransit2010}, Section \ref{sec_parametric_analysis} varies $D$, $\mu$, and $R$ to examine how the optimization results differ across urban contexts and demonstrate the advantages and robustness of the proposed approach.

\subsection{Set-up} \label{sec_set_up}
The heterogeneity of demand can be characterized with respect to various dimensions, such as demand density, trip distance, and travel directions. In this study, we first adopt the monocentric and commute demand density functions from \cite{ouyangContinuumApproximationApproach2014}. However, these two patterns remain relatively smooth and exhibit low heterogeneity in the present testbed. To examine network-design performance under sharper local demand imbalances and multiple separated high-demand areas, we further introduce four checkerboard demand patterns (termed Checkerboard 1-4), making a total of six demand patterns as illustrated in Figure \ref{fig_ouyang_demand}. Together, these six patterns allow us to compare HetNet with the benchmark networks across both low- and high-heterogeneity regimes. The detailed generation procedures for these heterogeneous demand distributions are provided in Appendix \ref{Appen_C}.

\begin{figure}[htbp]
    \centering
    \subfigure[Monocentric demand]{
        \includegraphics[width=0.45\linewidth]{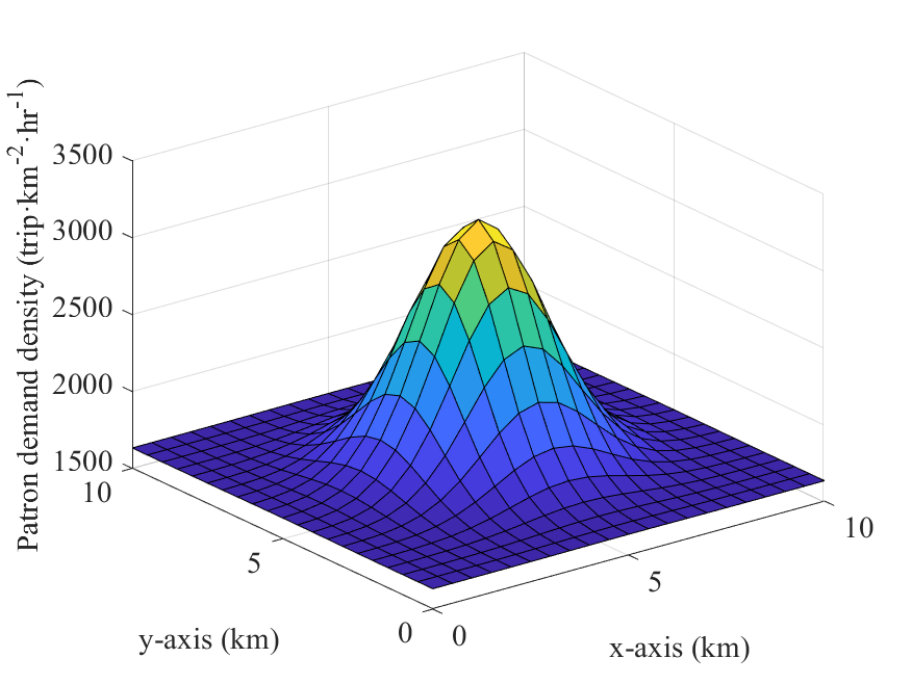}
        \label{Fig_monocentric_demand}
    }
    \subfigure[Commute demand]{
        \includegraphics[width=0.45\linewidth]{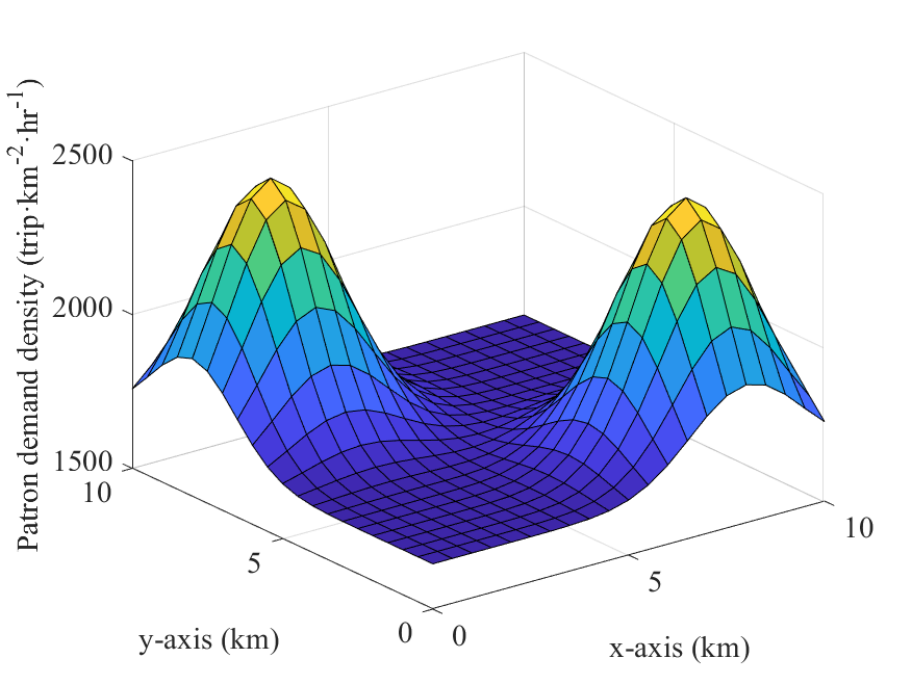}
        \label{Fig_commute_demand}
    }
    \subfigure[Checkerboard 1 demand]{
        \includegraphics[width=0.45\linewidth]{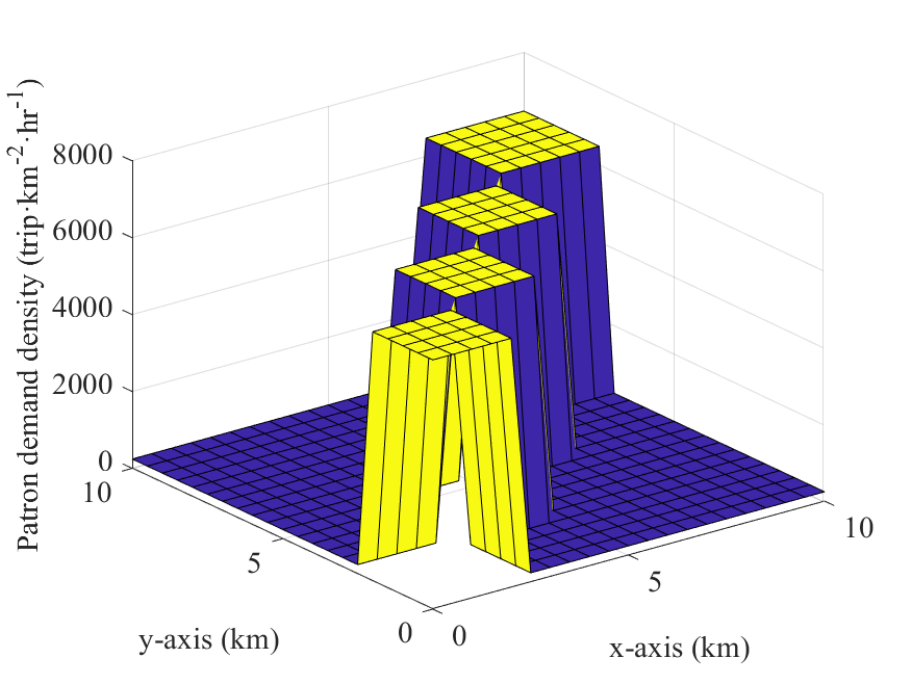}
        \label{Fig_checkerboard0_demand}
    }
    \subfigure[Checkerboard 2 demand]{
        \includegraphics[width=0.45\linewidth]{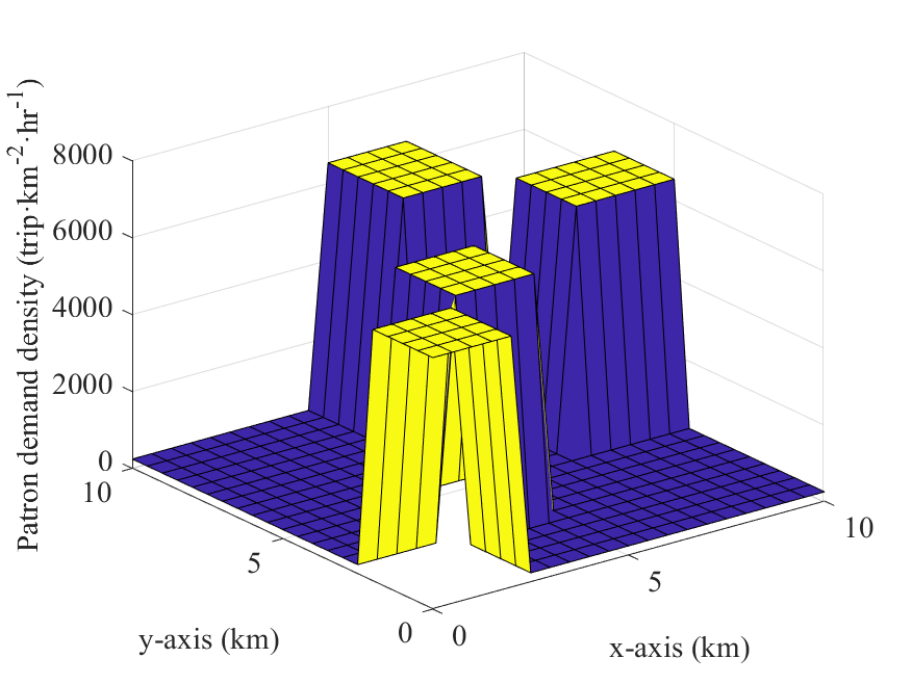}
        \label{Fig_checkerboard1_demand}
    }
    \subfigure[Checkerboard 3 demand]{
        \includegraphics[width=0.45\linewidth]{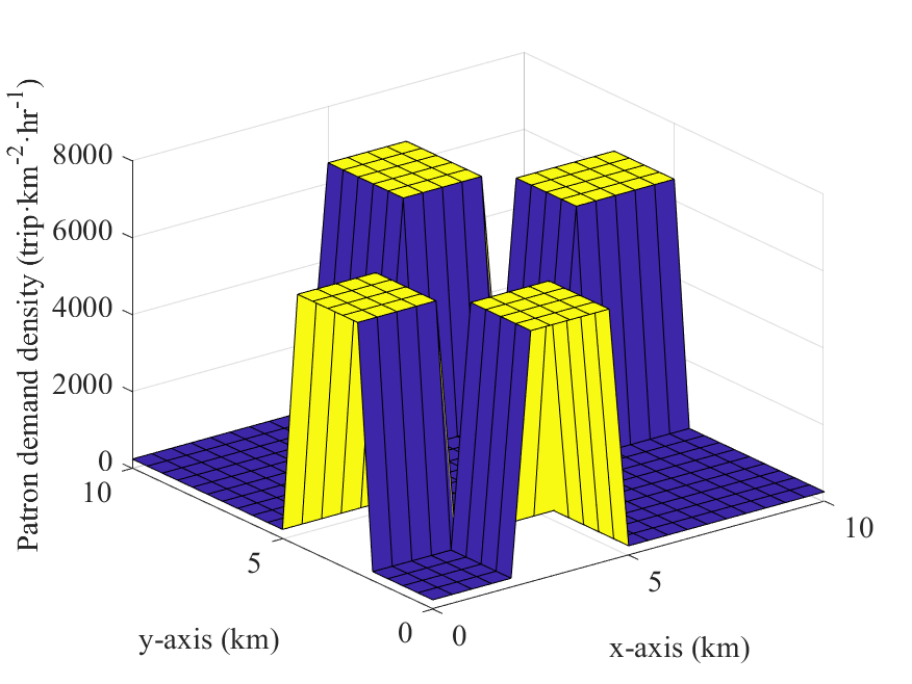}
        \label{Fig_checkerboard2_demand}
    }
    \subfigure[Checkerboard 4 demand]{
        \includegraphics[width=0.45\linewidth]{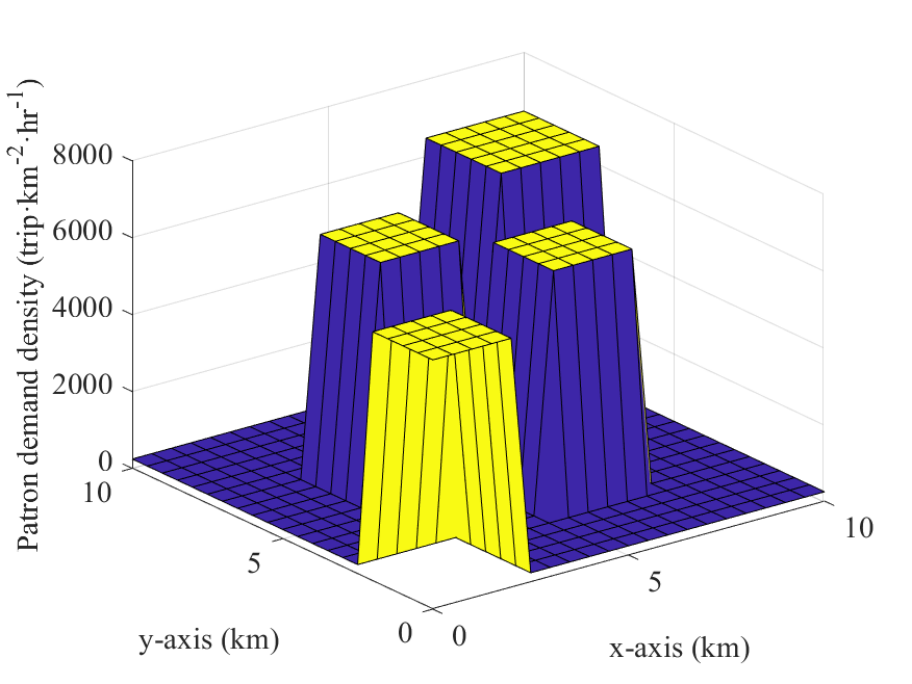}
        \label{Fig_checkerboard3_demand}
    }
    \caption{Boarding and alighting demand density ($D=100,000$ $\rm trip \cdot hr^{-1}$).}
    \label{fig_ouyang_demand}
\end{figure}

Unless otherwise specified, the default values are set as follows: total patron demand $D=100{,}000$ ($\rm trip \cdot hr^{-1}$), value of time $\mu = 25$ ($\rm \$ \cdot hr^{-1}$), city side length $R=10$ ($\rm km$), and patron detour impact factor $\alpha = 0.5$. Other parameter values are borrowed from \cite{ouyangContinuumApproximationApproach2014} for bus transit systems, including $C = 80$ ($\rm trip \cdot veh^{-1}$), $v = 25$ ($\rm km \cdot hr^{-1}$), $v_w = 2$ ($\rm km \cdot hr^{-1}$), $\tau = \frac{30}{3600}$ ($\rm hr \cdot stop^{-1}$), $\sigma = \frac{60}{3600}$ ($\rm hr \cdot transfer^{-1}$), and the parameters in agency cost metrics, $\pi_l=\pi_s=0$, $\pi_k = 2$ ($\rm \$ \cdot veh^{-1} \cdot km^{-1}$), $\pi_h = 40$ ($\rm \$ \cdot veh^{-1} \cdot hr^{-1}$).

In addition to the parameters adopted from the literature, several new parameters are introduced in this study. The perceived walking time factor is $\beta_w = 2$, and the spatial discretization parameter is $\Delta = 0.5$ ($\rm km$), reflecting typical average street block lengths observed in urban areas.

To measure how the degree of heterogeneity ($\eta$) of different demand patterns affects the optimal design, we quantify it using a distance-weighted Jensen–Shannon (JS) divergence, ranging from 0 to 1, with values closer to 1 indicating stronger spatial heterogeneity. The JS divergence measures structural dissimilarity between probability distributions by comparing each distribution with the average distribution, making it suitable for characterizing demand heterogeneity \citep{huangMobilityNetworkApproach2018}. We further incorporate a demand-distance weighting scheme to better capture the burden imposed by spatial demand patterns on the network \citep{liuExploringSpatiallyHeterogeneous2022}. The detailed computation procedure of $\eta$ is provided in Appendix \ref{Appen_D}.

\subsection{Model accuracy validation}\label{sec_Model_accuracy_validation}
\subsubsection{Network discretization.}\label{sec_network_discretization}
To validate the accuracy of the proposed CA model in estimating generalized costs, the continuous optimization results are discretized onto actual transit networks. Following the assumptions in Section \ref{sec_HetNet}, patron demand is assigned to the nearest boarding and alighting stops, and the shortest-path algorithm is used to determine individual travel routes. Based on these routes, the exact generalized cost and its individual components are calculated. The details of the network discretization procedure are provided in Appendix \ref{Appen_F}. This evaluation result is reported in Table \ref{tab_network_discretization_alpha_0.5}.  

\begin{table}[!ht]
\centering
\caption{Comparison between CA optimization results and corresponding discretized transit networks ($\alpha=0.5$).}
\label{tab_network_discretization_alpha_0.5}
\resizebox{\columnwidth}{!}{%
\renewcommand{\arraystretch}{1.3}
\begin{tabular}{|c|ccc|ccc|ccc|}
\hline
\multirow{2}{*}{\textbf{\begin{tabular}[c]{@{}c@{}}Cost term\\ ($\rm min \cdot trip^{-1}$)\end{tabular}}} & \multicolumn{3}{c|}{\textbf{\begin{tabular}[c]{@{}c@{}}Monocentric demand\\ $\eta=0.0439$\end{tabular}}}    & \multicolumn{3}{c|}{\textbf{\begin{tabular}[c]{@{}c@{}}Commute demand\\ $\eta=0.0792$\end{tabular}}}        & \multicolumn{3}{c|}{\textbf{\begin{tabular}[c]{@{}c@{}}Checkerboard 1 demand\\ $\eta=0.5486$\end{tabular}}} \\ \cline{2-10} 
                                                                                                          & \multicolumn{1}{c|}{\textbf{CA}}     & \multicolumn{1}{c|}{\textbf{Discretization}}    & \textbf{Gap}       & \multicolumn{1}{c|}{\textbf{CA}}     & \multicolumn{1}{c|}{\textbf{Discretization}}    & \textbf{Gap}       & \multicolumn{1}{c|}{\textbf{CA}}     & \multicolumn{1}{c|}{\textbf{Discretization}}    & \textbf{Gap}       \\ \hline
$N_k$                                                                                                     & \multicolumn{1}{c|}{1.41}            & \multicolumn{1}{c|}{1.51}                       & 6.62\%             & \multicolumn{1}{c|}{1.40}            & \multicolumn{1}{c|}{1.56}                       & 10.26\%            & \multicolumn{1}{c|}{1.62}            & \multicolumn{1}{c|}{1.75}                       & 7.43\%             \\ \hline
$N_h$                                                                                                     & \multicolumn{1}{c|}{2.05}            & \multicolumn{1}{c|}{2.11}                       & 2.84\%             & \multicolumn{1}{c|}{2.03}            & \multicolumn{1}{c|}{2.14}                       & 5.14\%             & \multicolumn{1}{c|}{2.06}            & \multicolumn{1}{c|}{2.12}                       & 2.83\%             \\ \hline
$T_a$                                                                                                     & \multicolumn{1}{c|}{16.04}           & \multicolumn{1}{c|}{14.14}                      & 11.85\%            & \multicolumn{1}{c|}{16.40}           & \multicolumn{1}{c|}{14.67}                      & 10.55\%            & \multicolumn{1}{c|}{13.79}           & \multicolumn{1}{c|}{14.02}                      & 1.64\%             \\ \hline
$T_w$                                                                                                     & \multicolumn{1}{c|}{3.46}            & \multicolumn{1}{c|}{3.47}                       & 0.29\%             & \multicolumn{1}{c|}{3.44}            & \multicolumn{1}{c|}{3.56}                       & 3.37\%             & \multicolumn{1}{c|}{3.53}            & \multicolumn{1}{c|}{3.31}                       & 6.23\%             \\ \hline
$T_r$                                                                                                     & \multicolumn{1}{c|}{27.32}           & \multicolumn{1}{c|}{28.09}                      & 2.74\%             & \multicolumn{1}{c|}{29.23}           & \multicolumn{1}{c|}{29.37}                      & 0.48\%             & \multicolumn{1}{c|}{26.81}           & \multicolumn{1}{c|}{26.70}                      & 0.41\%             \\ \hline
$T_t$                                                                                                     & \multicolumn{1}{c|}{0.90}            & \multicolumn{1}{c|}{0.92}                       & 2.17\%             & \multicolumn{1}{c|}{0.91}            & \multicolumn{1}{c|}{0.84}                       & 7.69\%             & \multicolumn{1}{c|}{0.91}            & \multicolumn{1}{c|}{0.92}                       & 1.09\%             \\ \hline
$Z$                                                                                                       & \multicolumn{1}{c|}{51.18}           & \multicolumn{1}{c|}{50.24}                      & \textbf{1.84\%}    & \multicolumn{1}{c|}{53.41}           & \multicolumn{1}{c|}{52.14}                      & \textbf{2.38\%}    & \multicolumn{1}{c|}{48.72}           & \multicolumn{1}{c|}{48.82}                      & \textbf{0.20\%}    \\ \hline
\multirow{2}{*}{\textbf{\begin{tabular}[c]{@{}c@{}}Cost term\\ ($\rm min \cdot trip^{-1}$)\end{tabular}}} & \multicolumn{3}{c|}{\textbf{\begin{tabular}[c]{@{}c@{}}Checkerboard 2 demand\\ $\eta=0.5504$\end{tabular}}} & \multicolumn{3}{c|}{\textbf{\begin{tabular}[c]{@{}c@{}}Checkerboard 3 demand\\ $\eta=0.5524$\end{tabular}}} & \multicolumn{3}{c|}{\textbf{\begin{tabular}[c]{@{}c@{}}Checkerboard 4 demand\\ $\eta=0.5522$\end{tabular}}} \\ \cline{2-10} 
                                                                                                          & \multicolumn{1}{c|}{\textbf{CA}}     & \multicolumn{1}{c|}{\textbf{Discretization}}    & \textbf{Gap}       & \multicolumn{1}{c|}{\textbf{CA}}     & \multicolumn{1}{c|}{\textbf{Discretization}}    & \textbf{Gap}       & \multicolumn{1}{c|}{\textbf{CA}}     & \multicolumn{1}{c|}{\textbf{Discretization}}    & \textbf{Gap}       \\ \hline
$N_k$                                                                                                     & \multicolumn{1}{c|}{1.59}            & \multicolumn{1}{c|}{1.73}                       & 8.09\%             & \multicolumn{1}{c|}{1.58}            & \multicolumn{1}{c|}{1.74}                       & 9.20\%             & \multicolumn{1}{c|}{1.59}            & \multicolumn{1}{c|}{1.76}                       & 9.66\%             \\ \hline
$N_h$                                                                                                     & \multicolumn{1}{c|}{2.04}            & \multicolumn{1}{c|}{2.13}                       & 4.23\%             & \multicolumn{1}{c|}{2.03}            & \multicolumn{1}{c|}{2.12}                       & 4.25\%             & \multicolumn{1}{c|}{2.03}            & \multicolumn{1}{c|}{2.13}                       & 4.69\%             \\ \hline
$T_a$                                                                                                     & \multicolumn{1}{c|}{13.67}           & \multicolumn{1}{c|}{13.80}                      & 0.94\%             & \multicolumn{1}{c|}{13.66}           & \multicolumn{1}{c|}{13.61}                      & 0.37\%             & \multicolumn{1}{c|}{13.76}           & \multicolumn{1}{c|}{14.13}                      & 2.62\%             \\ \hline
$T_w$                                                                                                     & \multicolumn{1}{c|}{3.49}            & \multicolumn{1}{c|}{3.40}                       & 2.58\%             & \multicolumn{1}{c|}{3.48}            & \multicolumn{1}{c|}{3.44}                       & 1.15\%             & \multicolumn{1}{c|}{3.48}            & \multicolumn{1}{c|}{3.37}                       & 3.16\%             \\ \hline
$T_r$                                                                                                     & \multicolumn{1}{c|}{26.27}           & \multicolumn{1}{c|}{26.11}                      & 0.61\%             & \multicolumn{1}{c|}{26.24}           & \multicolumn{1}{c|}{26.35}                      & 0.42\%             & \multicolumn{1}{c|}{26.89}           & \multicolumn{1}{c|}{26.90}                      & 0.04\%             \\ \hline
$T_t$                                                                                                     & \multicolumn{1}{c|}{0.91}            & \multicolumn{1}{c|}{0.92}                       & 1.09\%             & \multicolumn{1}{c|}{0.91}            & \multicolumn{1}{c|}{0.91}                       & 0.00\%             & \multicolumn{1}{c|}{0.91}            & \multicolumn{1}{c|}{0.92}                       & 1.09\%             \\ \hline
$Z$                                                                                                       & \multicolumn{1}{c|}{47.97}           & \multicolumn{1}{c|}{48.09}                      & \textbf{0.25\%}    & \multicolumn{1}{c|}{47.90}           & \multicolumn{1}{c|}{48.17}                      & \textbf{0.56\%}    & \multicolumn{1}{c|}{48.66}           & \multicolumn{1}{c|}{49.21}                      & \textbf{1.12\%}    \\ \hline
\end{tabular}
}
\end{table}

The results show that the differences between the generalized costs of the CA model and the discretized network, which reflect the approximation error of our CA approach are minimal across all demand patterns, with deviations generally below 2.5\%. The discrepancies in individual cost components are moderately larger but also limited, with the majority remaining within 12\%. As is common in CA transit network design models (see \citealp{daganzoPublicTransportationSystems2019}), these component-level deviations partly offset one another when aggregated into the generalized cost, resulting in the smaller deviations observed for $Z$. These results indicate that the proposed CA model can reliably approximate the operational characteristics and generalized costs of realistic transit networks while maintaining analytical tractability.

It is worth noting that the gap in the access time component $T_a$ is occasionally larger than that of other cost components. This difference mainly arises from the continuous approximation adopted in the CA model. In the CA formulation, access distance is approximated as a continuous average based on route spacing, whereas in the discretized network, patrons are assigned to the nearest actual stops. Such discrete stop assignments can shorten or lengthen individual access distances depending on the local network geometry, leading to slightly larger deviations in $T_a$. Nevertheless, since access time represents only one component of the total generalized cost and the deviations remain within a moderate range, the overall impact on the generalized cost $Z$ remains very limited.

Figure~\ref{fig_network_discretization_EW} presents the discretized transit network structures in the ${\rm E}$ and ${\rm W}$ directions under the six demand patterns. For brevity, the N--S lines are not shown, as they display similar spatial patterns; that is, line density generally increases in high-demand-density areas and decreases in low-demand-density areas. 

The results show that the proposed transit network design responds effectively to spatially heterogeneous demand distributions (see Figure~\ref{fig_ouyang_demand}). In areas with stronger demand intensity, transit lines become more closely spaced, enabling service capacity to be allocated more efficiently and reducing overall operating costs. By contrast, the monocentric and commute demand patterns exhibit relatively low heterogeneity ($\eta$ = 0.0439, 0.0792), implying that demand is more evenly distributed across the study area. In such cases, a relatively homogeneous network configuration naturally emerges as the optimal design.

\begin{figure}[htbp]
    \centering
    \subfigure[Monocentric demand]{
        \includegraphics[width=0.45\linewidth]{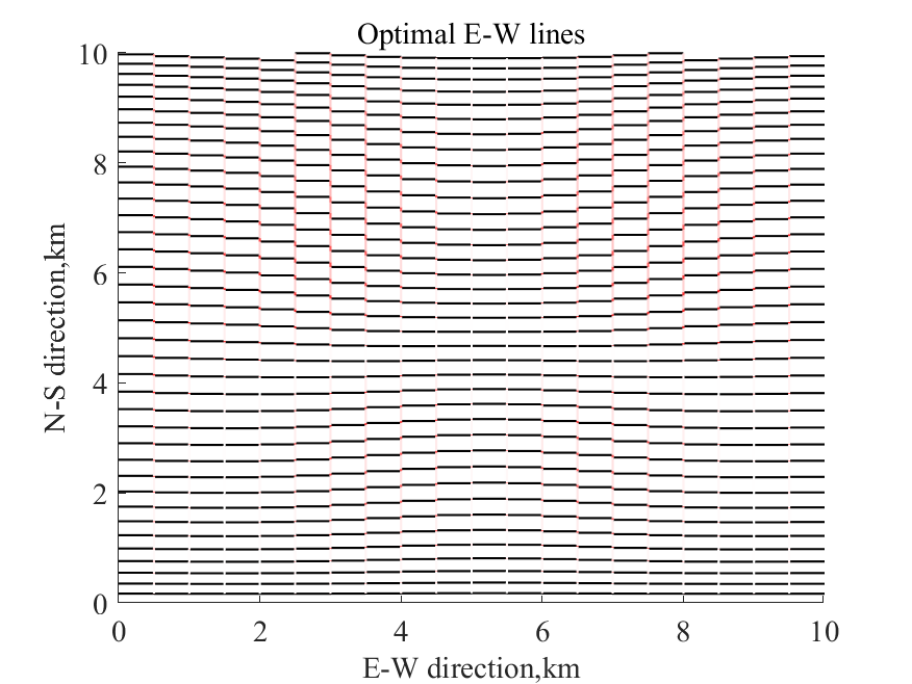}
    }
    \subfigure[Commute demand]{
        \includegraphics[width=0.45\linewidth]{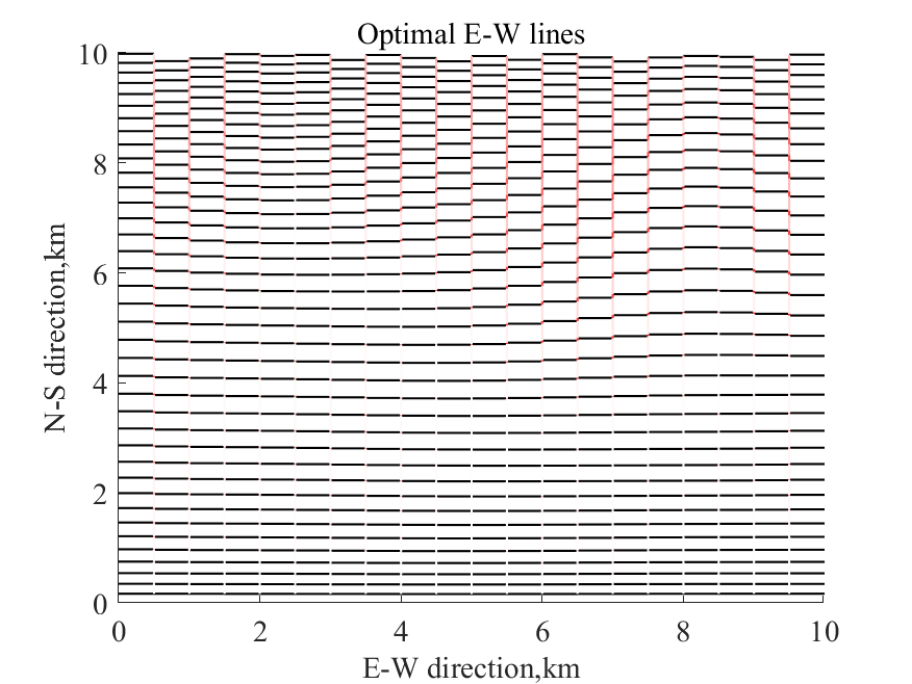}
    }
    \subfigure[Checkerboard 1 demand]{
        \includegraphics[width=0.45\linewidth]{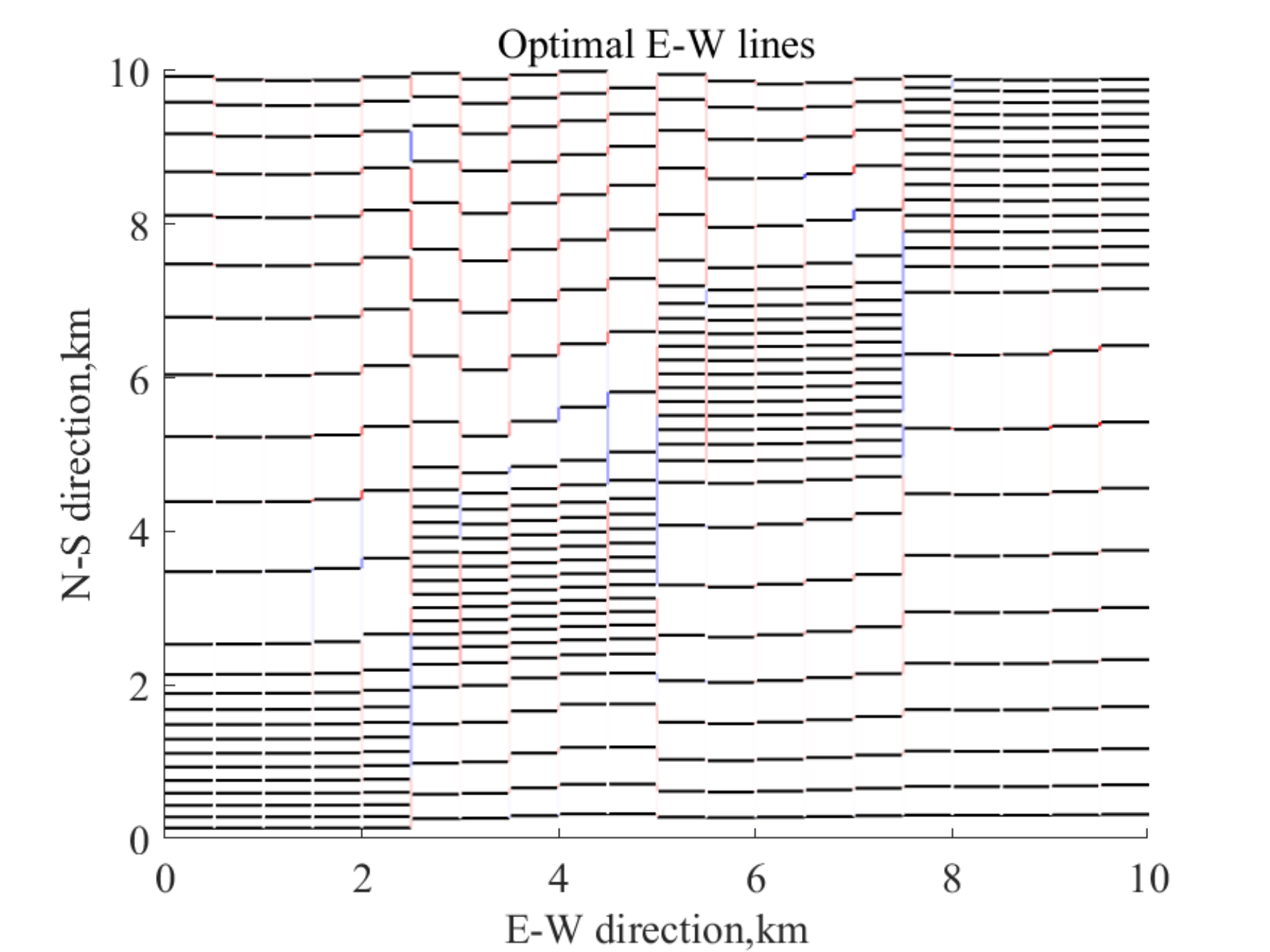}
    }
    \subfigure[Checkerboard 2 demand]{
        \includegraphics[width=0.45\linewidth]{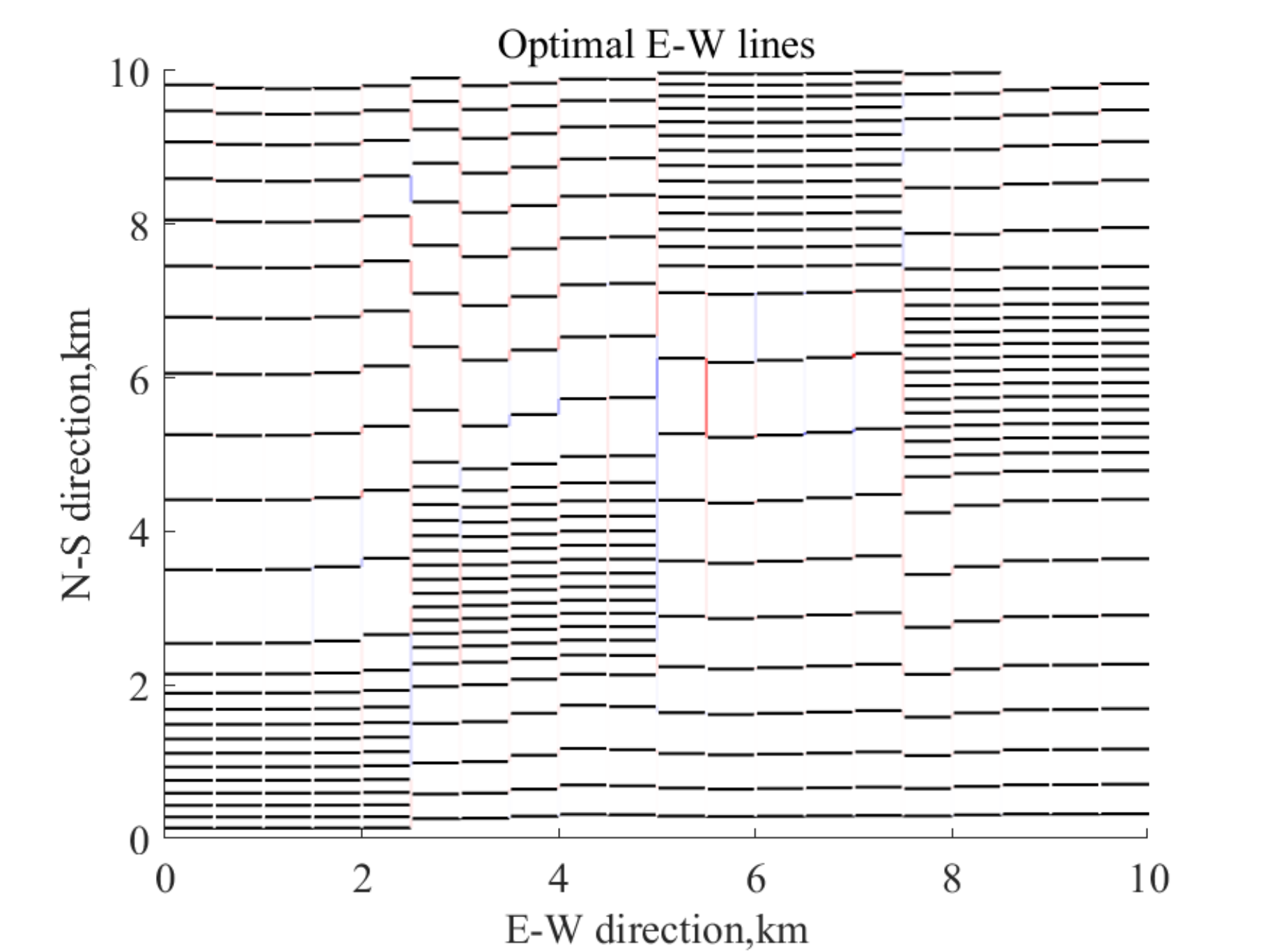}
    }
    \subfigure[Checkerboard 3 demand]{
        \includegraphics[width=0.45\linewidth]{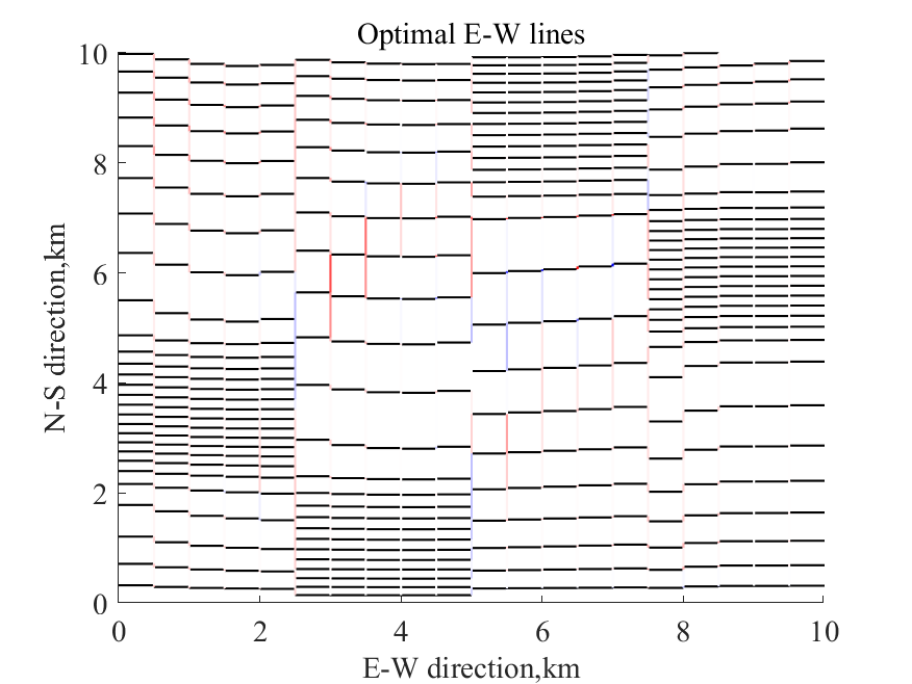}
    }
    \subfigure[Checkerboard 4 demand]{
        \includegraphics[width=0.45\linewidth]{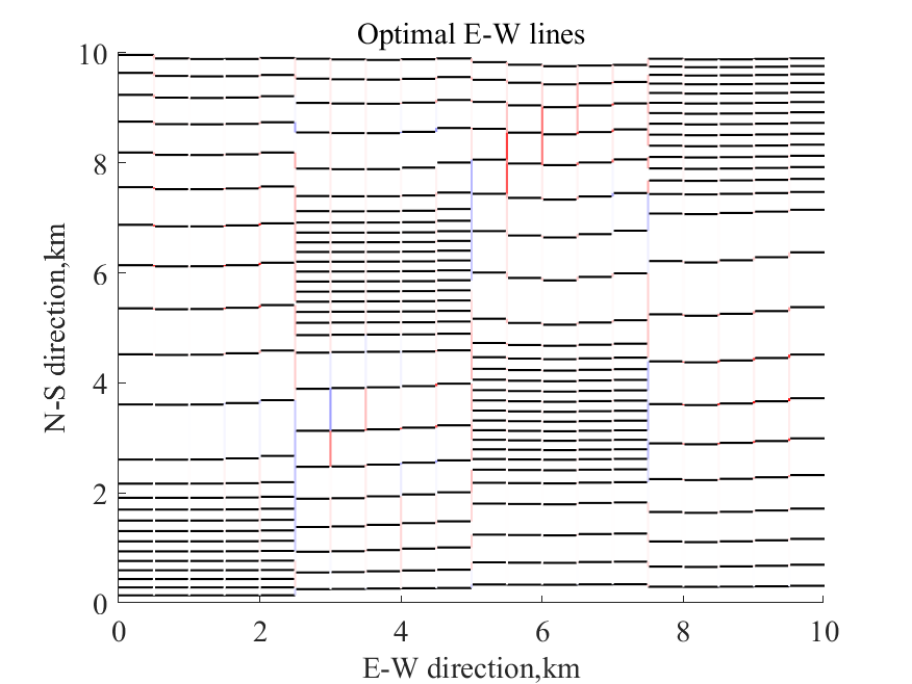}
    }
    \caption{Discretized transit networks in the ${\rm E}$ and ${\rm W}$ directions. The red lines represent vehicular detour flows, while the color intensity indicates the magnitude of the corresponding flow.}
    \label{fig_network_discretization_EW}
\end{figure} 

\subsubsection{Sensitivity analysis of the endogenous parameter $\alpha$.}\label{sec_Sensitivity_analysis_of_the_endogenous_parameter}
As discussed in Section \ref{sec_in_vehicle_time}, the detour impact factor $\alpha$ is endogenous, as its realized value is jointly determined by the network configuration and the induced demand-flow assignment. Since neither is known before the design problem is solved, $\alpha$ cannot be calibrated a priori. It is therefore necessary to assess whether treating $\alpha$ as an exogenous parameter materially affects the optimization outcomes.

To this end, we vary $\alpha$ over the interval $[0,1]$ under six demand patterns and evaluate the resulting optimal network designs and generalized costs. The CA optimization results show that the model is largely insensitive to $\alpha$. Specifically, for each demand pattern, the standard deviation of the average generalized cost across different $\alpha$ levels remains below 0.5 $\rm min\cdot trip^{-1}$, which is less than 1\% of the corresponding mean generalized cost, suggesting that the optimized system performance changes only marginally as $\alpha$ varies. This robustness implies that changes in the effective detour burden do not materially affect the main design conclusions.

We further compare the CA solutions with their corresponding discretized network evaluations. Across all tested cases, the gap in generalized cost remains within 3\%, demonstrating that the CA model consistently provides an accurate approximation of the discretized network costs over the full range of $\alpha$. Moreover, once the discretized network and the associated demand assignment are obtained, the realized value of $\alpha$ can be computed ex post. These values are found to lie predominantly in the range $[0.3, 0.7]$.

Overall, these results indicate that $\alpha$ has little influence on either the optimization outcomes or the estimation accuracy of the CA model. Furthermore, setting $\alpha = 0.5$ avoids systematically overestimating or underestimating the magnitude of the effective detour, and is therefore adopted in the subsequent analysis.

\subsection{Optimized designs of heterogeneous grid network} \label{sec_optimal_design}
Table \ref{tab: versus_designs_OuyangDemand} presents the results of four network designs in terms of the average agency cost, patron cost, and generalized cost under six demand patterns. For fair comparison, the benchmark layouts are implemented as harmonized versions of the original models under the same demand functions, cost components, and capacity constraints used in this paper. P-HetNet and HomNet are formulated as restricted special cases of the proposed HetNet model and are solved to global optimality using GP. The detailed formulations are provided in Appendix~\ref{Appen_E}, while the complete methodological development can be found in \cite{maoDesignTransitNetworks2026}. In addition, H-HetNet is solved using \texttt{fmincon} to obtain a high-quality near-global solution. These strong and effective baselines provide a rigorous reference for evaluating the performance of the proposed HetNet model. To ensure a fair comparison, all network designs are evaluated using continuous optimization solutions.

\begin{table}[!ht]
\centering
\caption{Comparison with other grid networks under six demand patterns.}
\label{tab: versus_designs_OuyangDemand}
\resizebox{\columnwidth}{!}{%
\renewcommand{\arraystretch}{1.3}
\begin{tabular}{|ccccc|cccc|}
\hline
\multicolumn{1}{|c|}{\multirow{2}{*}{\textbf{\begin{tabular}[c]{@{}c@{}}Cost term\\ ($\rm min \cdot trip^{-1}$)\end{tabular}}}} & \multicolumn{4}{c|}{\textbf{Monocentric demand $\eta=0.0439$}}                                                                           & \multicolumn{4}{c|}{\textbf{Commute demand $\eta=0.0792$}}                                                                               \\ \cline{2-9} 
\multicolumn{1}{|c|}{}                                                                                                          & \multicolumn{1}{c|}{\textbf{HetNet}} & \multicolumn{1}{c|}{\textbf{H-HetNet}} & \multicolumn{1}{c|}{\textbf{P-HetNet}} & \textbf{HomNet} & \multicolumn{1}{c|}{\textbf{HetNet}} & \multicolumn{1}{c|}{\textbf{H-HetNet}} & \multicolumn{1}{c|}{\textbf{P-HetNet}} & \textbf{HomNet} \\ \hline
\multicolumn{1}{|c|}{$N_k$}                                                                                                     & \multicolumn{1}{c|}{1.41}            & \multicolumn{1}{c|}{1.39}              & \multicolumn{1}{c|}{1.41}              & 1.39            & \multicolumn{1}{c|}{1.40}            & \multicolumn{1}{c|}{1.37}              & \multicolumn{1}{c|}{1.40}              & 1.37            \\ \hline
\multicolumn{1}{|c|}{$N_h$}                                                                                                     & \multicolumn{1}{c|}{2.05}            & \multicolumn{1}{c|}{1.97}              & \multicolumn{1}{c|}{2.05}              & 1.97            & \multicolumn{1}{c|}{2.03}            & \multicolumn{1}{c|}{1.92}              & \multicolumn{1}{c|}{2.02}              & 1.92            \\ \hline
\multicolumn{1}{|c|}{$T_a$}                                                                                                     & \multicolumn{1}{c|}{16.04}           & \multicolumn{1}{c|}{16.26}             & \multicolumn{1}{c|}{16.07}             & 16.26           & \multicolumn{1}{c|}{16.40}           & \multicolumn{1}{c|}{16.84}             & \multicolumn{1}{c|}{16.47}             & 16.84           \\ \hline
\multicolumn{1}{|c|}{$T_w$}                                                                                                     & \multicolumn{1}{c|}{3.46}            & \multicolumn{1}{c|}{3.36}              & \multicolumn{1}{c|}{3.45}              & 3.36            & \multicolumn{1}{c|}{3.44}            & \multicolumn{1}{c|}{3.29}              & \multicolumn{1}{c|}{3.43}              & 3.29            \\ \hline
\multicolumn{1}{|c|}{$T_r$}                                                                                                     & \multicolumn{1}{c|}{27.32}           & \multicolumn{1}{c|}{27.69}             & \multicolumn{1}{c|}{27.35}             & 27.69           & \multicolumn{1}{c|}{29.23}           & \multicolumn{1}{c|}{29.90}             & \multicolumn{1}{c|}{29.31}             & 29.90           \\ \hline
\multicolumn{1}{|c|}{$T_t$}                                                                                                     & \multicolumn{1}{c|}{0.90}            & \multicolumn{1}{c|}{0.90}              & \multicolumn{1}{c|}{0.90}              & 0.90            & \multicolumn{1}{c|}{0.91}            & \multicolumn{1}{c|}{0.91}              & \multicolumn{1}{c|}{0.91}              & 0.91            \\ \hline
\multicolumn{1}{|c|}{$Z$}                                                                                                       & \multicolumn{1}{c|}{51.18}           & \multicolumn{1}{c|}{51.57}             & \multicolumn{1}{c|}{51.23}             & 51.57           & \multicolumn{1}{c|}{53.41}           & \multicolumn{1}{c|}{54.23}             & \multicolumn{1}{c|}{53.54}             & 54.23           \\ \hline
\multicolumn{2}{|c|}{\textbf{\begin{tabular}[c]{@{}c@{}}HetNet's cost savings\\ against the benchmarks\end{tabular}}}                                                  & \multicolumn{1}{c|}{0.76\%}            & \multicolumn{1}{c|}{0.10\%}            & 0.76\%          & \multicolumn{1}{c|}{/}               & \multicolumn{1}{c|}{1.51\%}            & \multicolumn{1}{c|}{0.24\%}            & 1.51\%          \\ \hline
\multicolumn{1}{|c|}{\multirow{2}{*}{\textbf{\begin{tabular}[c]{@{}c@{}}Cost term\\ ($\rm min \cdot trip^{-1}$)\end{tabular}}}} & \multicolumn{4}{c|}{\textbf{Checkerboard 1 demand $\eta=0.5486$}}                                                                        & \multicolumn{4}{c|}{\textbf{Checkerboard 2 demand $\eta=0.5504$}}                                                                        \\ \cline{2-9} 
\multicolumn{1}{|c|}{}                                                                                                          & \multicolumn{1}{c|}{\textbf{HetNet}} & \multicolumn{1}{c|}{\textbf{H-HetNet}} & \multicolumn{1}{c|}{\textbf{P-HetNet}} & \textbf{HomNet} & \multicolumn{1}{c|}{\textbf{HetNet}} & \multicolumn{1}{c|}{\textbf{H-HetNet}} & \multicolumn{1}{c|}{\textbf{P-HetNet}} & \textbf{HomNet} \\ \hline
\multicolumn{1}{|c|}{$N_k$}                                                                                                     & \multicolumn{1}{c|}{1.62}            & \multicolumn{1}{c|}{2.46}              & \multicolumn{1}{c|}{1.48}              & 1.70            & \multicolumn{1}{c|}{1.59}            & \multicolumn{1}{c|}{2.57}              & \multicolumn{1}{c|}{1.54}              & 1.82            \\ \hline
\multicolumn{1}{|c|}{$N_h$}                                                                                                     & \multicolumn{1}{c|}{2.06}            & \multicolumn{1}{c|}{3.64}              & \multicolumn{1}{c|}{2.13}              & 2.38            & \multicolumn{1}{c|}{2.04}            & \multicolumn{1}{c|}{3.76}              & \multicolumn{1}{c|}{2.24}              & 2.57            \\ \hline
\multicolumn{1}{|c|}{$T_a$}                                                                                                     & \multicolumn{1}{c|}{13.79}           & \multicolumn{1}{c|}{11.25}             & \multicolumn{1}{c|}{16.38}             & 16.59           & \multicolumn{1}{c|}{13.67}           & \multicolumn{1}{c|}{11.07}             & \multicolumn{1}{c|}{16.24}             & 16.35           \\ \hline
\multicolumn{1}{|c|}{$T_w$}                                                                                                     & \multicolumn{1}{c|}{3.53}            & \multicolumn{1}{c|}{2.41}              & \multicolumn{1}{c|}{3.28}              & 2.70            & \multicolumn{1}{c|}{3.49}            & \multicolumn{1}{c|}{2.34}              & \multicolumn{1}{c|}{3.18}              & 2.56            \\ \hline
\multicolumn{1}{|c|}{$T_r$}                                                                                                     & \multicolumn{1}{c|}{26.81}           & \multicolumn{1}{c|}{31.60}             & \multicolumn{1}{c|}{29.23}             & 29.95           & \multicolumn{1}{c|}{26.27}           & \multicolumn{1}{c|}{31.16}             & \multicolumn{1}{c|}{28.65}             & 29.27           \\ \hline
\multicolumn{1}{|c|}{$T_t$}                                                                                                     & \multicolumn{1}{c|}{0.91}            & \multicolumn{1}{c|}{1.22}              & \multicolumn{1}{c|}{0.91}              & 0.91            & \multicolumn{1}{c|}{0.91}            & \multicolumn{1}{c|}{1.21}              & \multicolumn{1}{c|}{0.91}              & 0.91            \\ \hline
\multicolumn{1}{|c|}{$Z$}                                                                                                       & \multicolumn{1}{c|}{48.72}           & \multicolumn{1}{c|}{52.58}             & \multicolumn{1}{c|}{53.41}             & 54.23           & \multicolumn{1}{c|}{47.97}           & \multicolumn{1}{c|}{52.11}             & \multicolumn{1}{c|}{52.76}             & 53.48           \\ \hline
\multicolumn{2}{|c|}{\textbf{\begin{tabular}[c]{@{}c@{}}HetNet's cost savings\\ against the benchmarks\end{tabular}}}                                                  & \multicolumn{1}{c|}{7.34\%}            & \multicolumn{1}{c|}{8.78\%}            & 10.16\%         & \multicolumn{1}{c|}{/}               & \multicolumn{1}{c|}{7.94\%}            & \multicolumn{1}{c|}{9.08\%}            & 10.30\%         \\ \hline
\multicolumn{1}{|c|}{\multirow{2}{*}{\textbf{\begin{tabular}[c]{@{}c@{}}Cost term\\ ($\rm min \cdot trip^{-1}$)\end{tabular}}}} & \multicolumn{4}{c|}{\textbf{Checkerboard 3 demand $\eta=0.5524$}}                                                                        & \multicolumn{4}{c|}{\textbf{Checkerboard 4 demand $\eta=0.5522$}}                                                                        \\ \cline{2-9} 
\multicolumn{1}{|c|}{}                                                                                                          & \multicolumn{1}{c|}{\textbf{HetNet}} & \multicolumn{1}{c|}{\textbf{H-HetNet}} & \multicolumn{1}{c|}{\textbf{P-HetNet}} & \textbf{HomNet} & \multicolumn{1}{c|}{\textbf{HetNet}} & \multicolumn{1}{c|}{\textbf{H-HetNet}} & \multicolumn{1}{c|}{\textbf{P-HetNet}} & \textbf{HomNet} \\ \hline
\multicolumn{1}{|c|}{$N_k$}                                                                                                     & \multicolumn{1}{c|}{1.58}            & \multicolumn{1}{c|}{2.49}              & \multicolumn{1}{c|}{1.60}              & 1.82            & \multicolumn{1}{c|}{1.59}            & \multicolumn{1}{c|}{2.45}              & \multicolumn{1}{c|}{1.48}              & 1.70            \\ \hline
\multicolumn{1}{|c|}{$N_h$}                                                                                                     & \multicolumn{1}{c|}{2.03}            & \multicolumn{1}{c|}{3.57}              & \multicolumn{1}{c|}{2.31}              & 2.57            & \multicolumn{1}{c|}{2.03}            & \multicolumn{1}{c|}{3.62}              & \multicolumn{1}{c|}{2.13}              & 2.38            \\ \hline
\multicolumn{1}{|c|}{$T_a$}                                                                                                     & \multicolumn{1}{c|}{13.66}           & \multicolumn{1}{c|}{11.41}             & \multicolumn{1}{c|}{16.21}             & 16.35           & \multicolumn{1}{c|}{13.76}           & \multicolumn{1}{c|}{11.29}             & \multicolumn{1}{c|}{16.38}             & 16.59           \\ \hline
\multicolumn{1}{|c|}{$T_w$}                                                                                                     & \multicolumn{1}{c|}{3.48}            & \multicolumn{1}{c|}{2.25}              & \multicolumn{1}{c|}{3.09}              & 2.56            & \multicolumn{1}{c|}{3.48}            & \multicolumn{1}{c|}{2.39}              & \multicolumn{1}{c|}{3.28}              & 2.70            \\ \hline
\multicolumn{1}{|c|}{$T_r$}                                                                                                     & \multicolumn{1}{c|}{26.24}           & \multicolumn{1}{c|}{30.65}             & \multicolumn{1}{c|}{28.67}             & 29.27           & \multicolumn{1}{c|}{26.89}           & \multicolumn{1}{c|}{31.44}             & \multicolumn{1}{c|}{29.23}             & 29.95           \\ \hline
\multicolumn{1}{|c|}{$T_t$}                                                                                                     & \multicolumn{1}{c|}{0.91}            & \multicolumn{1}{c|}{1.15}              & \multicolumn{1}{c|}{0.91}              & 0.91            & \multicolumn{1}{c|}{0.91}            & \multicolumn{1}{c|}{1.21}              & \multicolumn{1}{c|}{0.91}              & 0.91            \\ \hline
\multicolumn{1}{|c|}{$Z$}                                                                                                       & \multicolumn{1}{c|}{47.90}           & \multicolumn{1}{c|}{51.52}             & \multicolumn{1}{c|}{52.79}             & 53.48           & \multicolumn{1}{c|}{48.66}           & \multicolumn{1}{c|}{52.4}              & \multicolumn{1}{c|}{53.41}             & 54.23           \\ \hline
\multicolumn{2}{|c|}{\textbf{\begin{tabular}[c]{@{}c@{}}HetNet's cost savings\\ against the benchmarks\end{tabular}}}                                                  & \multicolumn{1}{c|}{7.03\%}            & \multicolumn{1}{c|}{9.26\%}            & 10.43\%         & \multicolumn{1}{c|}{/}               & \multicolumn{1}{c|}{7.14\%}            & \multicolumn{1}{c|}{8.89\%}            & 10.27\%         \\ \hline
\end{tabular}
}
\end{table}


The results show that HetNet consistently achieves the lowest generalized cost across all scenarios, and the cost saving is greater as the demand heterogeneity $\eta$ increases. Under relatively homogeneous demand patterns (e.g., $\eta=0.0439, 0.0792$ for monocentric and commute patterns), the performance differences among network designs are relatively small. However, under highly heterogeneous demand patterns ($\eta \approx 0.55$ for checkerboard scenarios), HetNet demonstrates a much more pronounced advantage. In these cases, the generalized cost of HetNet is 7.03\%--10.43\% lower than that of the benchmark networks. 


A closer examination of the cost components provides further insights. Compared with H-HetNet, HetNet sacrifices some of the access- and waiting-time benefits generated by very dense local routes, but substantially reduces in-vehicle and transfer times, leading to a lower overall generalized cost. Compared with P-HetNet and HomNet, the main savings of HetNet arise from simultaneous reductions in access/egress time and in-vehicle riding time. These two components are competing in geometrically restricted grid designs: increasing line density reduces access distances but tends to increase stopping delays and riding time, whereas sparser lines have the opposite effect. By spatially adapting line density and service structure to local demand, HetNet achieves a better trade-off between these competing components, while the agency-cost components remain broadly comparable.

The above results suggest that the proposed heterogeneous network design is particularly effective in capturing and accommodating spatial demand imbalances. By adapting route spacing and service structure to local demand conditions, HetNet can better allocate transit resources and mitigate inefficiencies that arise in conventional, geometrically restricted grid networks when demand is highly uneven. Moreover, even in low-heterogeneity scenarios, HetNet still achieves slightly better performance than the benchmark networks, demonstrating its robustness across different demand patterns.

\subsection{Parametric analysis}\label{sec_parametric_analysis}
This section presents a parametric analysis of the optimized designs across four networks, considering: (i) total patron demand ranging from 5,000 to 200,000 ($\rm trip \cdot hr^{-1}$); (ii) low-, medium-, and high-wage cities with $\mu = 5, 20, 25$ ($\rm \$ \cdot hr^{-1}$), respectively; (iii) city sizes of $R = 10$ and $30$ ($\rm km$); and (iv) six demand patterns. 

For the city-size analysis, comparability across scenarios is ensured by proportionally scaling the total patron demand. Specifically, when $R = 30$ ($\rm km$), the total demand is nine times that at $R = 10$. The demand density distribution is scaled accordingly, while the cruising speed of transit vehicles $v$ is kept constant. The results are shown in Figures \ref{fig_demand_parametric_analysis}–\ref{fig_citysize_parametric_analysis}.

\begin{figure}[htbp]
    \centering
    \subfigure[Monocentric demand]{
        \includegraphics[height=2.8cm]{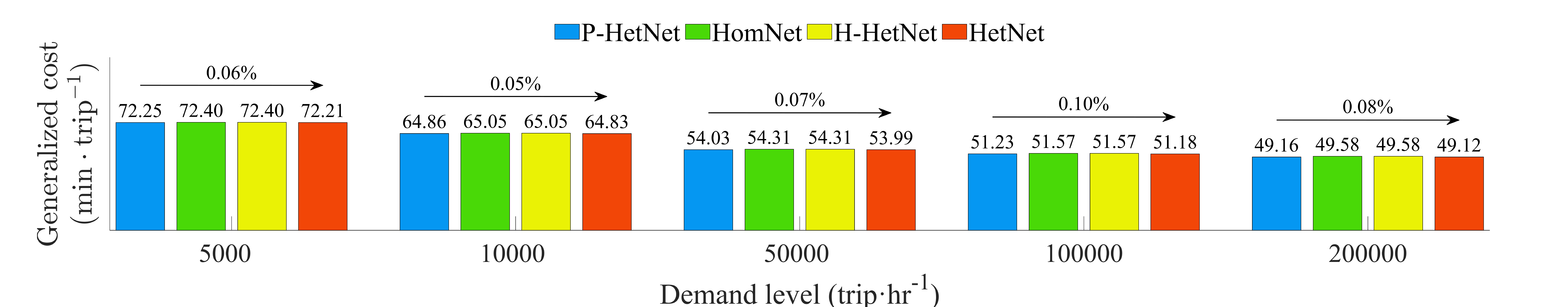}
    }
    \subfigure[Commute demand]{
        \includegraphics[height=2.8cm]{images/Fig_Commuter_demand_parametric_analysis.pdf}
    }
    \subfigure[Checkerboard 1 demand]{
        \includegraphics[height=2.8cm]{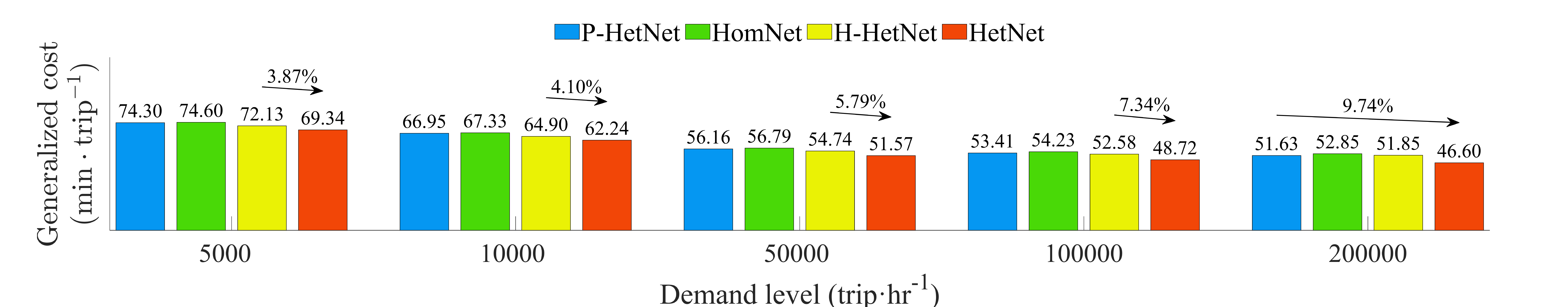}
    }
    \subfigure[Checkerboard 2 demand]{
        \includegraphics[height=2.8cm]{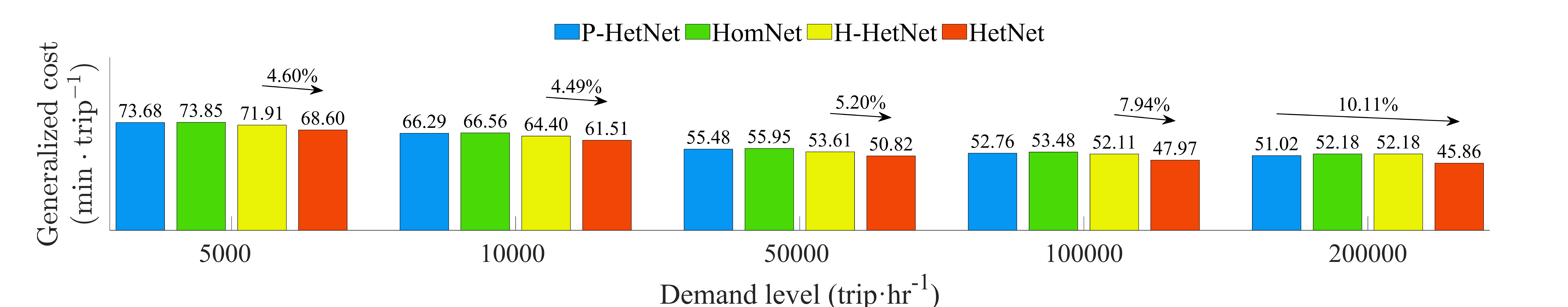}
    }
    \subfigure[Checkerboard 3 demand]{
        \includegraphics[height=2.8cm]{images/Fig_Chessboard2_demand_parametric_analysis.pdf}
    }
    \subfigure[Checkerboard 4 demand]{
        \includegraphics[height=2.8cm]{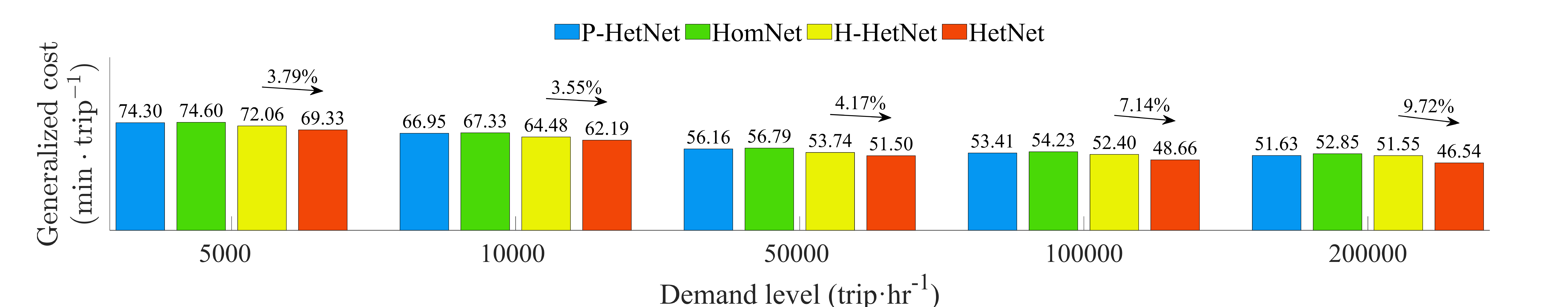}
    }
    \caption{Effects of total patron demand ($D$) on generalized cost under different demand patterns. The percentage above each group reports HetNet’s cost reduction relative to the best-performing benchmark in the same group, i.e., $(Z_{\mathrm{BM}}^{\min}-Z_{\mathrm{HetNet}})/Z_{\mathrm{BM}}^{\min}\times 100\%$, where $Z_{\mathrm{BM}}^{\min}=\min\{Z_{\mathrm{H\text{-}HetNet}},Z_{\mathrm{P\text{-}HetNet}},Z_{\mathrm{HomNet}}\}$.}
    \label{fig_demand_parametric_analysis}
\end{figure}

\begin{figure}[htbp]
    \centering
    \subfigure[Monocentric demand]{
        \includegraphics[height=2.5cm]{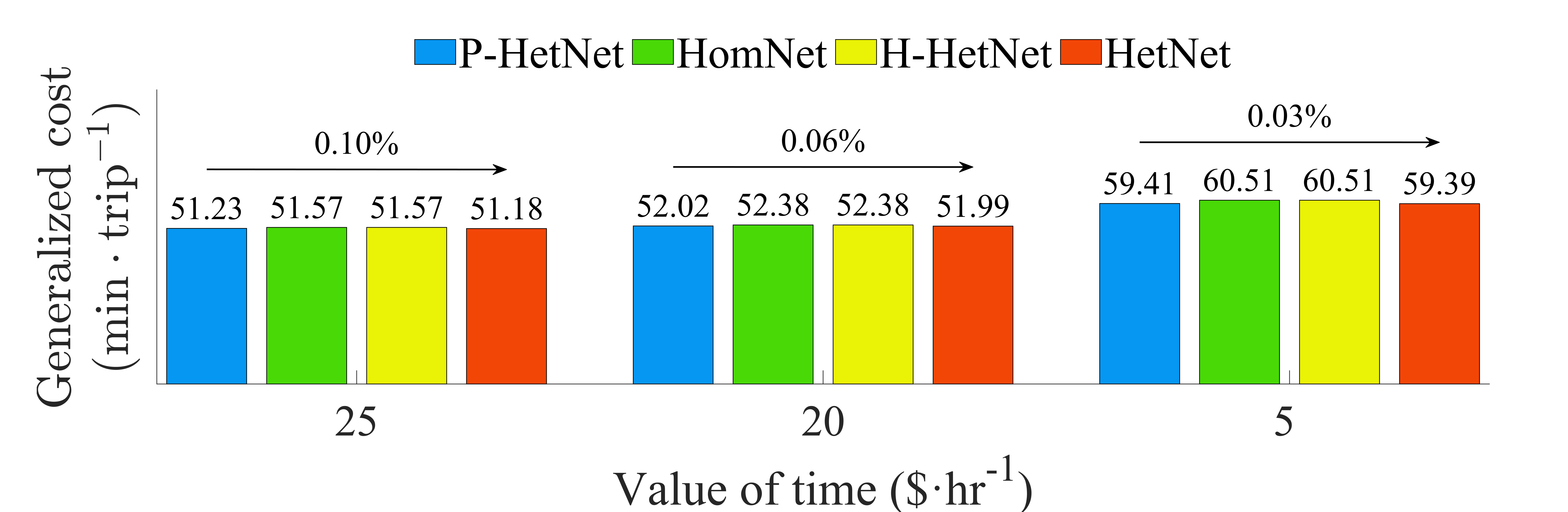}
    }
    \subfigure[Commute demand]{
        \includegraphics[height=2.5cm]{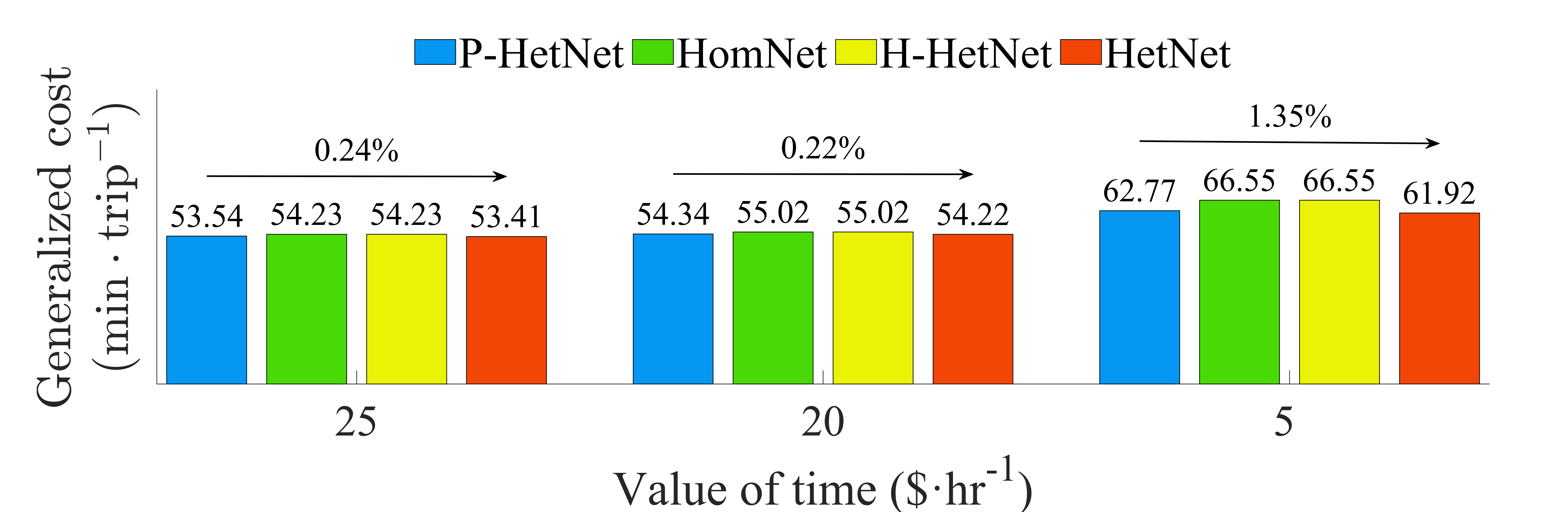}
    }
    \subfigure[Checkerboard 1 demand]{
        \includegraphics[height=2.5cm]{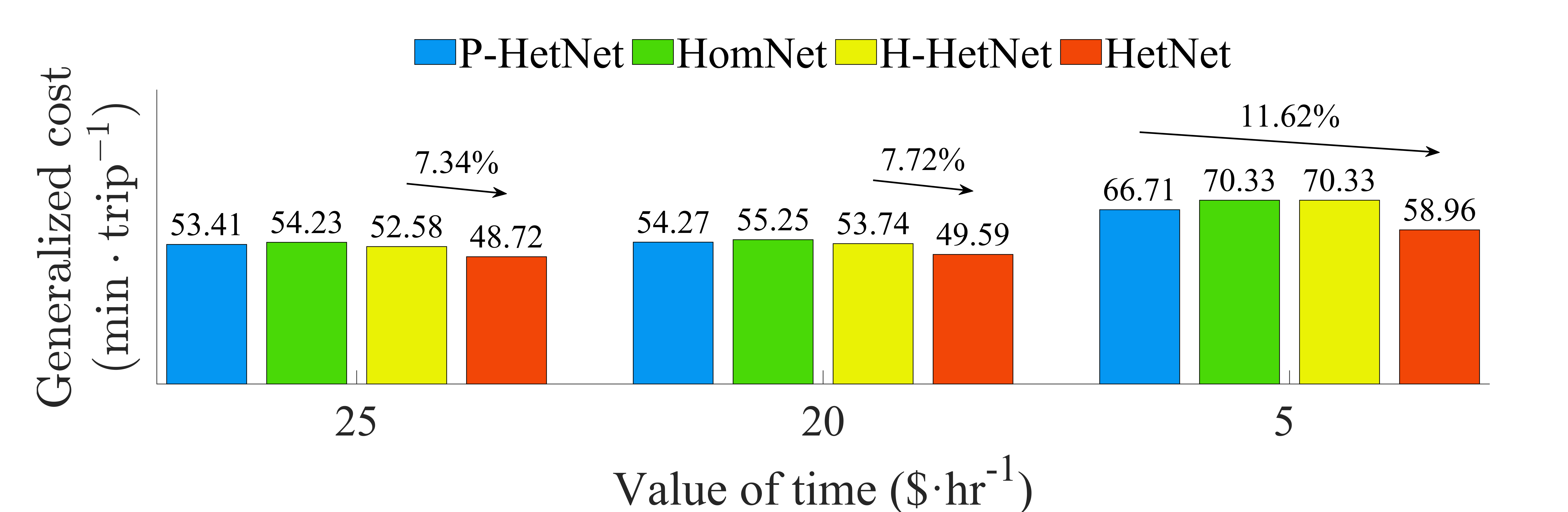}
    }
    \subfigure[Checkerboard 2 demand]{
        \includegraphics[height=2.5cm]{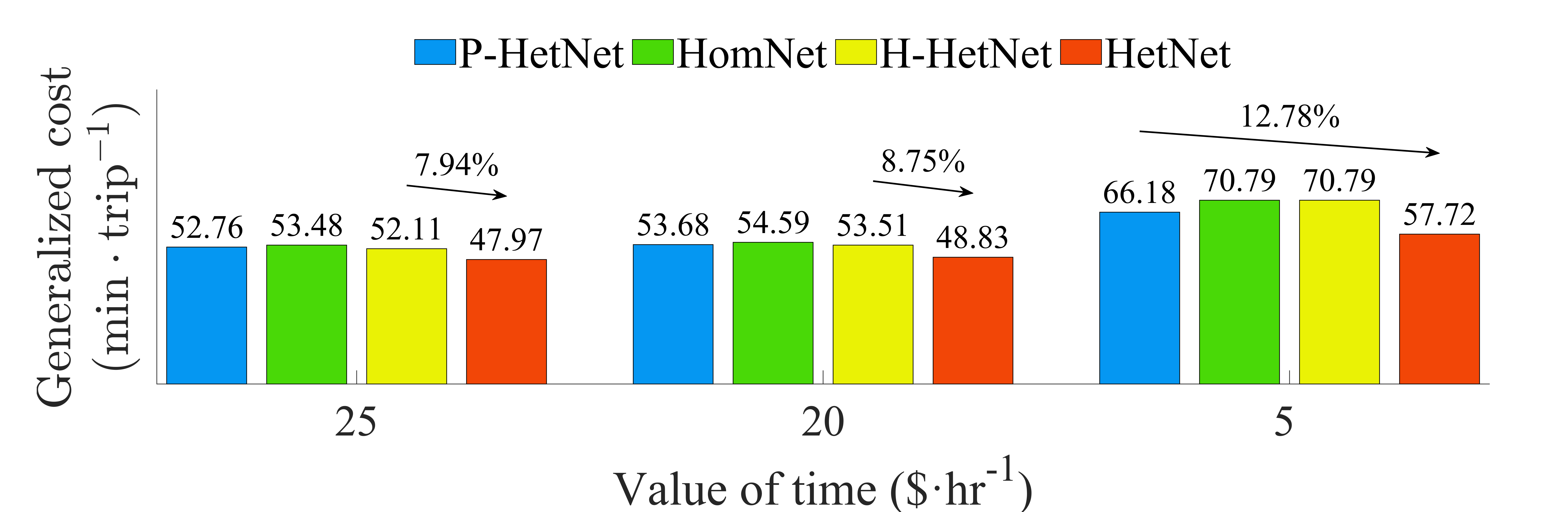}
    }
    \subfigure[Checkerboard 3 demand]{
        \includegraphics[height=2.5cm]{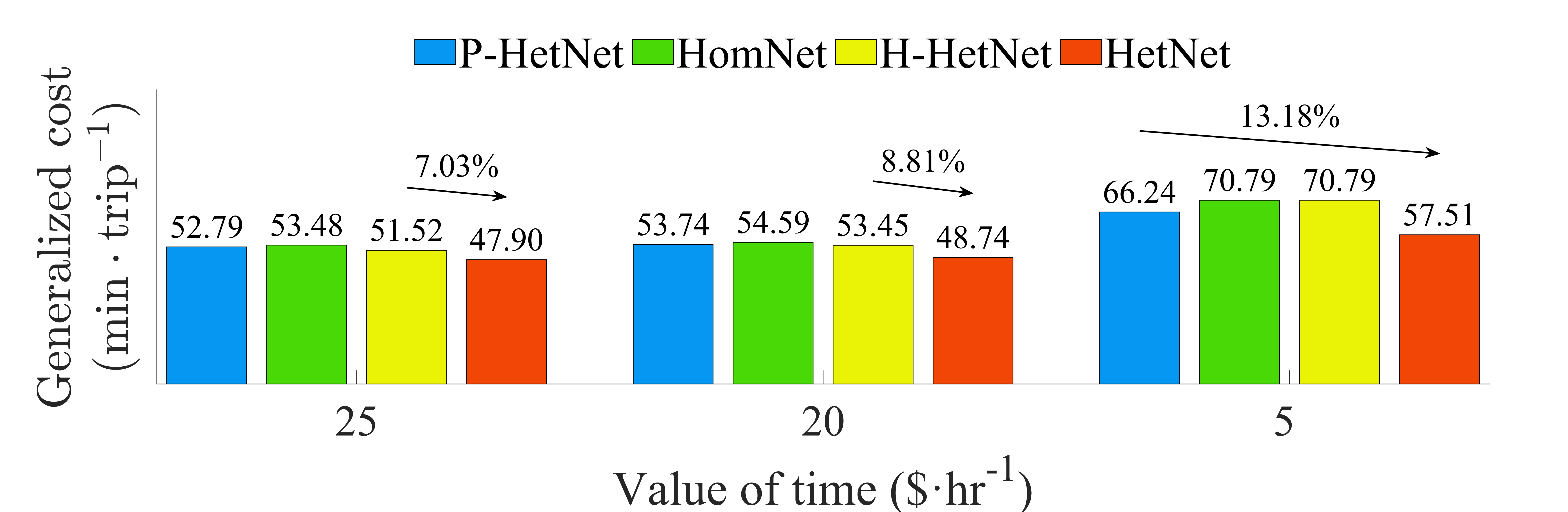}
    }
    \subfigure[Checkerboard 4 demand]{
        \includegraphics[height=2.5cm]{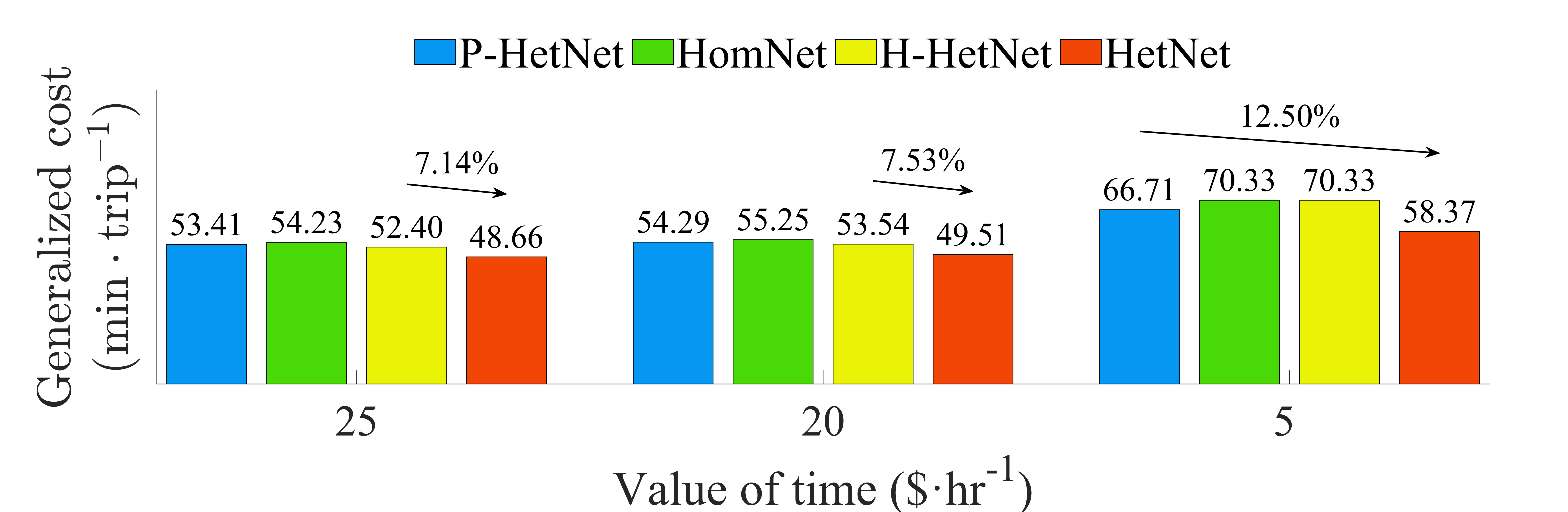}
    }
    \caption{Effects of patrons' average value of time ($\mu$) on generalized cost under different demand patterns. The percentage above each group reports HetNet’s cost reduction relative to the best-performing benchmark in the same group, i.e., $(Z_{\mathrm{BM}}^{\min}-Z_{\mathrm{HetNet}})/Z_{\mathrm{BM}}^{\min}\times 100\%$, where $Z_{\mathrm{BM}}^{\min}=\min\{Z_{\mathrm{H\text{-}HetNet}}, Z_{\mathrm{P\text{-}HetNet}}, Z_{\mathrm{HomNet}}\}$.}
    \label{fig_mu_parametric_analysis}
\end{figure}

\begin{figure}[htbp]
    \centering
    \subfigure[Monocentric demand]{
        \includegraphics[height=2.5cm]{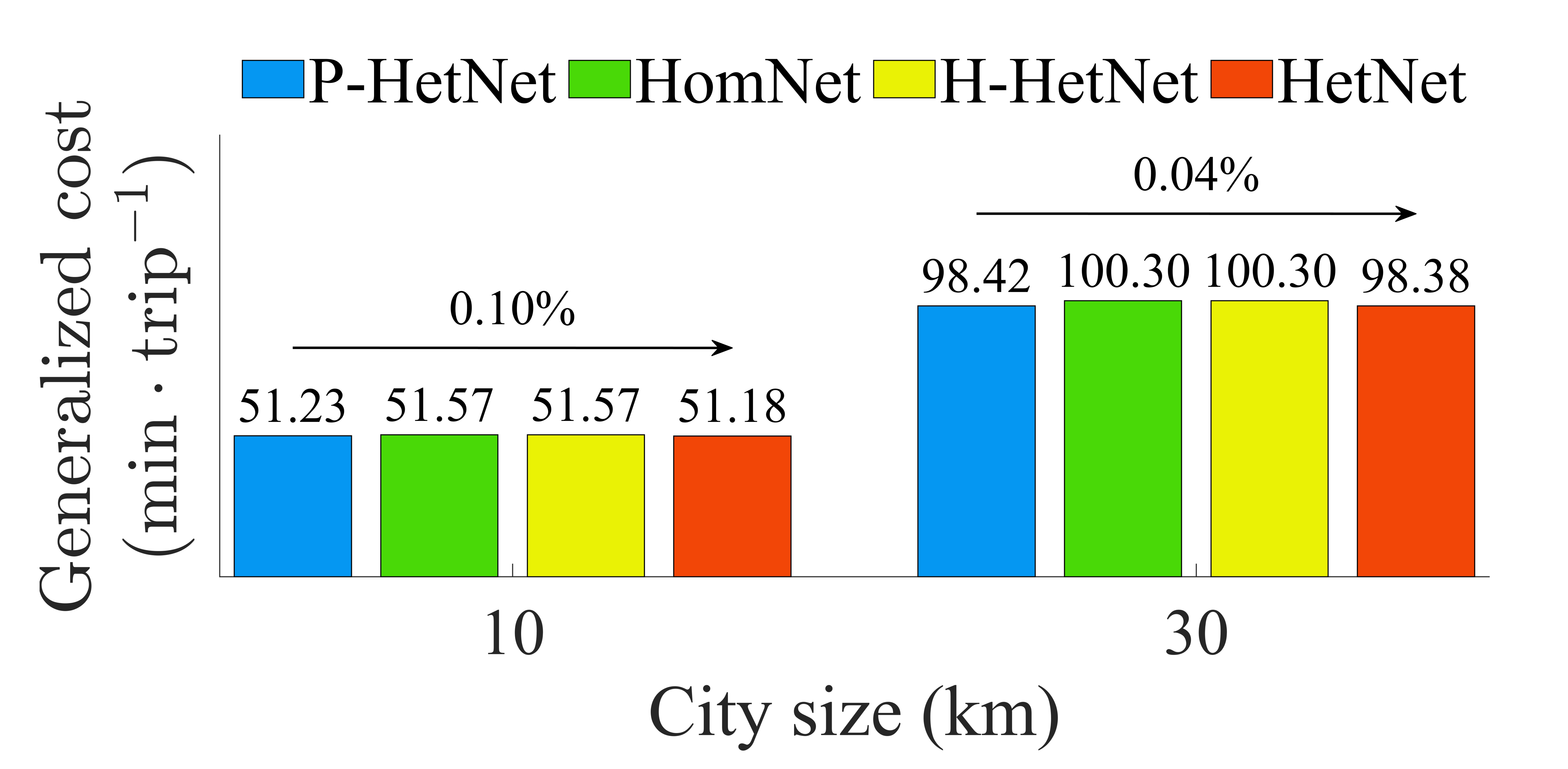}
    }
    \subfigure[Commute demand]{
        \includegraphics[height=2.5cm]{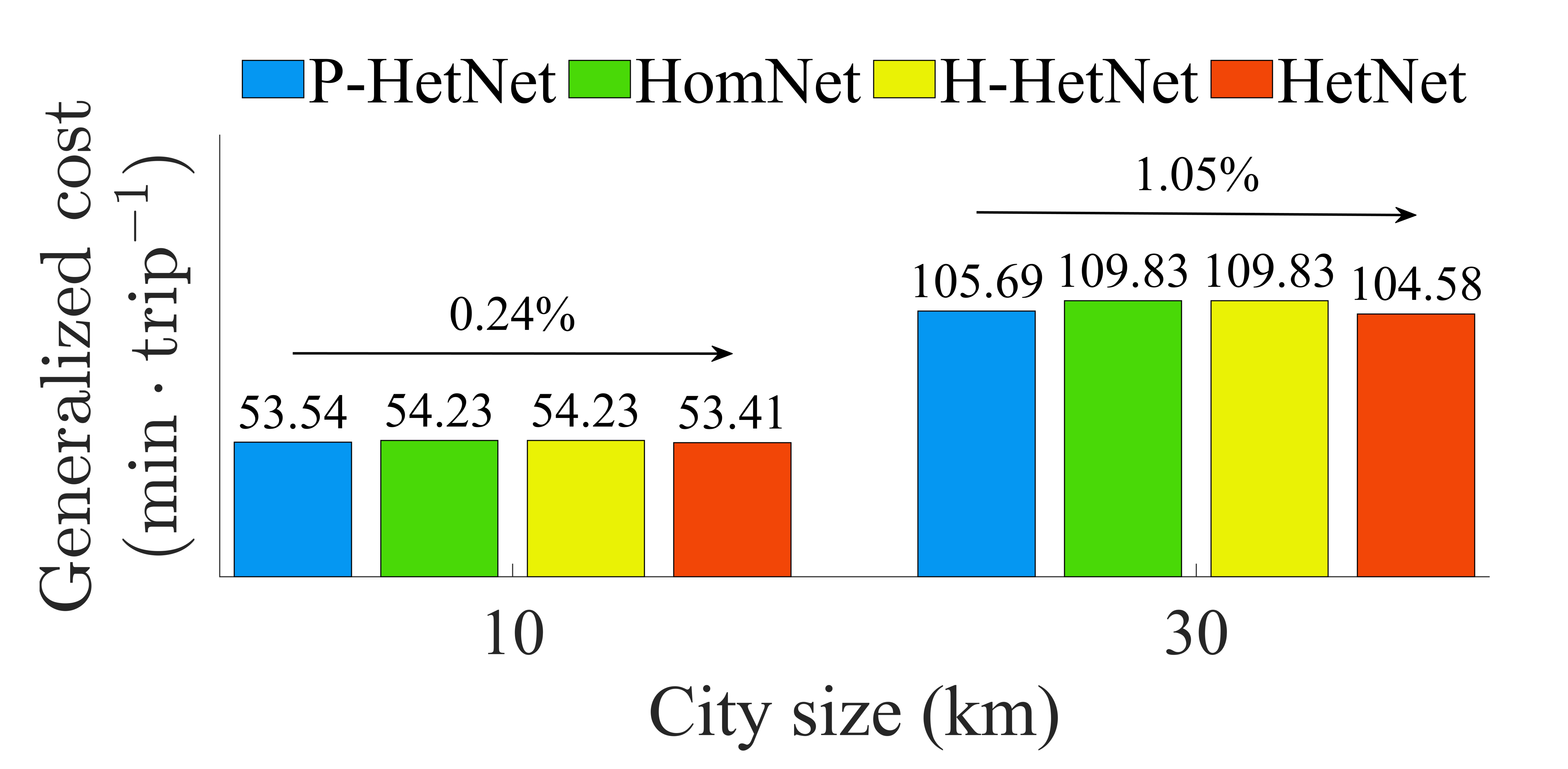}
    }
    \subfigure[Checkerboard 1 demand]{
        \includegraphics[height=2.5cm]{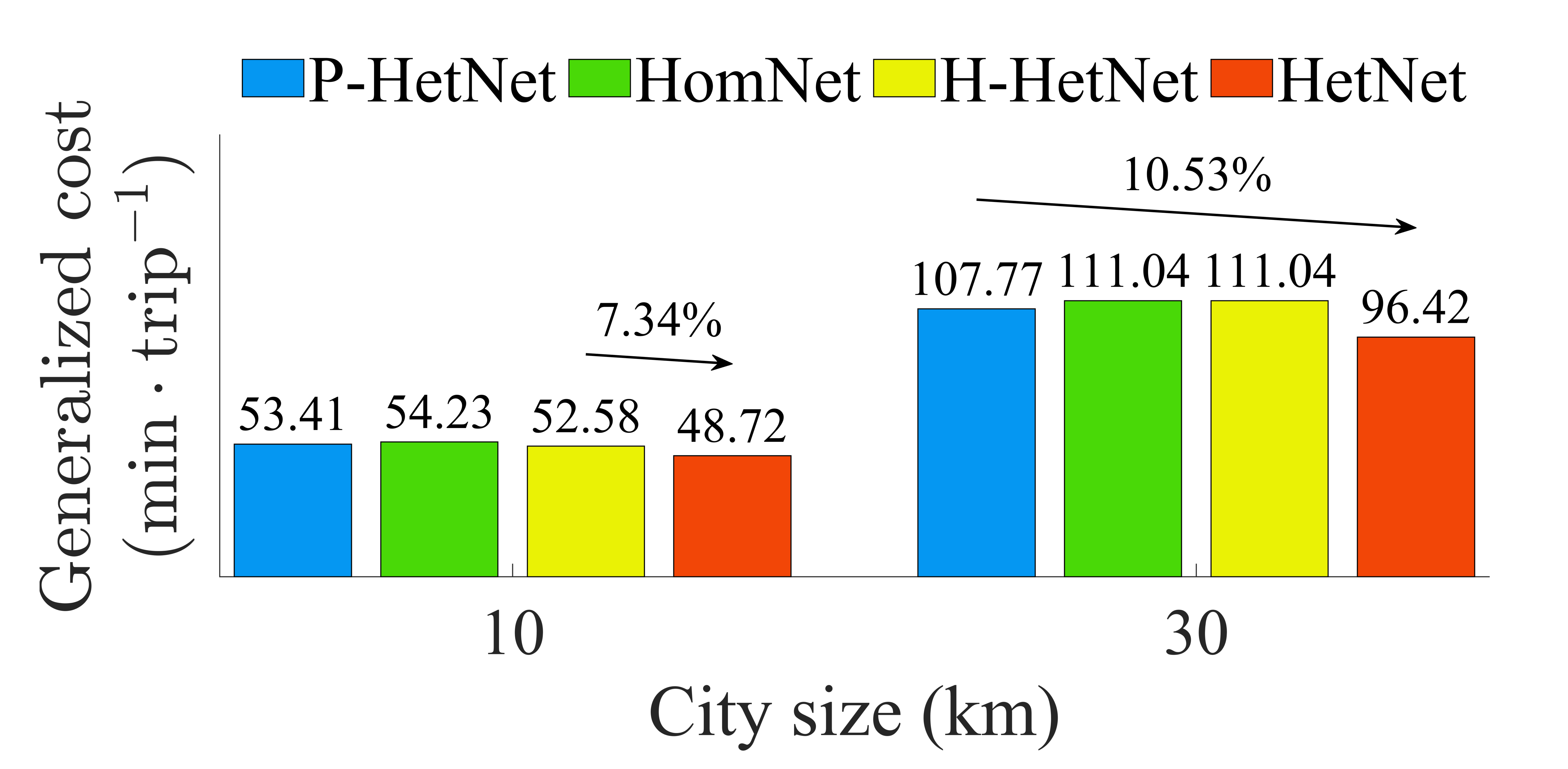}
    }
    \subfigure[Checkerboard 2 demand]{
        \includegraphics[height=2.5cm]{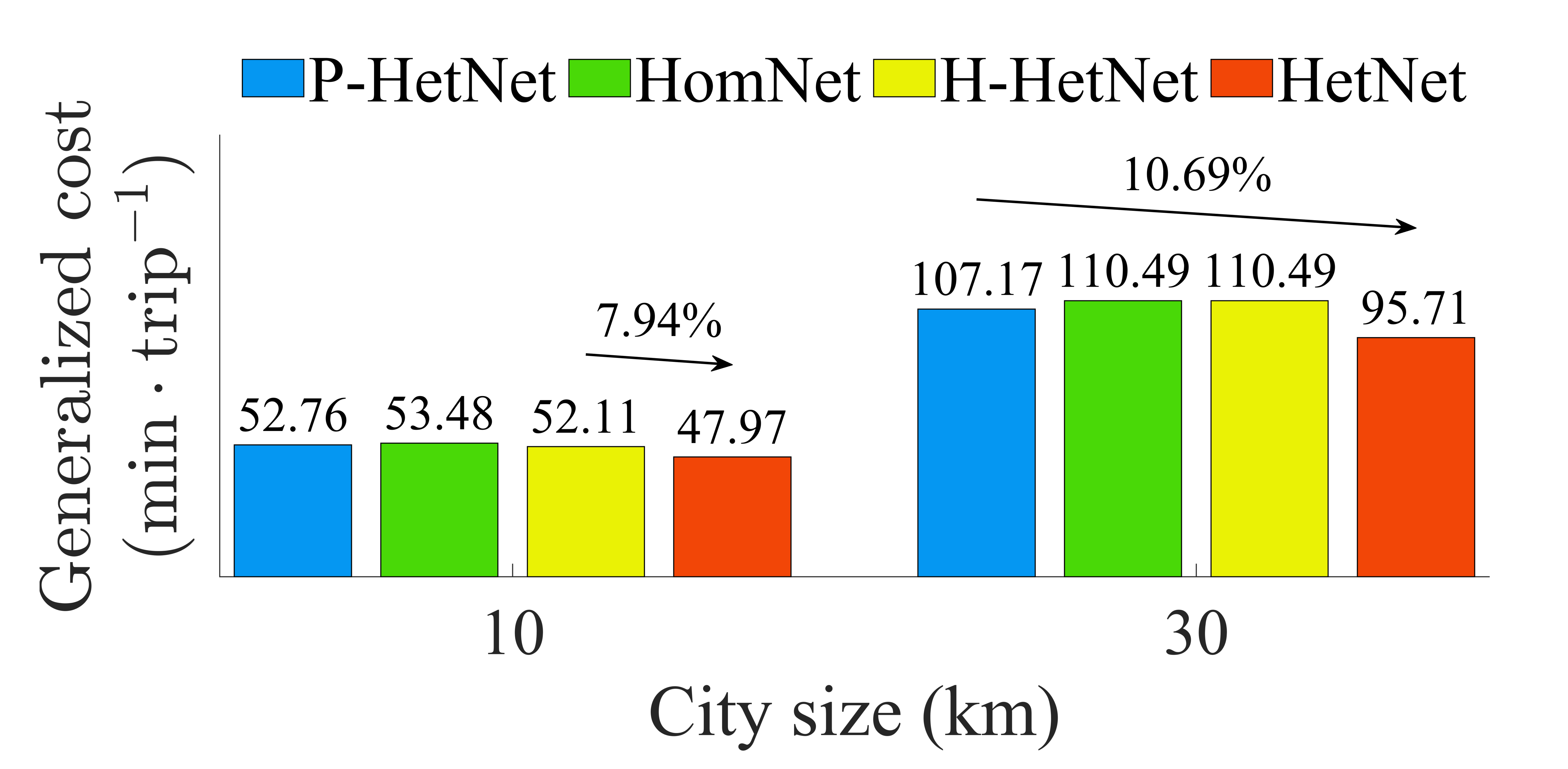}
    }
    \subfigure[Checkerboard 3 demand]{
        \includegraphics[height=2.5cm]{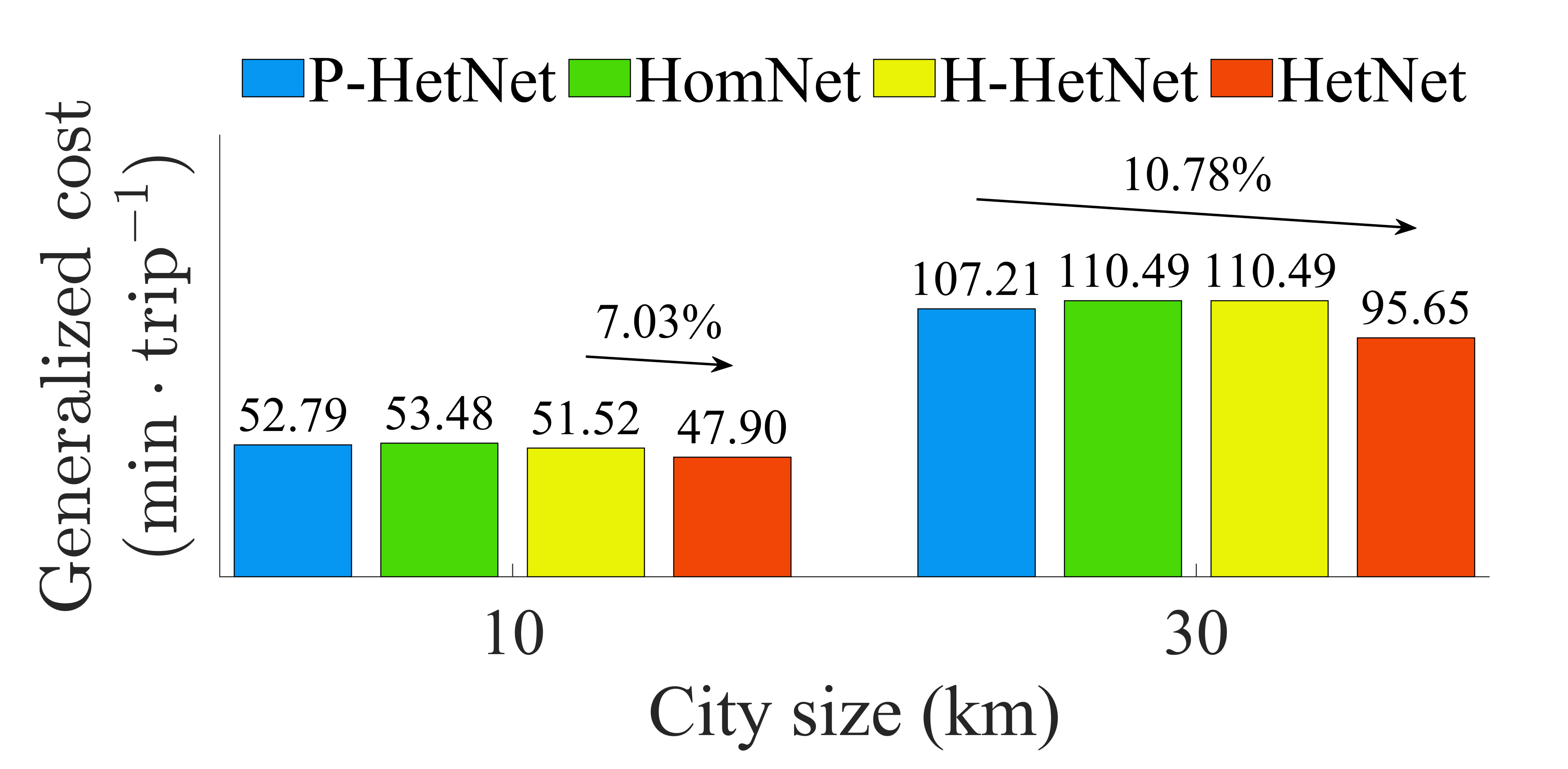}
    }
    \subfigure[Checkerboard 4 demand]{
        \includegraphics[height=2.5cm]{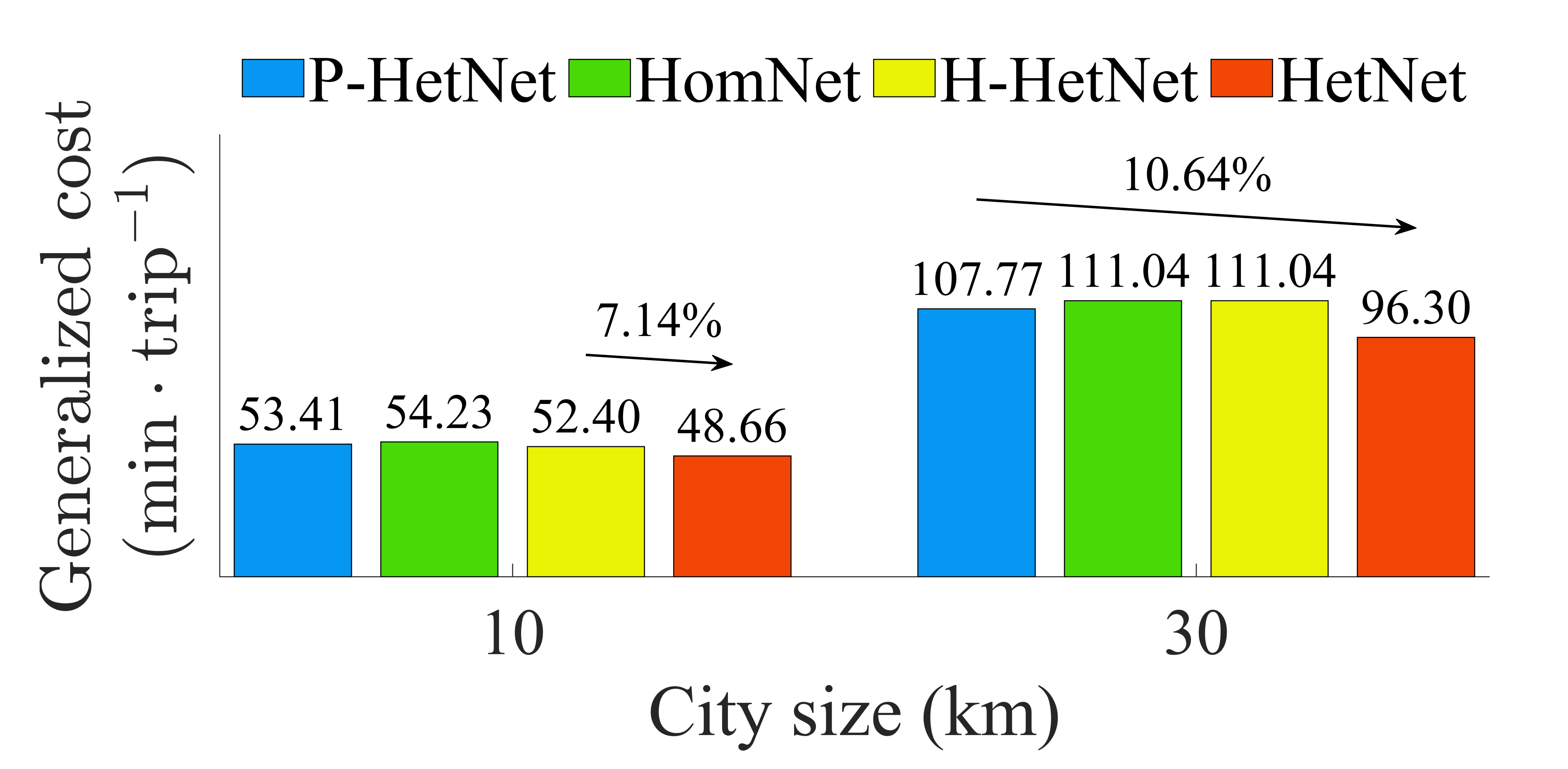}
    }
    \caption{Effects of city size ($R$) on generalized cost under different demand patterns. The percentage above each group reports HetNet’s cost reduction relative to the best-performing benchmark in the same group, i.e., $(Z_{\mathrm{BM}}^{\min}-Z_{\mathrm{HetNet}})/Z_{\mathrm{BM}}^{\min}\times 100\%$, where $Z_{\mathrm{BM}}^{\min}=\min\{Z_{\mathrm{H\text{-}HetNet}}, Z_{\mathrm{P\text{-}HetNet}}, Z_{\mathrm{HomNet}}\}$.}
    \label{fig_citysize_parametric_analysis}
\end{figure}

The results show that higher demand levels reduce the average generalized cost across all scenarios, indicating economies of scale (EOS) in all network structures. As demand increases, transit resources are utilized more efficiently and fixed infrastructure costs are distributed among a larger number of patrons, leading to lower average generalized costs. Moreover, the cost advantage of HetNet over the benchmark networks generally increases with higher demand levels. Under the four checkerboard demand patterns, the corresponding savings increase more markedly, from 3.79\%--4.60\% at $D=5{,}000$ to 9.72\%--10.26\% at $D=200{,}000$. This finding highlights the strong potential of HetNet in high-density urban environments, where demand concentrations allow the heterogeneous network structure to allocate transit services more efficiently.

The sensitivity analysis on patrons' average value of time further shows that the advantage of HetNet becomes more pronounced in cities with lower average wages. When the value of time is lower, the relative importance of operational costs in the generalized cost becomes more significant. Since HetNet can flexibly adjust route density according to spatial demand conditions, it can allocate transit resources more efficiently and reduce unnecessary infrastructure deployment. More specifically, under four checkerboard demand patterns, HetNet's cost saving increases from 7.03\%--7.94\% at $\mu=25$ to 11.62\%--13.18\% at $\mu=5$. As a result, the heterogeneous network structure provides greater cost savings under such conditions.

The analysis of city size shows a similar pattern. As the city side length increases from $R=10$ to $30$ km, HetNet's cost saving rises from 7.03\%--7.94\% to 10.53\%--10.78\% under the four checkerboard patterns. This suggests that HetNet is particularly beneficial in larger cities with spatially dispersed and heterogeneous demand, where its flexible line structure can better balance coverage and operating efficiency.

\section{Conclusions}
This paper proposes CA models for designing bus networks in grid cities under spatially heterogeneous demand. Heterogeneous layouts of bus lines are produced, in which bus lines are heterogeneous to make lateral movements from longitudinal directions, and perform line merging and diverging anywhere in the city. Note that this representation of the overlapping bus lines is the first time modeled in CA-based TND models. This flexibility brings our models significant improvements over the recent heterogeneous network model \citep{ouyangContinuumApproximationApproach2014}, particularly when patrons demand exhibits strong spatial heterogeneity.  

Methodologically, our work has extended the theory of continuum approximation models to a more general approach in transit network design that flexibly accounts for a variety of heterogeneous demand patterns. The practical applicability of the CA-based TND models in real cities is thus enhanced. This generalization, however, increases model complexity. To address this, we adopt an SGP approach, whereby the model is locally approximated around reference points and transformed into a sequence of standard GP subproblems solved iteratively until convergence. This solution framework provides an efficient approach for handling CA-based transit network design problems that contain nonlinear terms and absolute-value structures.

In numerical experiments, we firstly assess the accuracy of the proposed continuum model by comparing the model outputs with those obtained from the corresponding discretized transit networks, with estimation errors in the generalized cost remaining below 3\% across all demand patterns. We then compare the proposed HetNet with three benchmarks—H-HetNet, P-HetNet, and HomNet—and find that HetNet achieves over 7\% generalized cost savings under checkerboard demand patterns. New findings in this paper include: (i) our heterogeneous network design model demonstrates consistent superiority over other network models across all tested scenarios; (ii) the advantages of heterogeneous network structures become more pronounced under highly heterogeneous demand patterns; and (iii) heterogeneous designs will primarily benefit high-demand, low-wage, and large-area cities.


Admittedly, this paper is limited in its simplified representation of transit users' route choice behaviors, which is necessary for developing a parsimonious formulation and for applying local decomposition solution techniques. Other network forms, such as hybrid networks \citep{daganzoStructureCompetitiveTransit2010} and ring-radial networks \citep{chenOptimalTransitService2015}, are also worthy of exploration for heterogeneous designs. Future work of interest also includes extending the model to accommodate irregular city shapes and to incorporate multiple transit modes \citep{fanOptimalDesignIntersecting2018a}.

\ACKNOWLEDGMENT{%
This part will be filled after peer review.
}

%
%
%
 \begin{APPENDICES}

\numberwithin{equation}{section}
\counterwithin{figure}{section}
\counterwithin{table}{section}

\renewcommand{\theHsection}{appendix.\Alph{section}}
\renewcommand{\theHequation}{appendix.\thesection.\arabic{equation}}
\renewcommand{\theHfigure}{appendix.\thesection.\arabic{figure}}
\renewcommand{\theHtable}{appendix.\thesection.\arabic{table}}

\section{Nomenclature} \label{Appen_A}
The key notations are as follows.
\begin{table}[!ht]
\centering
\caption{Nomenclature}
\label{tab:Nomenclature}
\resizebox{\textwidth}{!}{%
\begin{tabular}{ll}
\hline
Decision functions & Description    \\
\hline
$\delta_{i}(\Vec{x})$  & line density of direction $i$ at $\Vec{x}$ ($\rm km^{-1}$) \\
$h_{i}(\Vec{x}) $    &   Headways of $i$-direction lines at $\Vec{x}$ ($\rm hr \cdot veh^{-1}$)\\
\hline
\makecell[l]{Other variables\\functions}    & \\
\hline
$R$ & Side length of a square city ($\rm km$) \\
$\lambda(\Vec{x}_o,\Vec{x}_d)$ & Demand densities of trips originating at $\Vec{x}_o$ and ending at $\Vec{x}_d$ ($\rm trip \cdot km^{-4} \cdot hr^{-1}$)\\
$\lambda_{\rm bo}(\Vec{x}),\lambda_{\rm al}(\Vec{x}),\lambda_{\rm tr}(\Vec{x})$    &  Boarding, alighting, and transferring demand densities at $\Vec{x}$ ($\rm trip \cdot km^{-2} \cdot hr^{-1}$) \\
$\lambda_{\rm fl}^{i}(\Vec{x})$    &  Directional on-board flux of $i$-direction lines at $\Vec{x}$ ($\rm trip \cdot km^{-1} \cdot hr^{-1}$) \\
$Z$ & Generalized cost ($\rm hr$)\\
$Z_A:N_l, N_s, N_k, N_h$   & \makecell[l]{Agency metrics ($\$$):  line infrastructure length ($\rm km$), number of stops, \\ vehicle-kilometers traveled per hour ($\rm veh \cdot km$), and vehicle-hours traveled per hour ($\rm veh \cdot hr$)}  \\
$\pi_l, \pi_s,\pi_k,\pi_h$    &  \makecell[l]{Unit cost of line infrastructure length ($\rm \$ \cdot km^{-1}$), number of stops ($\$$), \\ vehicle-kilometers traveled ($\rm \$ \cdot veh^{-1} \cdot km^{-1}$), vehicle-hours traveled ($\rm \$ \cdot veh^{-1} \cdot hr^{-1}$)} \\
$Z_P:T_a,T_w, T_r,T_t$    & \makecell[l]{User metrics ($\rm hr$): access/egress cost ($\rm hr$), wait time ($\rm hr$),  \\  in-vehicle travel time ($\rm hr$), transfer penalty ($\rm hr$)} \\
$q_i(\Vec{x})$   & Vehicle flow per line per km$^2$ at $\Vec{x}$ ($\rm veh \cdot hr^{-1} \cdot km^{-1}$)\\
$Q_i(\Vec{x})$   & Cumulative vehicle flow per line per km$^2$ at $\Vec{x}$ ($\rm veh \cdot hr^{-1}$)\\
$\mathrm{Q}_i $  &  Cross-sectional vehicle flows per hour of direction $i$ ($\rm veh \cdot hr^{-1}$)\\
$d_i(\Vec{x})$   & Detour distance per line per km$^2$ at $\Vec{x}$ ($\rm veh \cdot hr^{-1} \cdot km^{-1}$)\\
$D$ & Total patron demand ($\rm trip \cdot hr^{-1}$)\\
$v,v_w$    & Cruising speed of transit vehicles; walking speed of patrons ($\rm km \cdot hr^{-1}$)\\
$C$    & Capacity of a transit vehicle ($\rm trip \cdot veh^{-1}$)\\
$\mu$    & Value of time of patrons ($\rm \$ \cdot hr^{-1}$) \\
$\tau, \sigma$    & \makecell[l]{Fixed delay per stop ($\rm hr \cdot stop^{-1}$), transfer penalty ($\rm hr \cdot transfer^{-1}$)}\\
$\alpha $    & Patrons' detour impact factor\\
$\beta_w$ & Perceived walking time factor of patrons\\
$\eta$ & Patrons' demand heterogeneity\\
$\Delta $    & Discretization parameters ($\rm km$)\\

\hline
\end{tabular}%
}
\end{table}

\section{Demand functions} \label{Appen_B}
The following demand functions are derived based on the route choice behaviors as described in section \ref{sec_HetNet}. Detailed derivations can be found in \citeauthor{ouyangContinuumApproximationApproach2014} (\citeyear{ouyangContinuumApproximationApproach2014}, section 2.3, pp. 336-338). Here, we only summarize the results with minor changes in the symbols of variables/functions. 
 
(i) Boarding and alighting demand functions\\
Boarding densities of patrons originating in the neighborhood at $\Vec{x} = (x,y)$ in four directions, i.e., Eastbound (denoted by superscript $\rm E$), Westbound ($\rm W$), Northbound ($\rm N$), and Southbound ($\rm S$), are:
\begin{subequations}
    \begin{align}
        & \lambda^{\rm E}_{\rm bo}(x,y) = \frac{1}{2} \int^{R}_{y'=0} \int^{R}_{x'=x} {\lambda(x,y,x',y') \diff x'\diff y'} , \quad & \lambda^{\rm W}_{\rm bo}(x,y) =  \frac{1}{2}\int^{R}_{y'=0} \int^{x}_{x'=0} {\lambda(x,y,x',y') \diff x'\diff y'} , \\
        & \lambda^{\rm N}_{\rm bo}(x,y) = \frac{1}{2} \int^{R}_{y'=y} \int^{R}_{x'=0} {\lambda(x,y,x',y') \diff x'\diff y'} , \quad & \lambda^{\rm S}_{\rm bo}(x,y) =  \frac{1}{2}\int^{y}_{y'=0} \int^{R}_{x'=0} {\lambda(x,y,x',y') \diff x'\diff y'} .
    \end{align}
\end{subequations}

Similarly, alighting demand ($\rm trip \cdot km^{-2} \cdot hr^{-1}$) ending at $(x,y)$ from four directions are:
\begin{subequations}
    \begin{align}
        & \lambda^{\rm E}_{\rm al}(x,y) = \frac{1}{2} \int^{R}_{y'=0} \int^{x}_{x'=0} {\lambda(x',y',x,y) \diff x'\diff y'}, & \lambda^{\rm W}_{\rm al}(x,y) =  \frac{1}{2}\int^{R}_{y'=0} \int^{R}_{x'=x} {\lambda(x',y',x,y) \diff x'\diff y'}, \\
        & \lambda^{\rm N}_{\rm al}(x,y) = \frac{1}{2} \int^{y}_{y'=0} \int^{R}_{x'=0} {\lambda(x',y',x,y) \diff x'\diff y'},  & \lambda^{\rm S}_{\rm al}(x,y) =  \frac{1}{2} \int^{R}_{y'=y} \int^{R}_{x'=0} {\lambda(x',y',x,y) \diff x'\diff y'} .
    \end{align}
\end{subequations}

Thus, total boarding and alighting demand ($\rm trip \cdot km^{-2} \cdot hr^{-1}$) at $(x,y)$ are:
\begin{align}
    & \lambda_{\rm bo}(x,y) = \sum_{i} {\lambda^{i}_{\rm bo}(x,y) }, \quad & \lambda_{\rm al}(x,y) = \sum_{i} {\lambda^{i}_{\rm al}(x,y) }.
\end{align}

(ii) Directional transfer demand functions\\
Transfer demand ($\rm trip \cdot km^{-2} \cdot hr^{-1}$) at $(x,y)$ to four directions are:
\begin{subequations}
    \begin{align}
        & \lambda^{\rm E}_{\rm tr} (x,y) = \frac{1}{2} \int_{y'=0}^{R} \int_{x' = x}^{R} { \lambda(x, y', x', y) \diff x' \diff y'}, & \lambda^{\rm W}_{\rm tr} (x,y) = \frac{1}{2} \int_{y' = 0}^{R} \int_{x'=0}^{x} { \lambda(x,y', x', y) \diff x' \diff y' }, \\
        & \lambda^{\rm N}_{\rm tr} (x,y) = \frac{1}{2} \int_{y'= y}^{R} \int_{x' = 0}^{R} { \lambda(x', y, x, y') \diff x' \diff y'}, & \lambda^{\rm S}_{\rm tr} (x,y) = \frac{1}{2} \int_{y' = 0}^{y} \int_{x'=0}^{R} { \lambda(x', y, x, y') \diff x' \diff y'} .
    \end{align}
\end{subequations}

 (iii) On-board flux functions \\
The fluxes of onboard passengers ($\rm trip \cdot km^{-1} \cdot hr^{-1}$) passing through $(x,y)$ in four directions are:
\begin{subequations}
    \begin{align}
        \lambda_{\rm fl}^{\rm E} (x,y) = \frac{1}{2} \int_{x_1=0}^{x} \left[ \int_{x_2 = x}^{R} \int_{y_2 = 0}^{R} \lambda(x_1, y, x_2, y_2) \diff y_2 \diff x_2 \right] \diff x_1 + \frac{1}{2} \int_{x_2 = x}^{R} \left[ \int_{x_1 = 0}^{x} \int_{y_1 = 0}^{R} \lambda(x_1, y_1, x_2, y) \diff y_1 \diff x_1 \right] \diff x_2 \\
        \lambda_{\rm fl}^{\rm W} (x,y) = \frac{1}{2} \int_{x_1=0}^{x} \left[ \int_{x_2 = x}^{R} \int_{y_2 = 0}^{R} \lambda(x_2, y_2, x_1, y) \diff y_2 \diff x_2 \right] \diff x_1 + \frac{1}{2} \int_{x_2 = x}^{R} \left[ \int_{x_1 = 0}^{x} \int_{y_1 = 0}^{R} \lambda(x_2, y, x_1, y_1) \diff y_1 \diff x_1 \right] \diff x_2 \\
        \lambda_{\rm fl}^{\rm N} (x,y) = \frac{1}{2} \int_{y_1=0}^{y} \left[ \int_{y_2 = y}^{R} \int_{x_2 = 0}^{R} \lambda(x, y_1, x_2, y_2) \diff x_2 \diff y_2 \right] \diff y_1 + \frac{1}{2} \int_{y_2 = y}^{R} \left[ \int_{y_1 = 0}^{y} \int_{x_1 = 0}^{R} \lambda(x_1, y_1, x, y_2) \diff x_1 \diff y_1 \right] \diff y_2 \\
        \lambda_{\rm fl}^{\rm S} (x,y) = \frac{1}{2} \int_{y_1=0}^{y} \left[ \int_{y_2 = y}^{R} \int_{x_2 = 0}^{R} \lambda(x_2, y_2, x, y_1) \diff x_2 \diff y_2 \right] \diff y_1 + \frac{1}{2} \int_{y_2 = y}^{R} \left[ \int_{y_1 = 0}^{y} \int_{x_1 = 0}^{R} \lambda(x, y_2, x_1, y_1) \diff x_1 \diff y_1 \right] \diff y_2.
    \end{align}
\end{subequations}

\section{Heterogeneous demand density generation}\label{Appen_C}
For the monocentric and commute demand patterns, the demand density function is specified as

\begin{align}
    \lambda(x_o,y_o,x_d,y_d) = \frac{D}{\Lambda} \prod_{e=o,d} \left(a_1 + a_2\sum^2_{k=1} \exp\left[- (a_{3k}x_{e} - b_{ek})^2 - (a_{4k}y_{e} - \overline{b}_{ek})^2\right] \right),
\end{align}
where $\lambda(x_o,y_o,x_d,y_d)$ is the density ($\rm trip \cdot km^{-4} \cdot hr^{-1}$) of demand that originates from the neighborhood area centered at $(x_o, y_o)$ and ends at $ (x_d, y_d) $, $D$ represents the total patron
demand, and parameters $a_1, a_2, a_{3k}, a_{4k}, b_{ek}$, and $\overline{b}_{ek}$ take the values specified in the cited work, as shown in Table \ref{tab_ouyang_demand}, for monocentric and commute demand patterns, respectively. 

\begin{table}[!ht]
    \centering
    \caption{Demand distribution parameters borrowed from Ouyang et al. (2014).}
    \label{tab_ouyang_demand}
    \begin{tabular}{ccccccccccccccc}
        \hline
         Demand & $a_1$ & $a_2$ & $a_{31}$ & $a_{32}$ & $a_{41}$ & $a_{42}$ & $b_{o1}$ & $b_{o2}$ & $b_{d1}$ & $b_{d2}$ & $\overline{b}_{o1}$ & $\overline{b}_{o2}$ & $\overline{b}_{d1}$ & $\overline{b}_{d2}$ \\
         \hline
         Monocentric & 0.0016 & 0.065 & 0.5 & 0 & 0.5 & 0 & 2.5 & 0 & 2.5 & 0 & 2.5 & 0 & 2.5 & 0 \\
         Commute & 0.00044 & 0.070 & 0.5 & 0 & 0.5 & 0 & 1.0 & 0 & 4.0 & 0 & 4.0 & 0 & 1.0 & 0 \\
         \hline
    \end{tabular}
\end{table}

The $\Lambda$ is added to the original function in \citeauthor{ouyangContinuumApproximationApproach2014} (\citeyear{ouyangContinuumApproximationApproach2014}, p.340, eq. 14), a scaling parameter $\Lambda=\iiiint_{0}^R\lambda(x_o,y_o,x_d,y_d) \diff x_o \diff y_o \diff x_d \diff y_d$ is introduced here to normalize the demand density function so that the total integrated demand over the study region equals the prescribed value $D$. The generated results are shown in Figures \ref{Fig_monocentric_demand} and \ref{Fig_commute_demand}.

Regarding the checkerboard demand density, the core idea is to divide the network into high-density $H$ and low-density $L$ regions, which are distributed in a checkerboard-like pattern across the network. The patrons' demand density remains constant within each $H$ and $L$ region. The higher heterogeneity of this pattern arises from the asymmetric distribution of travel origins and destinations, which may facilitate a more realistic design. The design process for this demand distribution consists of three steps:
    
\begin{enumerate}
\item Identifying the locations of each $H$ and $L$ regions.
\item Constructing patron transfer equation set to solve for demand densities in the four transfer scenarios: high-density to high-density, $\lambda_{H \rightarrow H}$; high-density to low-density, $\lambda_{H \rightarrow L}$; low-density to high-density, $\lambda_{L \rightarrow H}$; and low-density to low-density, $\lambda_{L \rightarrow L}$. The equation set is given in

\begin{subequations}\label{Eq_checkerboard_demand_equation_set}
\begin{align}
&\left\{
\begin{aligned}
\lambda_{bo,H} &= R_H \lambda_{H \rightarrow H} + R_L \lambda_{H \rightarrow L} \\
\lambda_{bo,L} &= R_H \lambda_{L \rightarrow H} + R_L \lambda_{L \rightarrow L} \\
\lambda_{al,H} &= R_H \lambda_{H \rightarrow H} + R_L \lambda_{L \rightarrow H} \\
\lambda_{al,L} &= R_H \lambda_{H \rightarrow L} + R_L \lambda_{L \rightarrow L}
\end{aligned}
\right. ,
\label{Eq_checkerboard_demand_equation_set_1} \\
&\left\{
\begin{aligned}
R_H \lambda_{bo,H} + R_L \lambda_{bo,L} &= D \\
R_H \lambda_{al,H} + R_L \lambda_{al,L} &= D \\
\frac{R_H \lambda_{bo,H}}{D} = \frac{R_H \lambda_{al,H}}{D} &= \rho_H \\
\frac{R_H \lambda_{H \rightarrow H}}{\lambda_{bo,H}} &= \rho_{H \rightarrow H}
\end{aligned}
\right. ,
\label{Eq_checkerboard_demand_equation_set_2}
\end{align}
\end{subequations}

where Eq. (\ref{Eq_checkerboard_demand_equation_set_1}) calculates the departure and arrival demand densities, and Eq. (\ref{Eq_checkerboard_demand_equation_set_2}) constrains the total demand and relevant ratios. $R_H$ and $R_L$ represent the areas of high and low demand regions, respectively. $\lambda_{bo,H}$ and $\lambda_{bo,L}$ denote the demand densities leaving the high- and low-density regions, while $\lambda_{al,H}$ and $\lambda_{al,L}$ represent the demand densities arriving in these regions. $\rho_H$ is the proportion of demand leaving and arriving in high-density areas, and $\rho_{H \rightarrow H}$ is the proportion of demand within high-density regions that corresponds to the $H \rightarrow H$ transfer scenario. In this study, both of these ratios are set to 0.9.

The solution is given in 

\begin{equation}\label{Eq_checkerboard_demand_solution}
\left\{
\begin{aligned}
\lambda_{H \rightarrow H} &= \frac{D \rho_H}{R_H^2(2-\rho_{H \rightarrow H})} \\
\lambda_{H \rightarrow L} &= \lambda_{L \rightarrow H} = \frac{D \rho_H (1-\rho_{H \rightarrow H})}{R_H R_L(2-\rho_{H \rightarrow H})} \\
\lambda_{L \rightarrow L} &= \frac{D (2-\rho_{H \rightarrow H}-3\rho_H+2\rho_H\rho_{H \rightarrow H})}{R_L^2(2-\rho_{H \rightarrow H})} \\
\end{aligned}
\right. .
\end{equation}

Notably, $\lambda_{H \rightarrow L} = \lambda_{L \rightarrow H}$, as $\frac{R_H \lambda_{bo,H}}{D} = \frac{R_H \lambda_{al,H}}{D} = \rho_H$.

\item Assigning values to $\lambda(x_o,y_o,x_d,y_d)$ based on the results from step 2 and the locations determined in step 1.

\end{enumerate}

Finally, we obtained four distinct checkerboard demand distributions, as shown in Figures \ref{Fig_checkerboard0_demand}-\ref{Fig_checkerboard3_demand}.

\section{Demand Heterogeneity Measurement}\label{Appen_D}
This appendix provides the computational procedure for the demand heterogeneity metric $\eta$. The metric modifies the conventional JS divergence through a distance-weighting scheme to capture the impact of spatial demand distribution on transit network operational efficiency.

We first define the weighted distance function between demand points $(x_{o1},y_{o2})$ and $(x_{d1},y_{d2})$:

\begin{subequations}\label{Eq_weighted_distance_function}
\begin{align}
&\mathrm{dis}\left(x_{o_1}, y_{o_2}, x_{d_1}, y_{d_2}\right) = 
\begin{cases} 
\displaystyle\frac{\Delta \bar{d}}{\sqrt{2}R}, & (x_{o_1}, y_{o_2}) = (x_{d_1}, y_{d_2}) \\[8pt]
\displaystyle\frac{\sqrt{(x_{o_1}-x_{d_1})^2 + (y_{o_2}-y_{d_2})^2}}{\sqrt{2}|R|}, & \text{otherwise}
\end{cases},\\
&\Delta \bar{d} = \iiiint_0^\Delta \sqrt{(x_1-x_2)^2+(y_1-y_2)^2}\, dx_1 dx_2 dy_1 dy_2 \approx 0.5214\Delta,\\
\end{align}
\end{subequations}
where $\Delta \bar{d}$ represents the expected distance within a unit grid, computed through quadruple integration, and $o_1,o_2,d_1,d_2 \in \{1,2,\ldots,\mathcal{N}\}$.

Then, we construct the spatially distance-weighted demand transition probability distribution:

\begin{align}\label{Eq_demand_probability_distribution}
p\left(x_{o_i}, y_{o_j}, x_{d_l}, y_{d_k}\right) = \frac{\mathrm{dis}\left(x_{o_i}, y_{o_j}, x_{d_l}, y_{d_k}\right) \cdot \lambda\left(x_{o_i}, y_{o_j}, x_{d_l}, y_{d_k}\right)}{\displaystyle\sum_{o_i,o_j,d_l,d_k} \mathrm{dis}\left(x_{o_i}, y_{o_j}, x_{d_l}, y_{d_k}\right) \cdot \lambda\left(x_{o_i}, y_{o_j}, x_{d_l}, y_{d_k}\right)}.
\end{align}

Finally, the distance-weighted JS divergence is defined as:

\begin{subequations}\label{Eq_demand_heterogeneity_function}
\begin{align}
&
\begin{aligned}
    \eta =&\ \frac{1}{2\ln 2} \sum_{o_1,o_2,d_1,d_2} p\left(x_{o_1}, y_{o_2}, x_{d_1}, y_{d_2}\right) \ln\left(\frac{p\left(x_{o_1}, y_{o_2}, x_{d_1}, y_{d_2}\right)}{m\left(x_{o_q}, y_{o_w}, x_{d_1}, y_{d_2}\right)}\right) \\
    &+ \frac{1}{2\ln 2} \sum_{o_1,o_2,d_1,d_2} u\left(x_{o_1}, y_{o_2}, x_{d_1}, y_{d_2}\right) \ln\left(\frac{u\left(x_{o_1}, y_{o_2}, x_{d_1}, y_{d_2}\right)}{m\left(x_{o_1}, y_{o_2}, x_{d_1}, y_{d_2}\right)}\right),
\end{aligned}\\
&u\left(x_{o_1}, y_{o_2}, x_{d_1}, y_{d_2}\right) = \frac{1}{(\mathcal{N}+1)^4}, \\
&m\left(x_{o_1}, y_{o_2}, x_{d_1}, y_{d_2}\right) = \frac{1}{2}\left[p\left(x_{o_1}, y_{o_2}, x_{d_1}, y_{d_2}\right) + u\left(x_{o_1}, y_{o_2}, x_{d_1}, y_{d_2}\right)\right],
\end{align}
\end{subequations}
where $u(\cdot)$ denotes the uniform reference distribution; and $m(\cdot)$ represents the midpoint distribution between $p(\cdot)$ and $u(\cdot)$. 

This metric comprehensively considers both the structural characteristics of demand distribution and spatial distance factors, effectively quantifying the impact of demand patterns on bus network operational efficiency.

\section{Modeling P-HetNet and HomNet via Geometric Programming}\label{Appen_E}
The proposed HetNet model generalizes both the P-HetNet and HomNet formulations. Accordingly, these two models can be viewed as special cases of the proposed framework. Since vehicle detours are not considered in either model, the terms involving $d_i(\Vec{x})$ in the objective function are omitted. Moreover, the vehicle flow conservation constraints in Eqs. (\ref{cons_flow_conservation_i}) and (\ref{cons_flow_conservation_j}) are always satisfied. To enable GP, additional simplifications are introduced according to the structural properties of the two models.

For P-HetNet, the two categories of decision variables in each direction $i \in I=\left\{\rm E,W,N,S \right\}$ are assumed to remain invariant along the corresponding line direction. In addition, considering the bidirectional symmetry of the transit network, they can be specified as:

\begin{equation}\label{Eq_specialization_p_hetnet_delta}
\delta_i(\Vec{x})=
\left\{
\begin{aligned}
\delta_{\rm EW}(y), &\; i \in \{\rm E,W\}
\\
\delta_{\rm NS}(x), &\; i \in \{\rm N,S\}
\end{aligned}
\right. .
\end{equation}

Under this specification, $\delta_{\rm EW}(y)$, and $\delta_{\rm NS}(x)$ become the new decision variables. The same specialization applies to the service headway $h_i(\Vec{x})$.

The objective function must be integrated along the direction of operation for each $i \in I=\left\{\rm E,W,N,S \right\}$. Taking Eq. (\ref{eq_Nh}) as an example, the resulting expression after integration is given by

\begin{equation}\label{Eq_specialization_p_hetnet_objective}
N_h = 2\int_{y=0}^{R} q_{\rm EW}(y) \left( \frac{R}{v}+\tau \int_{x=0}^{R}\delta_{\rm NS}(x) \diff x \right) \diff y +  2\int_{x=0}^{R} q_{\rm NS}(x) \left( \frac{R}{v}+\tau \int_{y=0}^{R}\delta_{\rm EW}(y) \diff y \right) \diff x ,
\end{equation}
where $q_{\rm EW}(y) = \frac{\delta_{\rm EW}(y)}{h_{\rm EW}(y)}$, and $q_{\rm NS}(x) = \frac{\delta_{\rm NS}(x)}{h_{\rm NS}(x)}$. The transformations of the remaining objective function terms in Eqs. (\ref{Network_ZA}) and (\ref{Network_ZP}) follow the same procedure and are therefore omitted for brevity.
  
Under the above specialization, the capacity constraint in Eq. (\ref{cons_capacity}) for directions ${\rm E,W}$ can be reformulated as 

\begin{eqnarray}\label{Eq_specialization_p_hetnet_capacity}
\lambda_{fl}^{\rm EW}(y)\frac{h_{\rm EW}(y)}{\delta_{\rm EW}(y)} & \le C , \quad
\lambda_{fl}^{\rm EW}(y) & = 
\displaystyle
\max_{\substack{i \in \{{\rm E,W}\}, x \in [0,R]}}
\lambda_{fl}^i(\Vec{x}), \quad y \in [0,R].
\end{eqnarray}

The corresponding expressions for directions ${\rm N,S}$ follow analogously.

For homogeneous networks, the two categories of decision variables in each operational direction are assumed to be constant over the entire study area $R$. Consequently, $\delta_{\rm EW}, \delta_{\rm NS}, h_{\rm EW}$, and $h_{\rm NS}$ are treated as the new decision variables. The objective function is then integrated over all directions, and the capacity constraint can be simplified through a procedure similar to that used for P-HetNet.

Under these transformations, the objective functions of both P-HetNet and HomNet can be expressed as posynomials of the new decision variables. 
Moreover, the capacity constraint satisfies the requirement that inequality constraints must also be posynomials. 
As a result, the optimization problems for both network types conform to the standard form of GP and can therefore be solved efficiently to obtain globally optimal solutions. 
Additional implementation details can be found in \cite{maoDesignTransitNetworks2026}.

\section{Network discretization procedure}\label{Appen_F}
We use the optimal solutions of the CA model, namely $\delta_i^*(x_n,y_{n'})$, and $q_i^*(x_n,y_{n'})$, to guide the discretized network generation and vehicle-flow assignment, as illustrated in Fig.~\ref{fig_discretization}.

\begin{figure}[htbp]
    \centering
    \includegraphics[width=0.98\linewidth]{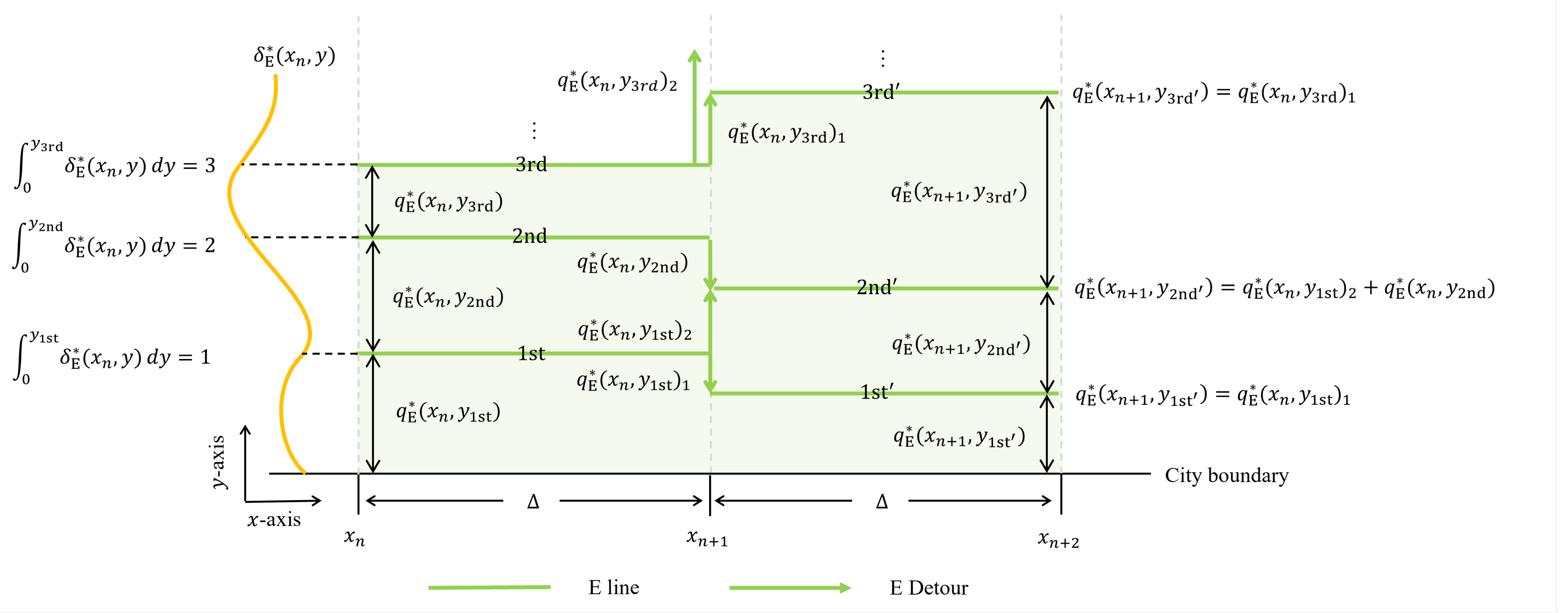}
    \caption{Illustration of the network discretization procedure.}
    \label{fig_discretization}
\end{figure}

For ease of exposition, the following procedure is described using the locations $x_n$ and $x_{n+1}$ in the E direction as an example. For line design, $\delta_i^*(x_n,y_{n'})$ is smoothed along the $\bar{i}$ direction to obtain the continuous curve $\delta_i^*(x_n,y)$, represented by the orange curve in the figure. The locations of the discrete lines are then determined by integrating $\delta_i^*(x_n,y)$ along the $\bar{i}$ direction until integer values are reached, as indicated by $y_{\rm 1st}$, $y_{\rm 2nd}$, $y_{\rm 3rd}$, $\ldots$ in the figure. Each line has a length of $\Delta$. The same design procedure applies to other locations in the E/W direction, as well as in the N/S direction.


For vehicle-flow assignment, $q_i^*(x_n,y_{n'})$ is also smoothed along the $\bar{i}$ direction to obtain $q_i^*(x_n,y)$. The vehicle flow within the interval $[y_{\rm (M-1)th},y_{\rm Mth}]$ is then assigned to the line $\rm Mth$. Specifically, the assigned flow is calculated as $\Delta\int_{y_{\rm (M-1)th}}^{y_{\rm Mth}} q_i^*(x_n,y) \diff y$. Accordingly, the headway of line $\rm Mth$ is given by the reciprocal of its assigned vehicle flow.

The vehicle flows between adjacent locations, such as from $x_n$ to $x_{n+1}$, are determined according to the flow-balance relationships among lines. As shown in Fig.~\ref{fig_discretization}, after part of the flow on line $\rm 1st$ is assigned to line $\rm 1st'$, the remaining flow, denoted by $q_{E}^*(x_n,y_{\rm 1st})_2$, is further assigned to line $\rm 2nd'$. Meanwhile, the flow on line $\rm 2nd$ is also assigned to line $\rm 2nd'$, where the flow balance is exactly satisfied. Therefore, no detoured vehicle flow is required between line $\rm 2nd$ and line $\rm 3rd'$. Following the same rule, all vehicle flows from $x_n$ to $x_{n+1}$ can be fully assigned due to the flow-conservation constraints.

After the exact locations of all discrete lines in the E/W and N/S directions are obtained, the transfer stations are determined by the intersections of lines from the two directions, consistent with the assumptions of this study.

Passenger travel routes are then computed using a shortest-path algorithm. Each passenger selects the stations closest to their origin and destination as the boarding and alighting stations, respectively, and is allowed to make at most one transfer. Finally, we verify that the vehicle capacity constraints are satisfied.

\end{APPENDICES}


\bibliographystyle{apalike}
\bibliography{refer} 


\end{document}